\newcommand{\reals}{\mathbb{R}}  
\newcommand{\PSLG}{\mathrm{PSLG}}
\newcommand{\dir}[1]{\overrightarrow{#1}} 
\newcommand{\wt}{w}
\newcommand{\grid}[1]{G_{#1}}
\newcommand{\adjgraph}[1]{\EuScript{G}_{#1}}
\newcommand{\ballgraph}[1]{\EuScript{D}_{#1}}
\newcommand{\leveledges}[1]{\EuScript{S}_{#1}} 
\newcommand{\triangulation}{\EuScript{T}}
\newcommand{\candidate}{\EuScript{A}}
\newcommand{\opt}{\EuScript{T}^*}
\newcommand{\optdiff}[1]{\EuScript{K}_{#1}}
\newcommand{\newopt}{\stackrel{\sim}{\smash{\opt_i}\rule{0pt}{1.05ex}}}
\newcommand{\newedges}[1]{\EuScript{W}_{#1}}
\newcommand{\rand}{\gamma}
\newcommand{\vertseq}{\sigma}
\newcommand{\chain}{\mathbb{C}}
\newcommand{\edges}{\EuScript{E}}
\newcommand{\cs}{\EuScript{CS}} 
\newcommand{\reflex}{\text{reflex}\xspace}
\newcommand{\convex}{\text{convex}\xspace}
\DeclareMathOperator{\fwd}{fwd}
\DeclareMathOperator{\back}{back}
\newcommand{\N}{\EuScript{N}} 
\newcommand{\bigO}[1]{\EuScript{O}(#1)}
\newtheorem{theorem}{Theorem}
\newtheorem{lemma}[theorem]{Lemma}
\newtheorem{corollary}[theorem]{Corollary}
\newtheorem{invariant}{Invariant}
\begin{document}

\begin{titlepage}

\title{A Grid-Based Approximation Algorithm for the Minimum Weight Triangulation Problem\thanks{This work is supported in part by the National Science Foundation under grant NSF-CCF 1464276. Any opinions, findings, conclusions, or recommendations expressed in this material are those of the authors and do not necessarily reflect the views of the National Science Foundation.}}
  \author{Sharath Raghvendra
      \thanks{Department of Computer Science, Virginia Tech,
      \texttt{sharathr@vt.edu}},
    Mariëtte C. Wessels
      \thanks{Department of Mathematics, Virginia Tech,
      \texttt{wmari11@vt.edu}}}
\date{}
\maketitle

\begin{abstract}
Given a set of $n$ points on a plane, in the \emph{Minimum Weight Triangulation} problem, we wish to  find a triangulation that minimizes the sum of Euclidean length of its edges. This incredibly challenging problem has been studied for more than four decades and has been only recently shown to be NP-Hard. In this paper we present a novel polynomial-time algorithm that computes a $14$-approximation of the minimum weight triangulation---a constant that is significantly smaller than what has been previously known. 

In our algorithm, we use grids to partition the edges into levels where shorter edges appear at smaller levels and edges with similar lengths appear at the same level. We then triangulate the point set incrementally by introducing edges in increasing order of their levels. We introduce the edges of any level $i+1$ in two steps. In the first step, we add edges using a variant of the well-known ring heuristic to generate a partial triangulation\  $\hat{\candidate}_i$.  In the second step, we greedily add non-intersecting level $i+1$ edges to $\hat{\candidate}_i$ in increasing order of their length and obtain a partial triangulation $\candidate_{ i+1}$. The ring heuristic is known to yield only an $\bigO{\log n}$-approximation even for a convex polygon and the greedy heuristic achieves only a $\Theta(\sqrt{n})$-approximation. Therefore, it is surprising that their combination leads to an   improved approximation ratio of $14$.

For the proof, we identify several useful properties of  $\hat{\candidate}_i$ and combine it with a new Euler characteristic based technique to show that  $\hat{\candidate}_i$ has  more edges than  $\opt_i$; here $\opt_i$ is the partial triangulation consisting of level $\le i$ edges of some minimum weight triangulation. We then use a simple greedy stays ahead proof strategy to bound the approximation ratio.    
\end{abstract}

\end{titlepage}



\section{Introduction}
\label{sec:intro}

Consider a planar point set $P \subset \reals^2$ with $|P| = n$. A triangulation $\EuScript{T}$ of $P$ is a subdivision of the interior of the convex hull of $P$ into triangles by non-intersecting straight-line segments. Alternatively, any triangulation can be viewed as a maximal planar straight-line graph ($\PSLG$) of the planar point set $P$. For any pair of points $u,v \in P$, we define the \emph{weight} of the edge $uv$ as the Euclidean distance $\|uv\|$ between $u$ and $v$.\ We define the weight $w(\EuScript{T})$ of any triangulation $\EuScript{T}$ to be the sum of the weights of all the edges in $\EuScript{T}$. The \emph{Minimum Weight Triangulation} (MWT) problem seeks to find a triangulation $\opt$ of $P$ with the smallest possible weight. Note that the MWT (also referred to as the \emph{optimal} triangulation) for a given point set is not necessarily unique. For $\alpha >1$, we define an $\alpha$-approximate minimum weight triangulation to be any triangulation $\candidate$ such that $\wt(\candidate) \le \alpha\wt(\opt)$.  In this paper, we describe a new algorithm that computes a $14$-approximate MWT  of $P$.

\subsection{Overview and Previous Work}

The problem of computing a triangulation for a given point arises naturally in applications ranging from computer graphics and cartography to finite element meshes and spatial data analysis. Different applications call for different notions of optimality, many of which has been studied and were surveyed by Bern and Eppstein \cite{BerEpp-CEG-92}. One such notion of optimality is expressed by the MWT problem. The origin of this problem dates back to 1970 in cartography where it was first considered by Düppe and Gottschalk \cite{DupGott-70} who originally proposed a greedy approach to produce a MWT. Shamos and Hoey \cite{ShamosHoey-75} conjectured that the well-known Delaunay triangulation, the dual of the Voronoi diagram, which simultaneously optimizes several objective functions might be a MWT. However, Lloyd \cite{Lloyd-77} provided examples in 1977 which show that neither the greedy nor the Delaunay triangulation is a MWT. At this point the hardness of this problem was unknown. This problem became notorious when Garey and Johnson \cite{GareyJohnson-79} included the MWT problem in their famous list of twelve major problems with unknown complexity status in 1979. It was not until 2006 when Mulzer and Rote \cite{MulRote-06} proved that the MWT problem is $NP$-hard. 

Despite the complexity of the MWT problem remaining unknown for nearly three decades, there were several efforts to design approximation algorithms. The greedy triangulation and Delaunay triangulation were shown to be a factor $\Theta(\sqrt{n})$ \cite{ManZob-79}, and a factor $\Omega(n)$ \cite{LevKrz-98, ManZob-79} approximation respectively. Plaisted and Hong \cite{PlaHong-87} described how to approximate the MWT  within a factor of $\bigO{\log{n}}$ in $\bigO{n^2 \log{n}}$ time  by partitioning the point set into empty convex polygons and then repeatedly connecting all pairs of adjacent even numbered vertices. This procedure for the convex polygon is known as the \emph{ring heuristic}. 
 Clarkson \cite{Clarkson-91} extended the ring heuristic to non-convex polygons to design an algorithm for the closely related \emph{Minimum Weight Steiner Triangulation} (MWST) problem. Levcopoulos and Krznaric \cite{LevKrz-98} showed that a variation of the greedy algorithm approximates the MWT within a very large constant. This constant was reduced by Yousefi and Young \cite{YoYo-11}, who also discuss the relationship between the MWT and integer linear programs; a relationship first noted by Dantzig \textit{et al.} \cite{Dantzig-85}. 

About two decades ago, in a major breakthrough, Arora \cite{Arora-98} introduced a shifted quadtree-based approach to compute a $(1+\varepsilon)$-approximation for the Euclidean TSP problem. Concurrent to this, but independently, Mitchell \cite{Mitch-96} also introduced a PTAS for the TSP problem. Arora's technique extended to several other geometric optimization problems. However, as noted by him, the MWT problem resists this approach. The Minimum Weight Stiener Triangulation (MWST) problem seemed amenable to Arora's technique, however, no such $(1 + \varepsilon)$-approximation algorithm has yet been found for the MWST either. Finally, Remy and Steger \cite{RemySteg-09} have discovered a quasi-polynomial approximation scheme where, for any fixed $\varepsilon$, it yields a $(1 + \varepsilon)$-approximation of the MWT in $n^{\bigO{\log^8{n}}}$ time.

A natural variant of the MWT problem is the problem of computing a triangulation that minimizes the $q$th norm of its edge costs, i.e., for an integer $q \ge 1$ compute a triangulation $\opt_q$ that minimizes the cost $\left(\sum_{uv \in \opt_q} \|uv\|^q\right)^{1/q}$. We refer to this as the $q$-MWT problem.   When $q=1$, this problem reduces to the minimum weight triangulation and when $q=\infty$, the problem reduces to the well-known minimax length triangulation problem~\cite{TanEde-93}.

\subsection{Our Approach and Results}

We present a novel polynomial time algorithm that computes a $14$-approximate MWT.  In comparison, for the minimum weight triangulation problem, the previous methods for computing a constant approximation \cite{LevKrz-98, YoYo-11} achieve an approximation ratio estimated to be higher than $3000$. We utilize a grid-based approach \cite{Eppstein-94, RemySteg-09} combined with a variant of the ring heuristic \cite{PlaHong-87,Clarkson-91} and a greedy approach \cite{LevKrz-98} to compute our approximate triangulation. 
  More specifically, we maintain a sequence of nested grids (similar to a quad-tree) $G_1,\ldots,G_k$ for some $k$. Let $G_1$ be the finest grid and $G_k$ be a single square that contains all input points. Using these grids, we partition all the $\Theta(n^2)$ edges into levels---an edge appears in level $i$ if and only if $i$ is the smallest integer such that the two end points  of this edge are in neighboring cells in $G_i$ (note that longer edges have a higher level). 

 We then triangulate the point set incrementally by introducing edges in increasing order of their levels. Our algorithm will maintain a maximal $\PSLG$ $\candidate_i$ after processing the level $i$ edges. We introduce the   level $i+1$ edges in two steps. In the first step, we add edges  and triangulate the region between  every consecutive pair of reflex chains that appear on the boundary of any non-triangulated face of $\candidate_i$. We do so by using a variant of the  ring heuristic. Let $\hat{\candidate}_i$ be the $\PSLG$ maintained at the end of this step. In the second step, we will greedily add non-intersecting level $i+1$ edges to $\hat{\candidate}_i$ in increasing order of their length to obtain a maximal $\PSLG$ $\candidate_{ i+1}$. Our algorithm returns $\candidate_k$ as the approximate triangulation. The ring heuristic is known to yield only an $\bigO{\log n}$-approximation even for a convex polygon and the greedy heuristic achieves only a $\Theta(\sqrt{n})$-approximation. Therefore, it is surprising that by combining the two heuristics, we obtain a significantly   improved $14$-approximation for this problem.

For the proof, we identify several useful properties of  $\hat{\candidate}_i$ and combine it with a new Euler characteristic based technique to show that  $\hat{\candidate}_i$ has  more edges than  $\opt_i$,  where $\opt_i$ is the restricted optimal triangulation consisting only those edges of the minimum weight triangulation which have a level of at most $i$. The approximation ratio can then be bounded by using a simple greedy stays ahead proof.

 There are three major technical contributions that assist us in the proof: 
\begin{itemize}
  \item We identify several important properties of the maximal $\PSLG$ $\candidate_i$ and the intermediate $\PSLG$  $\hat{\candidate}_i$ for each level $i$. Using these properties,  we show that the partial triangulation  $\hat{\candidate}_i$ triangulates a larger region than the restricted optimal triangulation $\opt_i$. However, this triangulated region for $\opt_i$ is not necessarily contained inside the triangulated region for $ \hat{\candidate}_i$  and this makes it difficult to compare their cardinalities. 
  \item Nevertheless, we develop a  novel technique to show that there are more  edges in $\hat{\candidate}_i$   than in $\opt_i$. This technique involves adding edges (not necessarily straight-line) to $\opt_i$ in such a way that for every non-triangulated face of $\hat{\candidate}_i$, there is a unique non-triangulated face with a greater number of edges in this augmented restricted optimal triangulation. Using Euler's formula, we can then relate the cardinality of our candidate solution and the optimal to achieve a $21$-approximation algorithm.
  \item We choose the side-length of the cells of the grid randomly and this helps to improve the expected approximation ratio to $14$.         
\end{itemize}

The analysis of ring heuristic~\cite{PlaHong-87} and the quasi-greedy algorithm~\cite{LevKrz-98} for the MWT\ problem  rely on triangle inequality of Euclidean costs.  Consequently, we cannot extend the analysis of these heuristics to the $q$-MWT problem where edge costs are $q$th powers of the Euclidean costs. Our analysis, however, does not depend on the triangle inequality and therefore easily extends to any $q$-MWT. We show that the triangulation produced by our algorithm is  a $14$-approximate $q$-MWT for every value of $q \ge 1$ including the minimum weight triangulation ($t=1)$ and the minimax length triangulation ($t=\infty$).

In the rest of the paper, we present our algorithm for the MWT problem and provide a weaker analysis of $24$- (worst-case) and $16$- (expected) approximation ratio. In Section~\ref{sec:prelim} we present the preliminary definitions  that are necessary to present our algorithm. The algorithm is described in Section~\ref{sec:algorithm}. We use our algorithmic invariants to bound the approximation ratio by a factor of $24$ (worst-case) and $16$ (expected) in Section~\ref{sec:approx-ratio}. We prove the invariants in Section~\ref{sec:invariants} and ~\ref{sec:inv2}.  In Section~\ref{sec:extension}, we will describe   an improvement of the approximation factor to $21$ (worst-case) and $14$ (expected) and also describe the extension of our analysis  to the $q$-MWT problem. We conclude in Section~\ref{sec:conclusion}.


\section{Preliminaries}
\label{sec:prelim}

Let $P \in \reals^2$ be the set of $n$ input points. For simplicity of presentation, we will assume that $P$ has a \emph{spread} of $\Delta$, i.e., the ratio of the diameter of $P$ and closest pair of points in $P$ is bounded by $\Delta$, where $\Delta$ is a power of three. We also scale and translate $P$ so that the closest pair of points in $P$ are at a distance $1$, the diameter of $P$ is bounded by $\Delta$, and all points of $P$ are enclosed inside an axis parallel $3\Delta\times 3\Delta$ square $S$ with $(0,0) $ and $(3\Delta,3\Delta)$ being the diagonally opposite corners of $S$. Note that the translation and scaling does not affect the optimal triangulation. We also assume that the points in $P$ are in general position and therefore no three points in $P$ are co-linear.  Our algorithm extends to any arbitrary point set that is not in general position and that does not have a bounded spread. However, making these assumptions simplifies the presentation of our algorithm significantly. In Section~\ref{sec:algorithm}, we will show that the running time of our algorithm is polynomial in $n$ and not dependent on the actual value of  $\Delta$.

For any pair of points, $u,v \in \reals^2$, let $\overline{uv}$ be the open straight-line segment in $\reals^2$ connecting the points $u$ and $v$ (so the endpoints $u$ and $v$ are not included in $\overline{uv}$). Consider an arbitrary graph $\mathcal{G}$ with the set $P$ as its vertex set. We denote an edge between two vertices in $\mathcal{G}$, $u, v \in P$, as $uv$. Unless otherwise noted, let $\mathcal{G}$ denote the set of edges of the graph $\mathcal{G}$, so that $|\mathcal{G}|$ denotes the number of edges in $\mathcal{G}$. The edges $uv$ and $xy$ \emph{intersect} if $\overline{uv} \cap \overline{xy }\neq \emptyset $. The graph $\mathcal{G}$ is a \emph{planar straight-line graph} ($\PSLG$) if, for any two edges $uv$ and $xy$ in $\mathcal{G}$, $uv$ and $xy$ do not intersect. For any graph $\mathcal{G}$ (not necessarily planar), a subgraph $\mathcal{M}$ is a \emph{maximal} $\PSLG$, if $\mathcal{M}$ is a $\PSLG$ and for every $uv \in \mathcal{G} \setminus \mathcal{M}$, there exists $u'v' \in \mathcal{M}$ such that $uv$ intersects $u'v'$. It is well-known that any maximal $\PSLG$ of the complete graph on $P$ is a triangulation of $P$ with $3n-3-h$ edges, where $h$ is the number of points of $P$ that appear on its convex hull. A \emph{maximum} $\PSLG$ $\mathcal{M}^* \subset \mathcal{G}$ is a maximal $\PSLG$ with the largest number of edges out of all possible maximal $\PSLG$s of $\mathcal{G}$. Note that in general, neither the maximum nor the maximal $\PSLG$ is unique.

\subsection{Grids and Adjacency Graphs}

Our algorithm is described on graphs that are induced by a sequence of nested grids constructed as follows.
First, choose a $\rand \in \reals$ uniformly at random from the open interval $(\frac{1}{3},1)$. All points in $P$ lie inside the square with diagonal corners $(0,0)$ and $(9\rand\Delta,9\rand\Delta)$. Next, define a sequence of \emph{grids} $\grid{{\log_3{9\Delta}+1}}, \ldots, \grid{0}$, where $\grid{\log_3 9\Delta+1}$ is the square $S$.  Given a grid $\grid{i+1}$, we obtain grid $\grid{i} $ by simply splitting every \emph{cell} (square) of the grid into $3\times3$ equal cells. By this construction, grid $\grid{i}$ will have $3^{\log_3 9\Delta-i+1}\times 3^{\log_3 9\Delta-i+1}$ cells each having a side-length of $\rand 3^{i-1}$. We refer to $i$ as the \emph{level} of $\grid{i}$. Without loss of generality, we may assume that, for each $\grid{i}$, every point $p \in P$ is contained in exactly one cell $C$ of $\grid{i}$. If this is not the case, then one can translate the input point set  such that no input point lies on a grid line. Clearly this does not affect the minimum weight triangulation. It is always possible to do so since $P$ consists of finitely many points. The finest grid $\grid{0}$ consists of cells with side length $\frac{\rand}{3} < \frac{1}{3}$.

For every cell $C \in \grid{i}$, we will define the \emph{neighboring cells} $N(C)$ of $C$ to be the set of cells in $\grid{i}$ that share their boundary with $C$. Our convention is to include $C$ in $N(C)$ since $C$ shares a boundary with itself (thus $N(C)$ contains at most nine cells). Let $\N(C)$ represent the geometric region formed by taking the union of all cells in the neighborhood of $C$, i.e., $\N(C) =\bigcup_{C\in N(C)} C$. For notational convenience, for any point $p \in P$ contained in the cell $C$, define $N(p) = N(C)$ and $\N(p) = \N(C)$. We refer to two cells $C$ and $C'$ of a grid $\grid{i}$ as \emph{diagonal neighbors} if their boundaries share exactly one vertex but do not share any edge. If $C$ and $C'$ share exactly one edge, they are referred to as \emph{orthogonal neighbors}. Any cell can have at most four diagonal neighbors, and four orthogonal neighbors.

For each grid $\grid{i}$, the \emph{adjacency graph} $\adjgraph{i}$ connects every pair of distinct points that are in neighboring cells with an edge. Therefore, the vertex set of the adjacency graph $\adjgraph{i}$ is $P$ itself, and for any two points $u, v \in P$, such that $u \neq v$, $uv$ is an edge in $\adjgraph{i}$ if and only if $u \in \N(v)$. It is not difficult to see that $\adjgraph{0} = \emptyset$, $\adjgraph{{\log_3{9\Delta}+1}}$ is the complete graph on $n$ vertices, and $\adjgraph{i} \subseteq \adjgraph{i + 1}$ for all $0 \leq i < \log_3{9\Delta} + 1$. As we go from level $i-1$ to level $i$ in the adjacency graph, several new edges may be added. We refer to these edges as \emph{level $i$ edges}, denoted $\leveledges{i} = \adjgraph{i} \setminus \adjgraph{i-1}$. Note that for any edge $ab \in \leveledges{i}$, it follows from the definition of a neighborhood that $\rand 3^{i-2} \leq \|ab\| \leq 6\sqrt{2}(\rand 3^{i-2})$.

\subsection{Non-triangulated faces}
A straight-line embedding of a planar graph $\mathcal{G}$, defined on the point set $P$, subdivides the plane into regions. Each region is referred to as a \emph{face} of $\mathcal{G}$. More formally, each face is a maximal connected region in $\reals^2\setminus (P \cup (\bigcup_{uv \in \mathcal{G}}\overline{uv}))$. Since a face is a subset of $\reals^2$, the \emph{boundary} of the face can be defined in the standard way (a point $x$ is on the boundary of a face $f$ if, for every $\varepsilon > 0$, the Euclidean ball centered at $x$ with radius $\varepsilon$ contains both a point in $f$ and a point not in $f$). Therefore, faces are open subsets of $\reals^2$ and their boundaries consist of points in $P$ and line segments (which corresponds to edges) of the graph $\mathcal{G}$. We refer to a bounded face $f$ as a \emph{triangulated face} if its boundary is a single connected component with exactly three edges and three vertices of $\mathcal{G}$. Any other face (bounded or unbounded) is \emph{non-triangulated}. The boundary of a non-triangulated face may consist of one or more isolated vertices, connected cycles, or trees. For example, in Figure~\ref{subfig:repeated-vertex} the unbounded face $f$ consists of two polygons and an edge on its boundary.

Let $f$ be a non-triangulated face of $\mathcal{G}$. Our algorithm applies a grid-based modified ring heuristic on a clockwise ordering of the vertices for each connected component of the boundary of $f$. We generate this clockwise ordering as a sequence $\vertseq(f)$ of the vertices that appear on the boundary of $f$ by walking along the boundary so that the face $f$ appears on the right. Let $\dir{uv}$ denote walking along any edge $uv$ from $u$ to $v$. During this construction, any edge $\dir{uv}$ has been explored if  we have already walked from $u$ to $v$ along the edge $uv$. If $uv$ has the same face on both its sides, then we will explore $\dir{uv}$ and $\dir{vu}$ as we generate the sequence for that face, which will result in a single vertex appearing multiple times in the boundary vertex sequence as part of distinct directed ``explorations". Otherwise, the edge is explored only once for this sequence (with an orientation so that $f$ appears on the right).
  We construct the vertex sequence $\vertseq(f)$ as follows: start the sequence with an arbitrary vertex $v_0$ on the boundary. If the connected component is an isolated vertex, then $\vertseq(f) =\langle v_0\rangle$ is the vertex sequence. Otherwise, let $x$ be any vertex adjacent to $v_0$ that also appears on the boundary of $f$. If $f$ appears on the right as we walk along any edge $\dir{v_0x}$, set $v_1 \leftarrow x$, the second vertex in the sequence. To determine the rest of the sequence, suppose we have generated the first $i$ vertices $\vertseq(f)=\langle v_0,\ldots,v_{i-1}\rangle$ of the sequence. To generate the $(i+1)^{\text{th}}$ vertex  $v_i$ in the sequence, we choose the potential next vertex as follows:
\begin{itemize}
  \item if $v_{i-1}$ is a degree $1$ vertex in $\mathcal{G}$, let $x$ be the only adjacent vertex;
  \item if $v_{i-1}$ had degree greater than $1$, consider the ray $r$ passing through $v_{i-2}$ and $v_{i-1}$. Let $x$ be the first vertex adjacent to $v_{i-1}$ that $r$ intersects when $r$ is rotated about $v_{i-1}$ in the anti-clockwise direction.
\end{itemize}
If $\dir{v_{i-1}x}$ has not been explored, we walk along $\overline{v_{i-1}x}$ from $v_{i-1}$ to $x$, set $v_{i}\leftarrow x$, and add $v_i$ to $\vertseq(f)$. We repeat the same process to find the next vertex in the sequence. Otherwise, if $\dir{v_{i-1}x}$ has already been explored, we return $\vertseq(f)$ as the clockwise ordering of the vertices. By repeating this process for every connected component of the boundary, we generate a separate clockwise ordering of vertices for each connected component of the boundary of $f$ (for instance, two boundary vertex sequences are generated for $f_0$ in Figure~\ref{subfig:face-boundary}).
Note that any vertex may appear multiple times in $\vertseq(f)$. In Figure~\ref{subfig:repeated-vertex}, $x_1$ appears multiple times in $\vertseq(f) = \langle v_0, v_1, \ldots, v_{10}\rangle$ but labeled as $v_1$, $v_4$, and $v_{10}$ depending on where in the sequence the vertex was encountered. In each of these instances, $x_1$ appears between two different vertices in the clockwise ordering of the boundary, and therefore generates a distinct element in $\vertseq(f)$. For $\vertseq(f)=\langle v_0,\ldots, v_{m-1}\rangle$, and for any two integers $0\le i, k\le m-1$, let $v_{i+k}$ denote the vertex $v_j$ where $j= i+k~(\mathrm{mod}~m)$. Note that one can also define a boundary vertex sequence consisting of exactly three vertices for any triangulated face of $\mathcal{G}$.
\begin{figure}
  \centering
    \begin{subfigure}{0.35\textwidth}
      \centering
      \includegraphics[width=\textwidth]{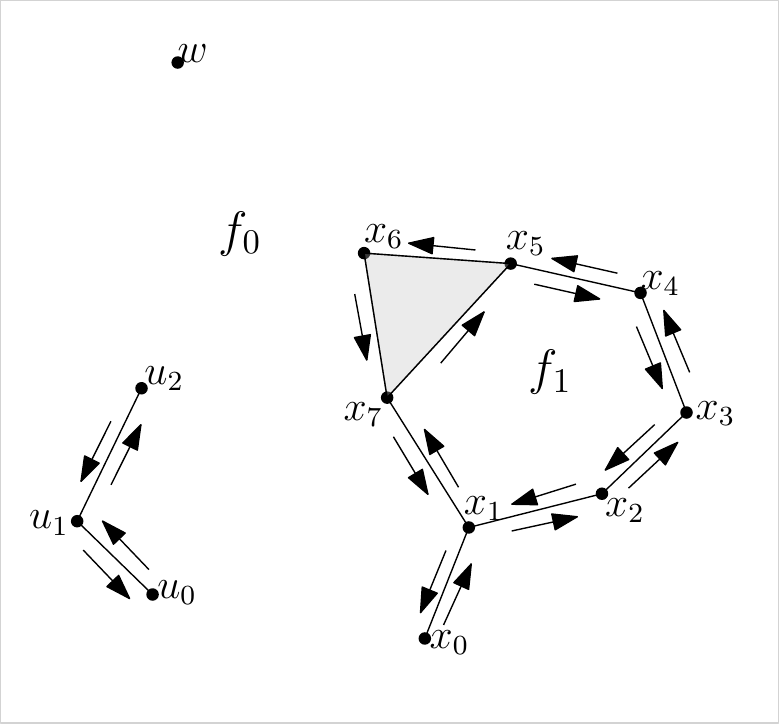}
      \caption{}
      \label{subfig:face-boundary}
    \end{subfigure}
    \hspace{8mm}
    \begin{subfigure}{0.30\textwidth}
      \centering
      \includegraphics[width=\textwidth]{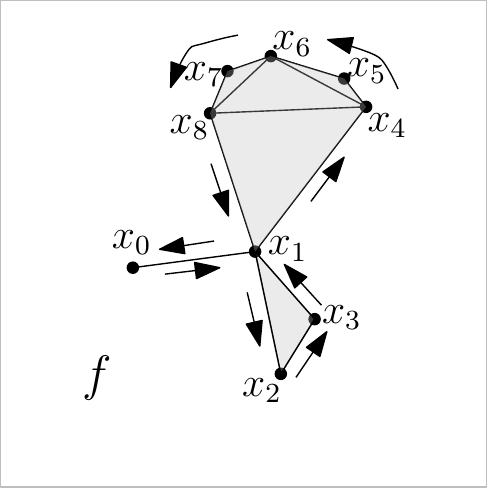}
      \caption{}
      \label{subfig:repeated-vertex}
    \end{subfigure}%
  \caption[Boundary vertex sequence $\vertseq(f)$.]{(a) $f_0$ has a disconnected boundary, and will therefore have more than one vertex sequence $\vertseq_u(f_0) = \langle u_0, u_1, u_2, u_1 \rangle$, and $\vertseq_v(f_0) = \langle x_0, x_1, x_2, x_3, x_4, x_5, x_6, x_7, x_1\rangle$. The other non-triangulated face $f_1$ has only one vertex sequence $\vertseq(f_1) = \langle x_1, x_7, x_5, x_4, x_3, x_2 \rangle$. (b) the vertex sequence constructed for $f$ is $\vertseq(f) = \langle x_0, x_1, x_2, x_3, x_1, x_4, x_5, x_6, x_7, x_8, x_1 \rangle$. Note that shaded faces are triangulated.}\label{fig:face-boundary}
\end{figure}

Suppose $\vertseq(f) = \langle v_0, \ldots, v_{m-1} \rangle$ is a vertex sequence for a non-triangulated face $f$ of $\mathcal{G}$, and let $v_i$ be any vertex in the sequence. We say that $v_i$ is a \emph{\convex vertex} if, as you walk from $v_{i-1}$ through $v_i$ to $v_{i+1}$, we make a right turn at $v_i$. Otherwise, if we make a left turn, we refer to $v_i$ as a \emph{\reflex vertex} (since we assume no three points are co-linear, every vertex is either \reflex or \convex).
For any $0\le i < j \le m-1$, we refer to the contiguous subsequence $\langle v_i,\ldots, v_j\rangle$ as a \emph{chain} from $v_i$ to $v_j$ and denote it by $\chain(v_i,v_j)$ with $|\chain(v_i,v_j)|=j-i+1$ denoting the number of vertices in the chain. The \emph{interior} of the chain is the set of all vertices in the chain except the first and last vertices $v_i$ and $v_j$. We refer to $v_i$ and $v_j$ as the boundary of the chain.  A \emph{\reflex chain} is a chain consisting only of \reflex vertices.  We refer to the \reflex chain from $v_i$ to $v_j$ as a \emph{maximal} reflex chain if $v_{i-1}$ and $v_{j+1}$ are not \reflex.

Let $\mathcal{G}$ be a $\PSLG$, and $f$ be a non-triangulated face of $\mathcal{G}$ with the boundary vertex sequence $\vertseq(f) = \langle v_0, \ldots, v_i, \ldots, v_j, \ldots, v_{m-1} \rangle$. The vertex $v_j$ is \emph{visible} to a vertex $v_i$ in $\vertseq(f)$ if $\overline{v_iv_j} \subset f$ and adding the edge $v_iv_j$ to $\mathcal{G}$ creates two faces $f'$ and $f''$ with $\vertseq(f') = \langle v_i, v_{i+1}\ldots, v_{j-1}, v_j\rangle$ and $\vertseq(f'')=\langle v_i, v_j, v_{j+1},\ldots, v_{i-1}\rangle$. Note that the definition of visibility is symmetric, i.e., for $x,y \in \sigma(f)$, $x$ is visible to $y$ if and only if $y$ is visible to $x$. We say that $v_i$ and $v_j$ are $\delta$-visible if $v_i$ is visible to $v_j$ and the length of $\overline{v_iv_j}$ is at most $\delta$.
We extend this definition and define the visibility between  a vertex and any point on a line  segment. For any $v_i \in \vertseq(f)$,  we say that a point $q \in \overline{v_jv_{j+1}} $ is visible to a vertex $v_i$ if $v_i$ is visible to $q$ in the vertex sequence $\vertseq(f) = \langle v_0,\ldots,v_{i},\ldots, v_j, q, v_{j+1},\ldots, v_{m-1}\rangle $. We say that a vertex $v_i$ is visible to the edge $v_jv_{j+1}$ if there exists a point $q \in \overline{v_jv_{j+1}}$ with the property that $v_i$ is visible to $q$. In addition, if the length of $\overline{v_iq}$ is at most $\delta$, we say that $v_jv_{j+1}$ is $\delta$-visible to $v_i$.

Let $\vertseq(f) = \langle v_0, v_1, \ldots v_{m-1} \rangle$ be the boundary vertex sequence for some non-triangulated face in a $\PSLG$. For any vertex $v_j \in \vertseq(f)$, let $k$ be the smallest integer such that $ k > j$ and $v_{k(\mathrm{mod}~m)}$ is convex. We define $v_k$ to be the \emph{forward convex vertex} of $v_j$. Also, define $v_{k+1}$ to be the \emph{forward support vertex} of $v_j$ denoted by $\fwd(v_j, \vertseq(f))$.  For any reflex vertex $v_j$  in $\vertseq(f)$, let $k$ be the largest integer such that $ k < j$ and $v_{k (\mathrm{mod}~m)}$ is convex. We define $v_k$ to be the \emph{backward convex vertex} of $v_j$. Then we define $v_{k-1}$ to be the \emph{backward support vertex} of $v_j$ denoted by $\back(v_j, \vertseq(f))$. Note that the backward convex vertex and backward support vertex is defined only for reflex vertices of $\vertseq(f)$.

\begin{figure}
  \centering
  \begin{subfigure}{0.35\textwidth}
    \centering
    \includegraphics[width=0.80\textwidth]{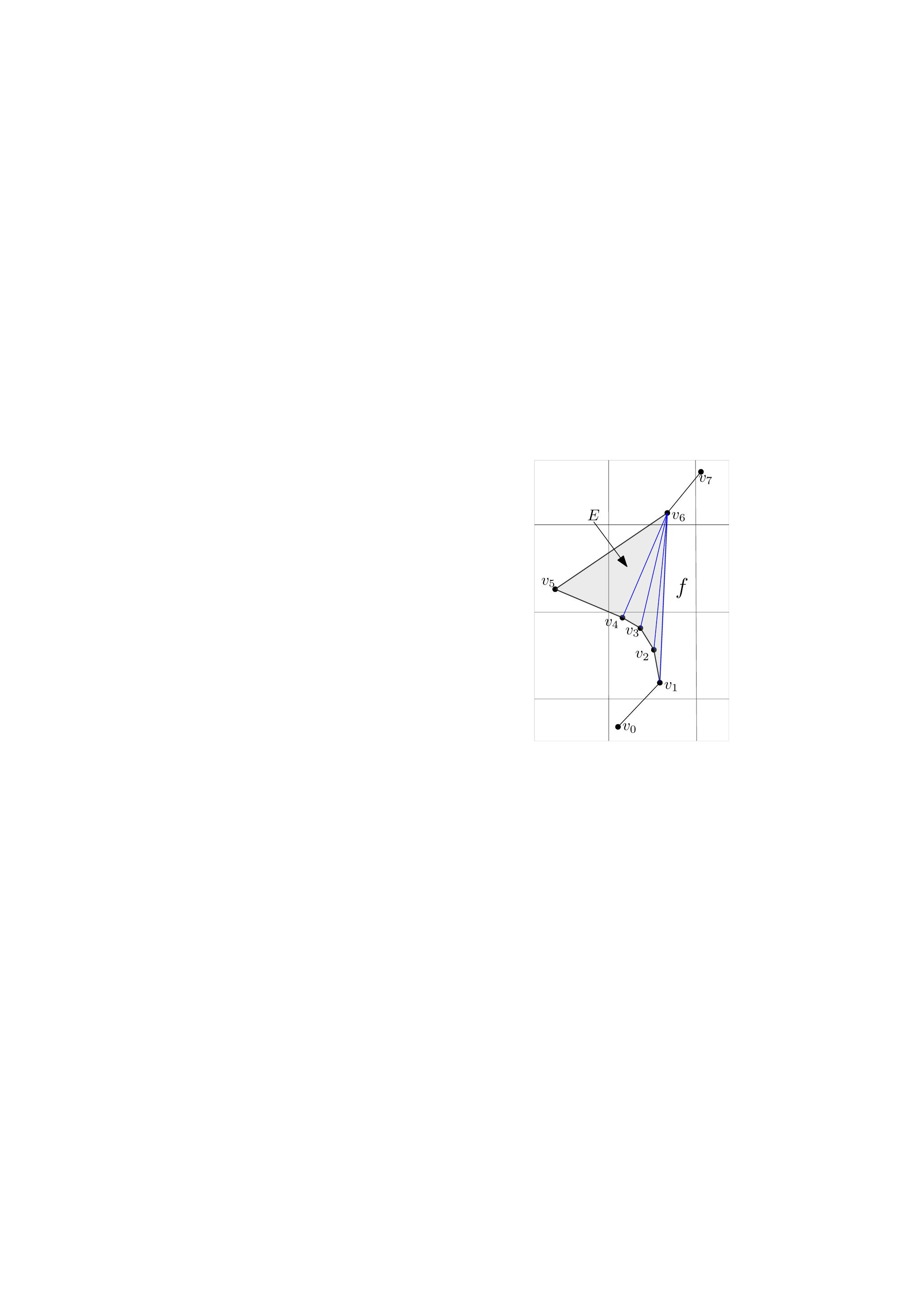}
    \caption{}\label{subfig:one-chain}
  \end{subfigure}\hspace{8mm}
  \begin{subfigure}{0.30\textwidth}
    \centering
    \includegraphics[width=0.80\textwidth]{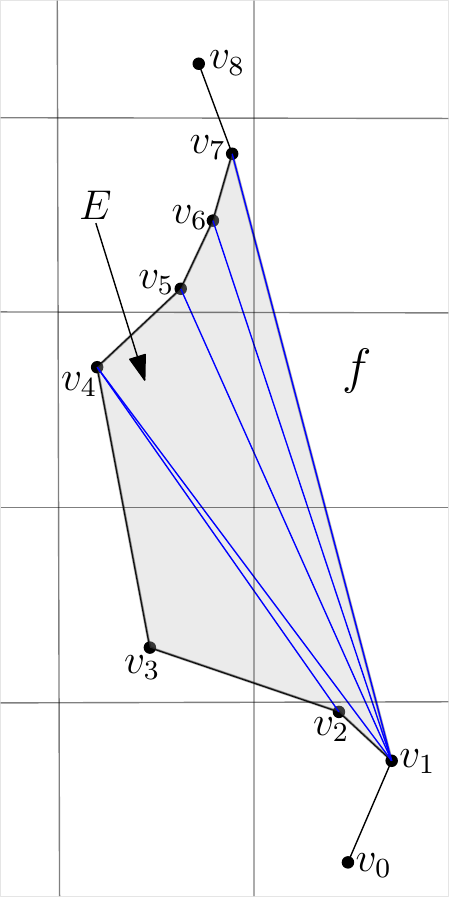}
    \caption{}\label{subfig:two-chain}
  \end{subfigure}
  \caption[Type $1$-chain and type $2$-chain.]{(a) $\chain(v_1, v_6)$ is a $1$-chain, and (b) $\chain(v_1, v_7)$ is a $2$-chain, that can be triangulated with blue edges.}\label{fig:one-two-chains}
\end{figure}
\paragraph{Non-triangulated face and the grid.} We will show that in the first part of the algorithm, while executing the modified ring heuristic on any non-triangulated face $f$ of a maximal $\PSLG$ $\candidate_i$, the algorithm generates two particular types of chains. We refer to these as \emph{type $1$-chains} and \emph{type $2$-chains}. Let $\vertseq(f)=\langle v_0, v_2, \ldots, v_{m-1}\rangle$ be the boundary vertex sequence of $f$ that is being processed.
  A chain from $v_i$ to $v_k$ is a type \emph{$1$-chain}, provided (i) it has exactly one \convex vertex $v_j$ in its interior, i.e.,  $i < j < k$, (ii) every vertex participating in the chain is contained in $\N(v_j)$, (iii) every vertex from $v_{j+1}$ to $v_k$ is visible to $v_{j-1}$ and symmetrically, every vertex from $v_i$ to $v_{j-1}$ is visible to $v_{j+1}$ in $\vertseq(f)$.
  A chain from $v_i$ to $v_k$ is a type \emph{$2$-chain} provided (i) it has exactly two consecutive convex vertices, $v_j$ and $v_{j+1}$ such that $i+1 < j+1 < k$, and (ii) the chain from $v_i$ to $v_{j+1}$ is a type $1$-chain and the chain from $v_j$ to $v_k$ is a type $1$-chain (see Figure~\ref{fig:one-two-chains}). For brevity, we refer to these as $1$- and $2$-chains.

Let $\chain(v_i, v_j)$ be a $1$-chain or a $2$-chain. Define $E(v_i, v_j)$ to be the interior of the region bounded by $\chain(v_i,v_j)$ and the segment $\overline{v_iv_j}$ in $\vertseq(f)$ (Figure~\ref{fig:one-two-chains}). By this definition, the chain $\chain(v_i,v_j)$ and the edge $\overline{v_iv_j}$ form the boundary of $E(v_i,v_j)$, however, this boundary is not included in the region $E(v_i,v_j)$. The effectiveness of the algorithm partially rests on the fact that $E(v_i,v_j)$ can be triangulated in a straight-forward fashion, provided it does not have any other input points in its interior. In Figure~\ref{subfig:one-chain}, $\chain(v_1, v_6)$ is a $1$-chain with $v_5$ the only convex vertex in the chain, and the region $E = E(v_1, v_6)$ can easily be triangulated. In Figure~\ref{subfig:two-chain}, $\chain(v_1, v_7)$ is a $2$-chain with $v_3$ and $v_4$ the two consecutive convex vertices, and the region $E = E(v_1, v_7)$ can also be triangulated as shown.

In the following section, we present our algorithm to compute an approximate Minimum Weight triangulation.


\section{Algorithm}
\label{sec:algorithm}

The algorithm iteratively constructs a candidate maximal $\PSLG$ $\candidate_i$ which is a subgraph of $\adjgraph{i}$ for each level  $i$. Each $\candidate_i$ can be seen as a partial triangulation of $P$. Initialize $\candidate_0 \leftarrow \emptyset$. We construct $\candidate_{i+1}$ from $\candidate_i$ in two phases. In the first phase, we process every face $f$ of $\candidate_i$ in the clockwise order, compute certain $1$-chains and $2$-chains, and triangulate their regions by adding some of the edges of $\leveledges{i+1}$ which results in an intermediate $\PSLG$ $\hat{\candidate_i}$. A chain generated by the algorithm starting at the vertex $v_k$ is denoted by $\chain(v_k)$, omitting the last vertex in the chain.  In the second phase, we process all the edges of $\leveledges{i+1}$  in a greedy fashion and construct a maximal $\PSLG$ $\candidate_{i+1} \subset \adjgraph{i+1}$. Next, we describe the details of the construction $\candidate_{i+1}$ from $\candidate_i$.

\paragraph{Phase 1:}
Initialize $\hat{\candidate}_i \leftarrow \candidate_i$. We describe our algorithm for each non-triangulated face $f$ of $\candidate_i$. 
If the boundary of $f$ contains more than one connected component, we repeat the algorithm for the  vertex sequence of each connected component of $f$. Let $\vertseq(f)=\langle v_0,\ldots,v_{m-1}\rangle$ be any such vertex sequence. 
We begin by choosing a start vertex, and then proceed to describe how each vertex is processed.

\textit{Choosing a start vertex.} Choose the starting vertex $v_j$ to be any backward support vertex in $\vertseq(f)$, i.e., $v_j = \back(v', \vertseq(f))$ for some $v' \in \vertseq(f)$. If there is no such vertex then let $v_j = v_0$. 

\textit{Processing $\vertseq(f)$.} Starting from $v_j$, we will \emph{visit} all the vertices as they appear in $\vertseq(f)$. If an edge incident on vertex $v$ is added, then $v$ is \emph{processed}.
Initially $k \leftarrow 0$ and $f' \leftarrow f$. As we visit the vertices, we add new edges, each of which splits $f$ into a triangle and a smaller non-triangulated face $f'$. We will dynamically maintain the vertex sequence  $\vertseq(f')$.

We process $v_{j+k}$ by repeating (1) and (2) until $k > m-1$.
\begin{enumerate}  
  \item For $v_{j+k},$ let $v_{l-1}$ be its forward convex vertex and let $v_{l}$ be its forward support vertex in $\vertseq(f)$. Mark $v_{j+k}$ as visited. If $v_{l}$ is visible to $v_{j+k}$ with respect to the face $f$ and 
$v_{l-1} \in \N(v_{j+k})$ then set the chain $\chain(v_{j+k})\leftarrow\chain(v_{j+k},v_l)$ (Figure~\ref{subfig:forward-chain}). Else set $\chain(v_{j+k}) = \chain(v_{j+k}, v_{j+k+1})$ (the  chain with only two vertices).

If $ v_{l-1}$ is not visited and $|\chain(v_{j+k})| > 2$, then  
\begin{enumerate}
  \item If $v_{l+1}$ is reflex, let $v_{q+1}$ be its backward convex vertex and $v_{q}$ be its backward support vertex. If $v_{q+1} \in \N(v_{l+1})$ and $v_{q}$ is visible to $v_{l+1}$  with respect to the face $f$, then  
  \begin{itemize}
    \item  Let $\chain(v_{l+1}, v_r)$ be the maximal reflex chain that starts at $v_{l+1}$.  Scan along this chain and identify the last vertex $v_s$ for which $v_{q+1} \in \N(v_s)$ and $v_{q}$ is visible to $v_s$ in $f$. Set $\chain(v_{j+k}) \leftarrow \chain(v_{j+k},v_s)$ (Figure~\ref{subfig:step1a}).
  \end{itemize}
  \item Now add an edge from $v_l$ to every vertex in $\chain(v_{j+k},v_{l-2})$.  If $\chain(v_{j+k})=\chain(v_{j+k},v_s)$, then we add edges from $v_{j+k}$ to every vertex in $\chain(v_{l+1},v_s)$. Let $\edges(v_{j+k})$ be the set of edges added. Set $k \leftarrow |\edges(v_{j+k})| + 1$. Mark every vertex in the chain $\chain(v_{j+k})$ as visited and update $f'$ by removing all vertices in the interior of $\chain(v_{j+k})$ from $\vertseq(f')$ (see Figure~\ref{fig:phase1}).
\end{enumerate}
Otherwise if $v_{l-1}$ is visited and $|\chain(v_{j+k})| > 2$, then
\begin{enumerate}
\item[(c)] Let $v_{j'}$ be the vertex appearing immediately after $v_j$ in $\vertseq(f')$. Set the chain $\chain(v_{j+k}) \leftarrow \chain(v_{j+k}, v_{j'})$ (Figure~\ref{subfig:start-vertex}). Add an edge from $v_{j'}$ to every vertex in $\chain(v_{j+k}, v_{l-2})$. Mark every vertex in the chain $\chain(v_{j+k})$ as visited and update $f'$ by removing all the vertices in the interior of $\chain(v_{j+k})$ from $\vertseq(f')$. Set $k \leftarrow m$.
\end{enumerate}
  
  \item If no edges are added in step (1), then set $k \leftarrow k+1$. 
\end{enumerate}

\begin{figure}
  \centering
  \begin{subfigure}{0.3\textwidth}
    \centering
    \includegraphics[width=0.8\textwidth]{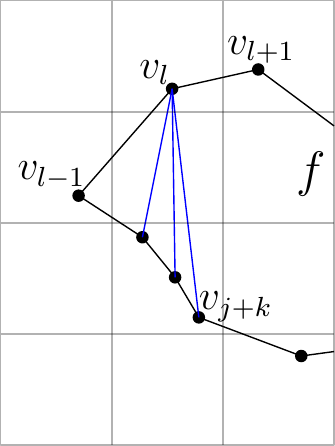}
    \caption{}
    \label{subfig:forward-chain}
  \end{subfigure}
  \begin{subfigure}{0.25\textwidth}
    \centering
    \includegraphics[width=0.85\textwidth]{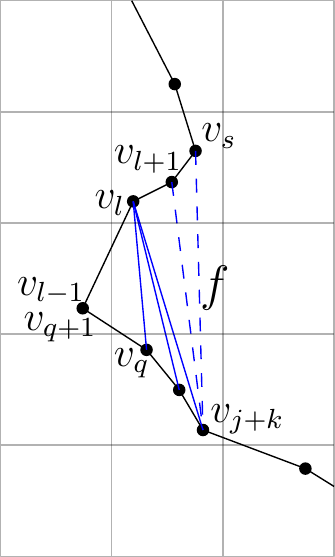}
    \caption{}
    \label{subfig:step1a}
  \end{subfigure}
  \begin{subfigure}{0.3\textwidth}
    \centering
    \includegraphics[width=0.8\textwidth]{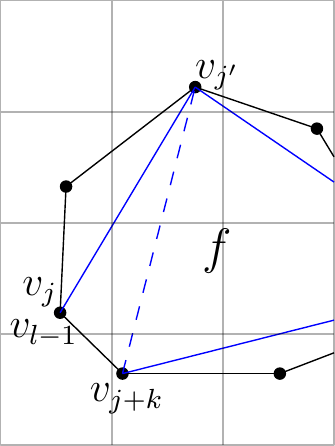}
    \caption{}
    \label{subfig:start-vertex}
  \end{subfigure}
  \caption[Edges added in Phase $1$ of the algorithm.]{Edges added in Phase $1$ (in blue). (a) $\chain(v_{j+k}) = \chain(v_{j+k}, v_l)$ and $v_{l+1}$ is not reflex. (b) $v_{l+1}$ is reflex, so step 1(a) is executed and $\chain(v_{j+k}) = \chain(v_{j+k}, v_s)$. (c) $v_{l-1}$ has already been visited and $v_{l-1} = v_j$, so $\chain(v_{j+k}) = \chain(v_{j+k}, v_{j'})$.}\label{fig:phase1}
\end{figure}

\paragraph{Phase 2:}
Initialize $\candidate_{i+1} \leftarrow \hat{\candidate}_i$. For each edge $e$ in $\leveledges{i+1}$, process them in increasing order of their Euclidean lengths. If $e$  has not yet been added to $\candidate_{i+1}$ and if $\candidate_{i+1} \cup e$ is still a $\PSLG$, add $e$ to $\candidate_{i+1}$. Once all edges of $\leveledges{i+1}$ are processed, the resulting graph $\candidate_{i+1}$  is a maximal $\PSLG$ of $\adjgraph{i+1}$.

These two phases are repeated for each level $i$, until $\candidate_{\log_3{9\Delta}+1}$, a maximal $\PSLG$ on the complete graph of the input set (and hence a triangulation), is computed, which yields our approximate solution. Note that the number of non-empty levels  for the edges can not exceed $O(n^2)$ and for each level, the processing time is polynomial. Therefore, the execution time of our algorithm is a polynomial independent of $\Delta$.  

  While processing a vertex $v_t$, suppose the algorithm executes step 1(c). Note that this case only arises when the forward convex vertex of $v_t$ ($v_{l-1}$ for that step) is either the start vertex $v_j$ or the following vertex $v_{j+1}$. At the end of step 1(c), we set $k = m$ and the algorithm completes executing Phase 1 for the face $f$. Alternatively, if the forward support vertex $v_l$ of $v_t$ is the start vertex $v_j$, then Phase 1 for the face $f$ will terminate without executing 1(c). Call $v_t$, the last vertex to be processed in this sequence, the end vertex. The following observations follow from the description of the algorithm: 

\begin{description}
  \item[(O1)] For any pair of vertices $v,v'$ processed by the algorithm, the chains  $\chain(v)$ and $\chain(v')$ are interior disjoint except for the pair corresponding to the start vertex $\chain(v_j)$ and the end vertex $\chain(v_t)$. For $v_j$ and $v_t$, either $\chain(v_j)$ is contained inside $\chain(v_t)$ (if step 1(c) was executed) or $\chain(v_j)$ and $\chain(v_t)$ are interior disjoint ($v_j$ was the forward support of $v_t$).
  \item[(O2)] The last vertex $v_{k'}$ on any maximal reflex chain $\chain(v_{k},v_{k'})$ will have an edge to its forward support $v_l$ provided the forward convex vertex $v_{l-1} \in \N(v_{k'})$ and the vertex $v_l$ is visible to $v_{k'}$ with respect to the face $f$
\end{description}

\section{Approximation Ratio}
\label{sec:approx-ratio}

In this section, we begin by showing that the algorithm presented in Section~\ref{sec:algorithm} produces, in the worst case, a $24$-approximation of the MWT. We will also show that the expected approximation ratio of our algorithm is $16$.
Our algorithm maintains two invariants. To state the invariants, for every level $i$, we first introduce the \emph{disc graph} $\ballgraph{i}$, defined as follows. Let $P$ be the vertex set of $\ballgraph{i}$, and given two vertices $u, v \in P$, $uv \in \ballgraph{i}$ if and only if $\|uv\| \le \frac{\rand3^{i-1}}{\sqrt{2}}$. In other words, the point $v$ is contained in a disc that is centered at $u$ with a radius of $\frac{\rand3^{i-1}}{\sqrt{2}}$. Clearly $\ballgraph{i} \subseteq \adjgraph{i}$ for all $i$.
Furthermore, let $\opt$ be the optimal triangulation of $P$, and let $\opt_i$ be the edges of the optimal triangulation that are in the disc graph $\ballgraph{i}$, i.e., $\opt_i = \opt \cap \ballgraph{i}$. 

With these definitions, we can state the invariants  maintained by the algorithm  for every level $i \in \{0, \ldots, \log_3{9\Delta}+1\}$.

\begin{invariant}
\label{inv:criticallength}
$\hat{\candidate}_i$ is a $\PSLG$ and every edge $uv \in \hat{\candidate}_i \setminus \candidate_i$ has length $\|uv\| \leq 4\sqrt{2} \cdot\rand3^{i-1}$.
\end{invariant}

\begin{invariant}
\label{inv:cardinality}
$|\hat{\candidate}_i| \geq |\opt_{i}|$.
\end{invariant}

Note that the length bound in Invariant~\ref{inv:criticallength} for edges added in Phase 1 of the algorithm implies that all of the edges added are in $\leveledges{i+1}$. Assuming both invariants hold, we will bound the approximation ratio of our algorithm.

\begin{theorem}[Approximation Ratio]
\label{thm:approx}
Suppose Invariants~\ref{inv:criticallength} and~\ref{inv:cardinality}
hold. Then the candidate solution produced by the algorithm in
Section~\ref{sec:algorithm}, $\candidate_{{\log_3{9\Delta}+1}}$, is an
$\alpha$-approximate MWT, where $\alpha \le 24$. Furthermore, the
expected value of $\alpha$ is at most $16$.
\end{theorem}
\begin{proof}
Suppose that both the invariants hold. Let $\candidate =
\candidate_{{\log_3{9\Delta}+1}}$ be the triangulation computed by our
algorithm.  Any triangulation for a given point set $P$ has $m=3n-3-h$
edges, where $h$ is the number of points on the convex hull of $P$.
Therefore, $|\candidate| = |\opt| = m$. Let $\tau =\langle a_1,\ldots,
a_m\rangle$ be the ordering of edges of $\candidate$ based on when
they were added by the algorithm, i.e., edge $a_i$ appears before edge
$a_j$ in $\tau$ if $a_i$ was added before $a_j$ by the algorithm.
Note that in the sequence $\tau$, the edges of level $i$ appear before
all level $i+1$ edges. Let $\optdiff{i} = \opt_i \setminus
\opt_{i-1}$. We also order the edges of the optimal triangulation
$\opt$ into another sequence $\tau^*=\langle t_1,\ldots, t_m\rangle$.
In this sequence, for any $i<j$, the edges of $\optdiff{i}$ appear
before the edges of $\optdiff{j}$. Let $t_j$ be the $j^{\text{th}}$ edge in $\tau^*$, and $a_j$ be the $j^{\text{th}}$ edge of $\tau$.
To prove this theorem, we will show that $\alpha_j=\frac{\wt(a_j)}{\wt(t_j)} \le 24$ and $\mathbb{E}[\alpha_j]\le 16$.
Given this, we obtain the bound on $\alpha$ as follows:
\begin{eqnarray}
\alpha = \frac{\wt(\candidate)}{\wt(\opt)} &=&
\sum_{j=1}^m\frac{\wt(t_j)}{\wt(\opt)}\cdot \frac{\wt(a_j)}{\wt(t_j)}
= \sum_{j=1}^m\beta_j\cdot \frac{\wt(a_j)}{\wt(t_j)},
\label{eq:approx}
\end{eqnarray}
where $\beta_j = \frac{\wt(t_j)}{\wt(\opt)}$. Since $\beta_j > 0$ and
$\sum_{j=1}^m \beta_j =1$,  $\alpha$ is a weighted average of all
$\alpha_j$ values. Therefore, we can bound $\alpha$ and
$\mathbb{E}[\alpha]$ by providing an upper bound for $\alpha_j$ and
$\mathbb{E}[\alpha_j]$.

  From here on, we will bound $\alpha_j$ by $24$ and
$\mathbb{E}[\alpha_j]$ by $16$ to prove the theorem. Let $k$ be an
integer such that $t_j \in \optdiff{k}$. Therefore, the cost of $t_j$,
$w(t_j) \ge \frac{\rand3^{k-2}}{\sqrt{2}}$.
By Invariant~\ref{inv:cardinality}, it follows that $j \leq
|\hat{\candidate}_{k}|$, so $a_j \in \hat{\candidate}_k$. By
Invariant~\ref{inv:criticallength}, $\wt(a_j) \leq \rand4\sqrt{2}
\cdot 3^{k-1}$.
Therefore,
\[ \dfrac{\wt(a_j)}{\wt(t_j)} \leq
   \dfrac{\rand 4\sqrt{2} \cdot 3^{k-1}}{\frac{\rand3^{k-2}}{\sqrt{2}}} = 24 \]
Thus, $\candidate{}$ is an $24$-approximation of $\opt$.

Recollect that $\rand$ is chosen uniformly at random from the
interval $(\frac{1}{3},1)$.
$\alpha_j$ may be expressed as a function of $\rand$ as follows. Let $i$ be the largest integer such that $\wt(t_j) \ge \frac{\rand 3^{i-1}}{\sqrt{2}}$. Let $t_j$ be the edge $pq$ in $\opt$. By the choice of $i$,
$\frac{\rand}{3}\le \frac{\sqrt{2}\cdot\|pq\|}{3^{i}} \le \rand$.
Since $\rand \in (\frac{1}{3},1)$, either $t_j \in \optdiff{i+1}$ or $t_j \in \optdiff{i+2}$.
Hence there are two possible cases:

\begin{itemize}
\item If $\rand \in(\frac{1}{3},\frac{\sqrt{2}\|pq\|}{3^{i}})$, then
$t_j \in \optdiff{i+2}$. From Invariant 2, $\wt(a_j$)  will be at
most $4\sqrt{2}(\rand 3^{i+1})$. Therefore $\alpha_j$ is
$\frac{4\sqrt{2}\rand 3^{i+1}}{\|pq\|}$.

\item If $\rand \in [\frac{\sqrt{2}\|pq\|}{3^{j}},1)$ then $t_j \in
\optdiff{i+1}$. From Invariant 2, $\wt(a_j$)  will be at most
$4\sqrt{2}(\rand 3^{i})$. Therefore $\alpha_j$ is
$\frac{4\sqrt{2}\rand 3^{i}}{\|pq\|}$
\end{itemize}

Let $x = \frac{\sqrt{2}\|pq\|}{3^{i}}$. The expected value of
$\alpha_j$ can be expressed as:
\begin{eqnarray*}
\mathbb{E}[\alpha_{j}] &=&  \frac{3}{2}\int_{1/3}^x
\left(\frac{4\sqrt{2}\rand 3^{i+1}}{\|pq\|}\right) \,\mathrm{d}\gamma
+ \frac{3}{2}\int_x^1 \left(\frac{4\sqrt{2}\rand 3^{i}}{\|pq\|}
\right)\,\mathrm{d} \rand\\
&=&  \frac{2\sqrt{2}3^{i+1}}{\|pq\|}\left(x^2+\frac{1}{3}\right) =
\frac{4\sqrt{2}\|pq\|}{3^{i-1}} + \frac{2\sqrt{2}3^i}{\|pq\|} \le 16
\end{eqnarray*}
The last inequality holds for any $p,q$ such that $\|pq\|\in
\left[\frac{3^{i-1}}{\sqrt{2}},\frac{3^i}{\sqrt{2}}\right]$, and hence for every $t_j$ being considered.
\end{proof}



\section{Algorithmic Invariants}
\label{sec:invariants}

To complete the proof of Theorem~\ref{thm:approx}, we prove Invariant~\ref{inv:criticallength} in Section~\ref{subsec:proof-inv-1} and Invariant~\ref{inv:cardinality} in Section~\ref{sec:inv2}, which were assumed in the proof of the approximation ratio. It will be useful to first consider properties of a maximal $\PSLG$ with respect to $\adjgraph{i}$ which arise due to the underlying grid structure. These properties can be used to prove Invariant~\ref{inv:criticallength}.

\subsection{Properties of a Maximal $\PSLG$ in $\adjgraph{i}$}
\label{subsec:grid-properties}
In order to state the properties satisfied by any maximal $\PSLG$ of $\adjgraph{i}$, we introduce the following definitions. For any two points $u$ and $v$, let $C_u$ and $C_v$ be the cells of $\grid{i}$ that contain $u$ and $v$. Let $\Gamma_i^0(u) = \{C_u\}$ and let $\Gamma_i^j(u) = \bigcup_{C \in \Gamma_i^{j-1}} N(C)$. We say that $u$ and $v$ are \emph{$k$ cells apart} if $C_v\in\Gamma_i^{k-1}$. It is not difficult to see that if $u$ and $v$ are $k$ cells apart,  then  $\|uv\|\le k\sqrt{2}\cdot \rand3^{i-1}$ and the worst case is achieved when $u$ and $v$ are diagonally opposite corners of the square of side length $k\rand3^{i-1}$ containing $k^2$ cells of $\grid{i}$.

Next, we state the properties. The proofs of each of these properties are deferred until Section~\ref{subsec:proof-grid-props}.
  Let $\candidate$ be a maximal $\PSLG$ with respect to $\adjgraph{i}$. Let $f$ be a non-triangulated face of $\candidate$ with a boundary vertex sequence $\vertseq(f) =  \langle v_0, \ldots, v_{m-1} \rangle$.

\begin{description}
  \item[(P1)]  Suppose $v_j \in \vertseq(f)$ is convex. Then $v_{j-1}$, $v_j$, and $v_{j+1}$ are in three distinct cells $C_{v_{j-1}}$, $C_{v_j}$, and $C_{v_{j+1}}$ of $\grid{i}$, respectively, and $C_{v_{j-1}} \notin N(C_{v_{j+1}})$.
  \item[(P2)] Suppose $v_{j} \in \vertseq(f)$ and $v_{k}v_{k+1}$ is any edge on the boundary of $f$ such that $v_k \in \N(v_{j})$ (resp. $v_{k+1} \in \N(v_j)$) and $v_{k}v_{k+1}$  is visible to $v_{j}$ in $\vertseq(f)$. Then,
  \begin{itemize}
    \item[(i)] the chain $\chain$ from $v_j$ to $v_{k+1}$ (resp. chain $\tilde{\chain}$ from $v_k$ to $v_j$) in $\vertseq(f)$ is a $1$-chain, and,
    \item[(ii)] $v_{k+1}$ is a forward (resp. $v_k$ is a backward) support vertex for  every vertex from $v_j$ to $v_{k-1}$ (resp. from $v_{k+2} $ to $v_j$).
  \end{itemize}
  \item[(P3)] For any chain $\chain(v, y)$ from $v$ to $y$ in $\vertseq(f)$,   
  \begin{itemize}
    \item[(i)] if $\chain(v, y)$ is a $1$-chain with $v'$ as the only convex vertex in its interior, then the region $E = E(v, y)$ is contained in $\N(v')$, i.e., $E \subset \N(v')$, and $E$ contains no input points of $P$. 
    \item[(ii)] if the chain $\chain(u, y)$ is a $2$-chain with $v$ and $x$ as the two convex vertices, then the region $E = E(u, y)$ is such that $(E\cap(\N(v)\cup \N(x)))\cap P $ contains no points of $P$. In two cases, $E\not\subset(\N(v)\cup \N(x))$ (See Figure~\ref{fig:p3-four-cell-exception}) and may contain  points of $P$. In all other cases, $E$ contains  no  points of $P$.
  \end{itemize}
  \item[(P4)] For any vertex $v \in \vertseq(f)$, if an edge $xy$ on the boundary of $f$ is $\delta$-visible for $\delta=\frac{\rand3^{i-1}}{\sqrt{2}}$, then exactly one of $x$ and $y$ are in $\N(v)$ and the other is not.
\end{description}
 
\subsection{Proof of Invariant 1}
\label{subsec:proof-inv-1}

Using (P1)--(P4), we now establish Invariant~\ref{inv:criticallength}. To assist with the presentation, we re-use the notation from the description of the algorithm. Recollect that $v_j$ is the start vertex, $v_t$ is the end vertex  and while processing a vertex $v_{j+k}$, $v_{l-1}$ and $v_l$ are its forward convex vertex and forward support vertex, respectively. If $v_{l+1}$ is reflex, then $v_{q+1}$ and $v_{q}$ are the backward convex vertex and the backward support vertex, respectively. For every edge $uv$ added in Phase 1, let $\chain(u,v)$ in $\vertseq(f)$ be the chain from $u$ to $v$. We refer to this kind of chain $\chain(u,v)$ as a \emph{triangulated chain}. If $uv$ appears on the boundary of a non-triangulated face of $\hat{\candidate}_i$ after Phase 1 has been executed, then $\chain(u, v)$ is a \emph{maximal} triangulated chain. If the chain $\chain(u, v)$ was generated when processing $u$, we may omit the the other endpoint of the chain, and simply denote the chain by $\chain(u)$. 

Invariant~\ref{inv:criticallength} asserts two claims, (i) any edge $uv \in \hat{\candidate}_i \setminus \candidate_i$ has length $\|uv\| \leq 4 \sqrt{2}\cdot \rand 3^{i-1}$, and (ii) that the intermediate graph $\hat{\candidate}_i$ produced by Phase $1$ is a $\PSLG$. Note that to prove the first claim (i), it is sufficient to show that the endpoint of any edge $uv$ that was added by the algorithm is at most $4$ cells apart. We first show that any two vertices in a $1$-chain are at most $3$ cells apart and any two vertices in a $2$-chain are at most four cells apart (Lemma~\ref{lem:2-or-3-cells-apart}). We then show that any triangulated chain generated by the algorithm is either a $1$-chain or a $2$-chain (Lemma~\ref{lem:triangulated-chains-1-2-chains}). This proves (i). We show (ii) that planarity is never violated by edges added in Phase 1 in Lemma~\ref{lem:triangulated-chains-empty} and~\ref{lem:planarity}. It follows that the algorithm triangulates the region $E$ of each triangulated chain and therefore that the algorithm maintains Invariant~\ref{inv:criticallength}.
\begin{lemma}
\label{lem:2-or-3-cells-apart}
If $u,v$ are vertices in a $1$-chain, then $u$ and $v$ are at most $3$ cells apart. If $u,v$ are vertices in a $2$-chain, then $u$ and $v$ are at most $4$ cells apart.
\end{lemma}
\begin{proof}
Suppose $u$ and $v$ are two vertices of a $1$-chain $\mathcal{C}$. Let $C_u$ and $C_v$ be the cells of $\grid{i}$ containing $u$ and $v$ respectively. Let $v'$ be the only convex vertex in the interior of $\mathcal{C}$. By definition of $1$-chain, $u,v \in \N(v')$, therefore $u \in \N(v')$ and $v' \in \N(v)$. So $C_u \in \Gamma^2_i(v)$, therefore, by definition $u$ and $v$ are at most $3$ cells apart.
 
Suppose $u$ and $v$ are vertices in a $2$-chain with $v'$ and $v''$ as the two convex vertices. It follows from the definition of a $2$-chain that $u, v \in \N(v')\cup \N(v'')$ and $v' \in \N(v'')$. Therefore, any two cells of $\N(v')\cup \N(v'')$ are at most $4$-cells apart and therefore $u$ and $v$ are at most $4$ cells apart.
\end{proof}

  \noindent 
The following lemma establishes a useful property of $1$- and $2$-chains.
\begin{lemma}
\label{lem:chain-intersection}
For any $1$- or $2$-chain $\chain(x,y)$, consider a line segment $\overline{uv}$ that enters and exits the region $E(x,y)$.  Then, $\overline{uv}$ also intersects  a vertex or an edge in the chain $\chain(x,y)$.   
\end{lemma}
\begin{proof}
Note that the region $E(x,y)$ represents the interior of the connected region that is bounded by the chain $\chain(x,y) $ on one side and the edge from $x$ to $y$ on the other.
The segment $\overline{uv}$ cannot enter and exit the region $E(x,y)$ through the same boundary edge of $E(x,y)$. Therefore,  either the entry or exit point lies on the edge of $xy$ and the other point lies on a vertex or an edge of the chain $\chain(x,y)$. 
\end{proof}

During Phase 1, a vertex $v_{j+k} \in \vertseq(f)$ is processed and we add edges if and only if Step 1 of the algorithm is executed. Otherwise, if Step 2 is executed,  we do not add any edge and generate a trivial chain $\chain(v_{j+k})=\chain(v_{j+k},v_{j+k+1})$. Next we show that when Step 1 of the algorithm is executed, the  chain $\chain(v_{j+k})$\ generated by the algorithm is either a $1$- or a $2$-chain. 

\begin{lemma}
\label{lem:triangulated-chains-1-2-chains}
For every vertex $v_{j+k}$ processed by the algorithm, the  chain $\chain(v_{j+k})$ generated  is either a $1$-chain or a $2$-chain. Furthermore, the triangulated chain $\chain(v_j)$ for the start vertex is a $1$-chain.
\end{lemma}
\begin{proof}
Note that for any vertex to be processed, Step 1 of the algorithm must be executed. First, we show that the  chain $\chain(v_j)$ for the start vertex $v_j$ is a $1$-chain. The start vertex $v_j$ is chosen to be a backward support vertex, unless none exists. Suppose $v_j$ is a backward support vertex for some vertex $v'$, so $v_j = \back(v', \vertseq(f))$. By definition, the vertex that appears after $v_j$, $v_{j+1}$, is a convex vertex (the backward convex vertex of $v'$) and $v_{j+2}$ is a reflex vertex. However, it  also follows from the definition that $v_{j+1} = v_{l-1}$ is the forward convex vertex of $v_j$ and $v_{j+2} = v_l = \fwd(v_j, \vertseq(f))$ is the forward support vertex of $v_j$. From the precondition of step 1, $v_{j+1} \in \N(v_j)$ and $v_{j+2}$ is visible to $v_j$ therefore $\chain(v_j) = \chain(v_j, v_{j+2})$.
 The point $v_j$ visible to $v_{j+2}$ and therefore the edge $v_{j+1}v_{j+2}$ is also visible so by (P2), the chain $\chain(v_j, v_{j+2})$ is a $1$-chain. If step 1(a) is not executed, then we are done. Suppose step 1(a) is executed, so $\chain(v_j) = \chain(v_j, v_s)$, where $v_s$ is the last reflex vertex on the reflex chain in which $v_{j+2}$ appears, such that $v_j$ is visible to $v_s$ and $v_s \in \N(v_{j+1})$. Since all vertices from $v_{j+3}$ to $v_s$ are reflex and from (P2), it follows that $\chain(v_j)$ is a $1$-chain. Therefore, in either case $\chain(v_j)$, the chain generated for the start vertex, must be a $1$-chain.
  Now suppose there are no reflex vertices in $\vertseq(f)$, then there is no backward support vertex and we choose  $v_j = v_0$  . In this case, $v_{j+1}$ is the forward convex vertex and $v_{j+2}$ is the forward support that is also a convex vertex. It follows that $v_{j+2}$ is visible to $v_j$, $\chain(v_j)$ has only one convex vertex in its interior, and $v_j, v_{j+2} \in \N(v_{j+1})$. Thus $\chain(v_j)$ is a $1$-chain.   
 
  Next, we prove that the lemma holds for the end vertex $v_t$, the last vertex that is processed in $\vertseq(f)$. Either step 1(c) is executed when processing $v_t$ or it is not. Suppose step 1(c) is executed when processing $v_t$ and the chain $\chain(v_t)$ is generated; since this is a special case we prove this separately. If step 1(c) is being executed, the forward convex vertex $v_{l-1}$ has already been visited when  $v_t$ is  processed. From precondition of Step 1, $v_{l-1} \in \N(v_t)$ and $v_{l}$ is visible to $v_t$. In this case, the maximal triangulated chain is $\chain(v_t,v_{j'})$ (where $v_{j'}$ is the vertex adjacent to $v_j$ in $f'$). Note that $v_{j+1}$ is convex (due to the choice of starting vertex), therefore, the forward convex vertex (which we know from the precondition of step 1(c) is already visited) is either $v_{l-1} = v_j$ or $v_{l-1} = v_{j+1}$.
  If $v_{l-1} = v_j$, then $v_j$ and $v_{j+1}$ are both convex. It follows from (P2) that $\chain(v_t,v_{j+1})$ is a $1$-chain, and by the above argument $\chain(v_j)=\chain(v_{j},v_{j'})$ is also a $1$-chain. Therefore, by definition $\chain(v_t)=\chain(v_t,v_{j'})$ is a $2$-chain.
  If $v_{j+1} = v_{l-1}$, then $v_j$ is reflex. We have already shown that $\chain(v_j) = \chain(v_j, v_{j'})$ must be a $1$-chain, therefore $\chain(v_j)$ contains only one convex vertex in its interior, and thus the chain $\chain(v_t,v_{j'})$ also contains only one convex vertex, namely $v_{j+1}$, in its interior. Since $v_t \in \N(v_{j+1})$ it follows that $\chain(v_j, v_{j'})$ is a $1$-chain. The remaining case for $v_t$, when step 1(c) is not executed, uses the same argument as for any other $v_{j+k}$; the argument for this case is provided next.
 
Finally, we prove that for any other vertex $v_{j+k}$, which is neither the start vertex $v_j$ nor   any other vertex for which step 1(c) is executed, the chain $\chain(v_{j+k})$ is either a $1$-chain or $2$-chain.  Step 1(c) is not executed. Therefore, step 1(a)-(b) must be executed. From the precondition to execute step 1, $v_{l-1} \in \N(v_{j+k})$ and $v_l$ is visible to $v_{j+k}$.   This implies that the edge $v_{l-1}v_l$ is visible to $v_{j+k}$ and from (P2) we can conclude that the chain $\chain(v_{j+k},v_l)$ generated by the algorithm with $v_{l-1}$ as the only convex vertex is a $1$-chain.  Suppose the precondition for step 1(a) is met, then $v_{q+1} \in \N(v_{l+1})$ and $v_q$ is visible to $v_{l+1}$. Therefore,  we can use the same argument to show that $\chain(v_q, v_{l+1})$ is a $1$-chain. Since all vertices from $v_{l+1}$ to $v_s$ are reflex, such that $v_{q+1} \in \N(v_{l+1})$ and $v_q$ is visible, by extension it follows that $\chain(v_{q}, v_s)$ is also a $1$-chain. Now there are only two possibilities for the backward convex vertex $v_{q+1}$ of the vertex $v_{l+1}$: either (i) $v_{q+1} = v_{l-1}$ or (ii) $v_{q+1} = v_{l}$. In case (i), the chain $\chain(v_{j+k})=\chain(v_{j+k},v_s)$ has only one convex vertex, $v_{q+1}=v_{l-1}$ and by construction every vertex of this chain is inside $\N(v_{q+1})$, so that $\chain(v_{j+k})$ is a $1$-chain. Otherwise, in case (ii), $v_l$ and $v_{l-1}$ are adjacent convex vertices; it then follows that $\chain(v_{j+k}) = \chain(v_{j+k},v_s)$ is a $2$-chain. This completes the proof.
\end{proof}

Lemma~\ref{lem:triangulated-chains-1-2-chains} implies that the algorithm only generates $1$-chains and $2$-chains in Phase 1, therefore by applying Lemma~\ref{lem:2-or-3-cells-apart} it follows that any edge added by the algorithm in Phase 1 has length at most $4 \sqrt{2}\cdot \rand 3^{i-1}$. Note that a useful observation in light of this proof is that the algorithm only generates $2$-chains if it executes step 1(a) or 1(c) of the algorithm. Furthermore, for a triangulated chain $\chain(u, v)$, since $\chain(u, v)$ is either a $1$-chain or a $2$-chain, it follows that the edges added for this chain $\edges(u, v)$ will not intersect with each other. In order to establish that $\hat{\candidate}_i$ is planar, we show that edges added for two distinct chains do not intersect, nor do they intersect with any edges in $\candidate_i$.

\begin{figure}
  \centering
  \begin{subfigure}{0.35\textwidth}
    \centering
    \includegraphics[width=\textwidth]{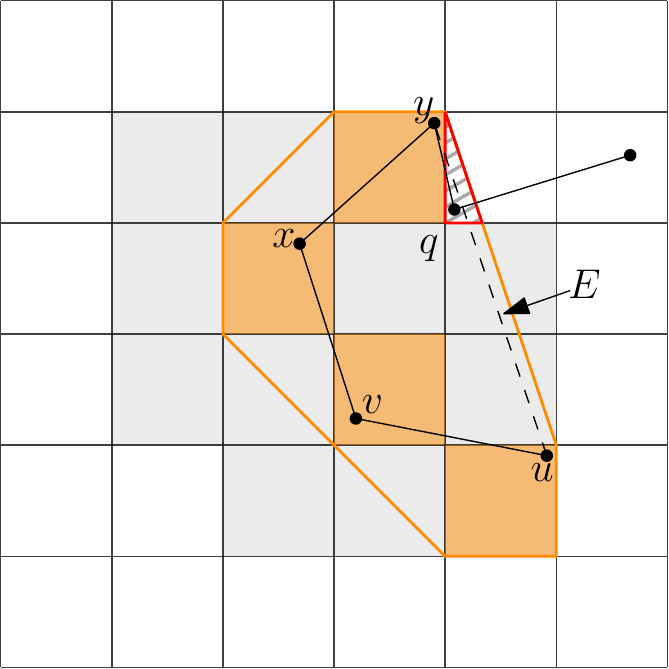}
    \caption{}
  \end{subfigure}\hspace{8mm}
  \begin{subfigure}{0.35\textwidth}
    \centering
    \includegraphics[width=\textwidth]{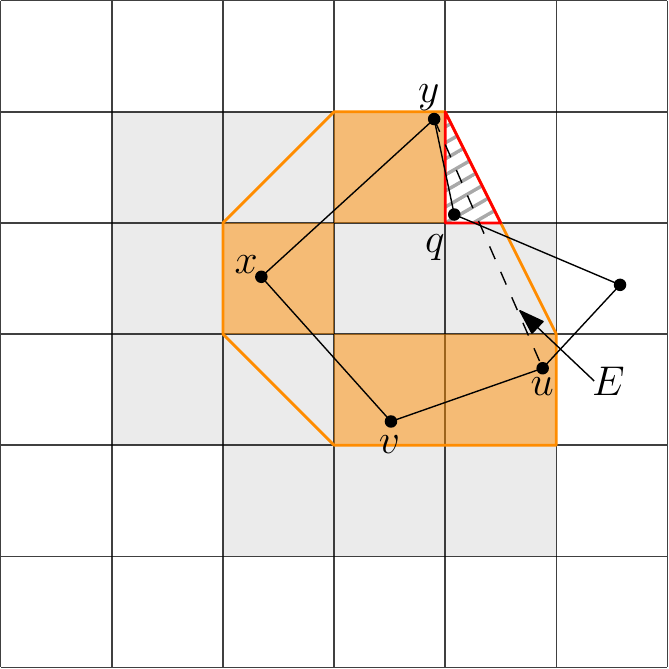}
    \caption{}
  \end{subfigure}
  
  \begin{subfigure}{0.35\textwidth}
    \centering
    \includegraphics[width=\textwidth]{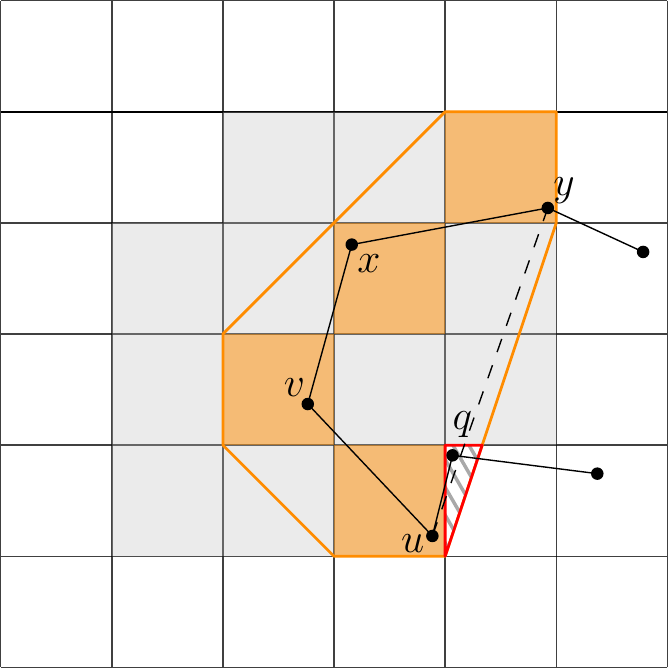}
    \caption{}
  \end{subfigure}\hspace{8mm}
  \begin{subfigure}{0.35\textwidth}
    \centering
    \includegraphics[width=\textwidth]{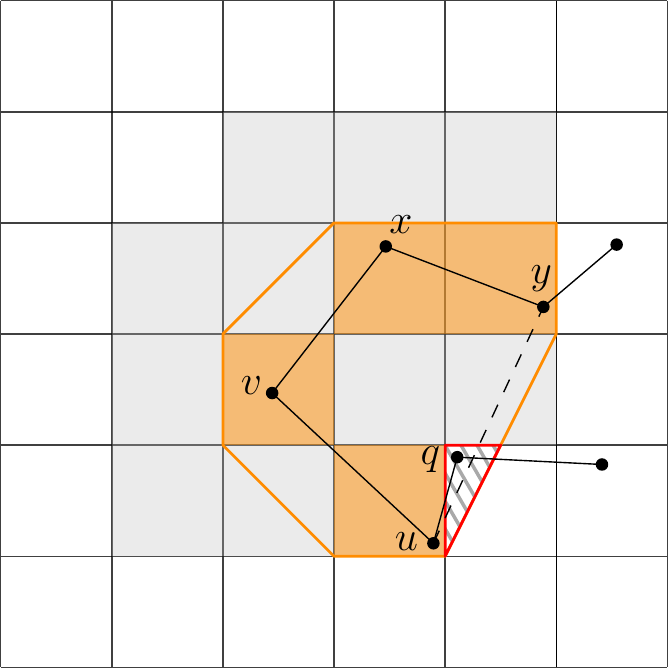}
    \caption{}
  \end{subfigure}
  \caption[Exceptional four-cell configurations for (P3).]{Exceptions in (P3): two possible configurations depicted by (a) and (b) of four cells such that $E(u, y)$ is not contained in $\N(v) \cup \N(x)$ ($N(v) \cup N(x)$ is depicted in grey). The vertex $q \in E(u, y)$. (c) and (d) are mirror images of (a) and (b), respectively.}\label{fig:p3-four-cell-exception}
\end{figure}
\begin{lemma}
\label{lem:triangulated-chains-empty}
Let $\chain(u,y)$ be a triangulated $2$-chain generated by the algorithm while processing the vertex $u$ and let $v$ and $x$ as the two convex vertices in its interior. Then, the region $E(u, y)$ bounded by $\chain(u,y)$ and the line segment $\overline{uy}$ contains no  points of $P$.
\end{lemma}
\begin{proof}
For the sake of contradiction, let us assume that there is a vertex $u \in \vertseq(f)$ such that when $u$ is processed, the algorithm generates a maximal triangulated $2$-chain $\chain(u,y)$ and the region $E(u, y)$ contains at least one input point $q \in \vertseq(f)$. The chain $\chain(u,y)$ appears on the boundary of $E(u,y)$ and so $q$ cannot be a vertex of $\chain(u,y)$.  By Lemma~\ref{lem:triangulated-chains-1-2-chains},  $\chain(v_j)$ is a $1$-chain, therefore $u$ cannot be the start vertex $v_j$. Since $\chain(u,y)$ is a $2$-chain, it follows by Property (P3) that $\N(v) \cup \N(x)$ does not contain  $q$ and $\overline{uy} \not\subset \N(v)\cup\N(x)$ only if $\chain(u, y)$ is in one of the two possible configurations. Figure~\ref{fig:p3-four-cell-exception} depicts these two possible configurations through four representative examples (clockwise and counter-clockwise cases for each of the two configurations). 

For cases (a) and (b), we choose $q$ to be the first vertex that appears after $y$ in the sequence $\vertseq(f)$ and that is contained in the region  $q\in E(u,y)$. We will first show that $q$ is a reflex vertex. Suppose $q$ is convex, from (P1)  the neighbors of $q$ in $\vertseq(f)$ are not in the same cell as $q$ and from (P3) they are also not in $E(u,y)\cap( \N(v) \cup \N(x))$. Since the cell of $q$ together with $\N(v) \cup \N(x)$ together cover $E(u,y)$, the neighbors of $q$ will lie outside $E(u,y)$ and both these edges to the neighbors intersect the edge $uy$ which would make $q$ a reflex vertex. Therefore, we can assume $q $ to be a reflex vertex.   Since $q$ is a reflex vertex and by our choice of $q$ to be the first vertex that appears after $y$ in the sequence $\vertseq(f)$, the edge $yx$ is visible to $q$ and by construction $y \in \N(q)$. From (P2),  the chain $\chain(q,x)$ is a $1$-chain with $y$ as the only convex vertex in its interior.    Only step 1(a) and step 1(c) of the algorithm generate a $2$-chain. The last vertex of any  $2$-chain generated by step 1(a) or step 1(c) of the algorithm will be a reflex vertex. Since $y$ is a convex vertex, while processing $u$ the algorithm will never generate a $2$-chain $\chain(u,y)$ of  the two configurations depicted by cases (a) and (b). \ 

For cases (c) and (d), let $q$ be the last vertex that appears before $u$ in the sequence $\vertseq(f)$. We can use an argument similar to cases (a) and (b) and show that  $q$ is a reflex vertex, and the chain $\chain(q,v)$ is a $1$-chain with $u$ as the only convex vertex in its interior. Let $q'$ be the vertex that appears before $u$ in this chain. Clearly, $q'$ is the last reflex vertex in this chain.\   From  (O2),  the algorithm would have added an edge from $q'$ to $v$ and would not have processed $u$. Hence, the algorithm will never generate a $2$-chain $\chain(u,y)$ of the two configuration as depicted by cases (c) and (d). Therefore $\chain(u,y)$ cannot be either of the two configurations.

\end{proof}

\begin{lemma}
\label{lem:planarity}
For any $i$, $\hat{\candidate}_i$ is a $\PSLG$, after Phase $1$ of the algorithm has been completed.
\end{lemma}
\begin{proof}
We will show that while processing a non-triangulated face $f$,  the edges added do not intersect with each other or intersect with any edge of $\candidate_i$. Given any two maximal triangulated chains $\chain(v_k)$ and $\chain(v_{k'})$ from (O1), we know that these chains are interior disjoint. If $\chain(v_k)$ (resp. $\chain(v_{k'})$) is a $1$-chain, then by (P3), the region $E(v_k)$ (resp. $E(v_{k'})$) does not contain any  points of $P$. If $\chain(v_k)$ (resp. $\chain(v_{k'})$) is a $2$-chain, then from Lemma~\ref{lem:triangulated-chains-empty},  the region $E(v_k)$ (resp. $E(v_{k'})$) does not contain any points. Therefore, the regions $E(v_k)$ and $E(v_{k'})$ are   disjoint regions and no edges of $\candidate_i$  can intersect   these regions. Since the edges in $\edges(v_k) $  triangulate the region $E(v_k),$ they cannot intersect with the edges in $\edges(v_{k'})$ or with edges already in $\candidate_i$. It follows that $\hat{\candidate}_i$ is planar.    
\end{proof}

This completes the proof of Invariant~\ref{inv:criticallength}. In the remainder of this section we show that after Phase 1, the intermediate $\PSLG$ $\hat{\candidate}_i$ has a particular property that plays a critical role in proving Invariant~\ref{inv:cardinality}.

\subsection{$\delta$-Visibility in $\hat{\candidate}_i$}
\label{subsec:delta-visiblility-prop}

In the following Lemma we prove a critical property used in the proof of Invariant~\ref{inv:cardinality}, together with some consequences of this property.

\begin{lemma}
\label{lem:delta-visibility} 
Let $f$ be a non-triangulated face in $\candidate_i$ with vertex sequence $\vertseq(f)$. Let $\vertseq(f') = \langle v_{j_0}, \ldots, v_{j_p}\rangle$ be the resulting vertex sequence after Phase $1$ of the algorithm, and let $\delta = \frac{\rand 3^{i-1}}{\sqrt{2}}$.

Suppose $v \in \vertseq(f)$, and let $xy$ be an edge that is $\delta$-visible to $v$.  Then,
\begin{enumerate}
\item if $v$ is in the interior of the chain $\chain(v_{j_r}, v_{j_{r+1}})$, then it follows that either $\{x,y,v\} \subseteq \chain(v_{j_{r-1}},v_{j_r+1})$, or $ \{x,y,v\} \subseteq \chain(v_{j_r},v_{j_{r+2}})$,
\item if $v$ is on the boundary of the chain $\chain(v_{j_r},v_{j_{r+1}})$ and without loss of generality let $v=v_{j_r}$. Then it follows that $ \{x,y,v\}\subseteq \chain(v_{j_{r}},v_{j_{r+1}})$ or $\{x,y,v\} \subseteq \chain(v_{j_{r-1}}v_{j_r})$ 
\end{enumerate}
\end{lemma}
\begin{proof}

 Suppose $v \in \vertseq(f)$ and $xy$ is an edge on the boundary of $f$, so $x$ and $y$ are consecutive vertices in $\vertseq(f)$. This lemma claims that $v, x$ and $y$ either belong to the same maximal triangulated chain, or adjacent maximal triangulated chains (here adjacent means the chains share exactly one endpoint). While processing a vertex $v'$ the chain $\chain(v')$ generated by the algorithm is  contained inside a unique maximal triangulated chain. Therefore, it suffices if we show that $v,x$ and $y$ either lie in the same chain $\chain(v')$ or are in two chains $\chain(v')$ and $\chain(v'')$ where $v'$ and $v''$ are vertices that are processed consecutively by the algorithm.

Note that $x$ and $y$ must belong to the same triangulated chain. This follows from the fact that $x$ and $y$ are adjacent in $\vertseq(f)$ so there are only three possibilities: (i) $xy$ also appears on the boundary of $f'$, (ii) $x$ and $y$ are both in the interior of the same triangulated chain, or (iii) $x$ is an endpoint and $y$ is in the interior of the same chain (or \textit{vice verca}). It is clear that $x$ and $y$ still belong to the same chain in all three cases.
 Without loss of generality, assume one of the following two cases.  (a) as one walks from $v$ in $\vertseq(f)$, one encounters $x$ before $y$  or  (b) as one walks  along $\vertseq(f)$, we encounter $y$ followed by $x$ and then $v$.  In both these cases since $xy$ is $\delta$-visible from $v$, from (P4), $x \in \N(v)$. 

In  case (a),  the edge $xy$ is visible to $v$ and $x \in \N(v)$. From (P2), the chain $\chain(v,y)$ is a $1$-chain with $x$ as the only convex vertex in its interior. The algorithm may or may not execute step 1 for the vertex $v$. 

Now, suppose step 1  is executed for  $v$, since $\chain(v,y)$ is a $1$-chain, the precondition of step 1 for $v$ is met and $v$ is processed. By construction, the chain $\chain(v)$ will contain its forward convex vertex $x$ and its forward support vertex $y.$ Therefore $x$ and $y$ will belong to the same maximal triangulated chain as $v$ (This corresponds to case (2) in the lemma statement).   

Suppose Step 1 is not executed for  $v$ (corresponds to case (1) of the lemma statement) , then there is some vertex $v'$ that is processed by the algorithm where the chain $\chain( v')$ contains  $v$ in its interior. We can assume that $\chain(v')$ does not contain both $x$ and $y$, since otherwise, this lemma holds trivially. So,  $v$ is in the interior of the chain $\chain(v')$, and both $x$ and $y$ are not in the chain $\chain(v')$. By construction, the last vertex  of $\chain(v')$, $v''$,  will be an interior vertex of the $1$-chain $\chain(v,y)$. The chain  $\chain(v'',y)$ is either an edge (when $v''=x$) or  a $1$-chain (when $v''$ is a reflex interior vertex of $\chain(v,y)$). If $v''=x$, then $\chain(v'')$ will trivially contain $y$ and the lemma holds. If $v''$ is a reflex vertex of the chain $\chain(v,y)$, then $\chain(v'',y)$ is  also a $1$-chain and the precondition of Step 1 is met for $v''$. The vertex $v''$ is processed and   the chain $\chain(v'')$ will contain all vertices of the chain $\chain(v'',y)$ including  $x$ and $y$. As a result, $x,y$ and $v$ are contained in the  chains $\chain(v')$ and $\chain(v'')$ where $v'$ and $v''$ are processed consecutively by the algorithm as desired.  

In  case (b), since the edge $xy$ is visible to $v$ and $x \in \N(v)$, from (P2), the chain $\chain(y,v)$ is a $1$-chain with $x$ as the only convex vertex in its interior. Let $v'$ be a vertex processed by the algorithm such that the chain $\chain(v')$ contains $y$ and $x$. We assume that the vertex $v$ is not in the chain $\chain(v')$ since otherwise the lemma is trivially true.  

Note that $v'$ is processed and step 1 is executed for the vertex $v'$.  $y$ and $x$ are inside the chain $\chain(v')$ and  the last vertex of this chain lies strictly between $y$ and $v$. Using the notations from the algorithm's description, $v' = v_{j+k}$.  $x$ is a convex vertex and $x$ is included in the chain $\chain(v')$. Therefore, either $x$  has to be the vertex $v_{l-1}$ or the vertex $v_l$. Since every vertex between $x$ and $v$ are reflex vertices and $\chain(y,v)$ is a $1$-chain, the precondition of  step 1 and step 1(a) is satisfied and the chain $\chain(v')$ will also contain  all reflex vertices on the chain $\chain(y,v)$. By our assumption, $v$ is not included in $\chain(v')$. Therefore $v$ must be a convex vertex. 

Let $v''$ be the vertex that appears before $v$  and let $\hat{v}$ be the vertex after $v$ in $\vertseq(f)$. We will claim that $v''$ is the last vertex in the chain $\chain(v')$. This is because the chain  $\chain(y,v'')$  is  also a $1$-chain with $x$ as the only convex vertex in its interior. Since $v''$ is reflex, step 1(a) would have added $v''$ to the chain $\chain(v')$.  As $v$ is not included in $\chain(v')$, the vertex $v''$ will be the last vertex of $\chain(v')$.

 Next, the algorithm will  execute step 1 for the vertex $v''$. Since $v'' \in \N(v)$ and the segment $v\hat{v}$ is visible to $v''$ (as $v$ is a convex vertex), from (P2), the chain $\chain(v'',\hat{v})$ is a $1$-chain, i.e., $v''$ is visible to $\hat{v}$ and $v'' \in \N(v)$. The precondition of Step 1 is met for $v''$  and therefore $\chain(v'')$ will contain the chain $\chain(v'',\hat{v})$ implying that $v$ in its interior of $\chain(v')$ (This shows that when  $x,y$ and $v$ are in different chains,  $v$ will be in the interior of the adjacent chain (case (1) of lemma statement); otherwise $x,y$ and $v$ will be in the same chain (case (2) of lemma statement)). Therefore, $x,y$ and $v$ will be contained in chains $\chain(v')$ and $\chain(v'')$ where $v'$ and $v''$ are two consecutive vertices processed by the algorithm. Hence the maximal triangulated chain containing $v$ is either the same chain that contains $x$ and $y$, or is adjacent to the maximal triangulated chain containing $x$ and $y$, as claimed.
\end{proof}

The next lemma applies Lemma~\ref{lem:delta-visibility} to $\opt_i$, establishing a crucial property that helps us relate the cardinality of $\hat{\candidate}_i$ to the cardinality of $\opt_i$ in the following section. This lemma shows that for a non-triangulated face $f'$ in $\hat{\candidate}_i$, if an edge in $uv \in \opt_i$ intersects the face $f'$,  then as you walk along $\dir{uv}$, every time you will enter and leave $f'$ on  boundary edges that are adjacent to each other. As a corollary, it follows that distinct connected components of $\hat{\candidate}_i$ cannot be arbitrarily close to each other.

\begin{lemma}
\label{lem:intersection}
Fix a non-triangulated face $f'$ of $\hat{\candidate}_i$. For any edge $uv$ of $\opt_i$, let $\{\overline{a_1b_1},\ldots, \overline{a_kb_k}\}$ be the maximal pairwise-disjoint open line segments resulting from taking $\overline{uv} \cap f'$, where each $\overline{a_jb_j} \subset (\overline{uv} \cap f')$. Then, for each $\overline{a_jb_j}$, there are three consecutive vertices $v_{p},v_{p+1}$ and $v_{p+2}$ in $\vertseq(f')$ with $a_j \in \overline{v_{p}v_{p+1}}$ and $b_j \in \overline{v_{p+1}v_{p+2}}$, and $v_{p+1}$ is convex in $\vertseq(f')$. 
\end{lemma}
\begin{proof}
  Let $f$ be that non-triangulated face in $\candidate_i$ which after the execution of Phase 1 of the algorithm on $f$, produces the face $f'$.   Let $uv$ be any edge in $\opt_i$. Let $\delta = \frac{\rand 3^{i-1}}{\sqrt{2}}$. If $\overline{uv} \cap f' = \emptyset$, the statement holds trivially. So assume $\overline{uv} \cap f' \neq \emptyset$ and $\{\overline{a_1b_1}, \ldots, \overline{a_kb_k}\}$ is a set of line segments of $\overline{uv} \cap f'$ that we encounter as we walk from $u$ to $v$. Any line segment can enter and exit the face $f'$ through a vertex or an edge of the boundary of $f'$. First, we will show that every such entry point $a_j$ and exit point $b_j$ is not an input point of $P$ but an intersection point of $uv$ with an  edge of the boundary of $f'$. $u$ and $v$ are points in $P$ and no three points of $P$ are collinear. So, the points in $\{a_{1,}a_2,\ldots,a_{k}\}\cap \{P\setminus\{u,v\}\} =\emptyset$ and $\{b_1,\ldots,b_{k}\}\cap \{P\setminus\{u,v\}\} =\emptyset$. Next, to show that $a_j$ and $b_j$ are not points in $P$, it suffices if we show that  $a_1 \neq u$ and $b_k\neq v$. We claim that $a_1 \neq u$. For the sake of contradiction, suppose $a_1=u$, then since the segment $\overline{a_1b_1}$ intersects $f'$, $b_1$ has to lie on an edge $v_p v_{p+1}$ where neither $v_p$ nor $v_{p+1}$ is $u$. Therefore, the edge $uv$ enters the region $E(v_p,v_{p+1})$. Since the region $E(v_p,v_{p+1})$ does not contain any points of $P$ (as shown in the proof of Lemma~\ref{lem:planarity}), the vertex $v$  lies outside $E(v_p,v_{p+1})$. \ The edge $uv$ enters and exits the region $E(v_p,v_{p+1})$  and from Lemma~\ref{lem:chain-intersection}, the edge $uv$ also intersects the chain $\chain(v_p,v_{p+1})$. Since $uv \in \opt_i$, by definition, $\|uv\| \le \delta$ and therefore $u$ is $\delta$-visible  to some edge  of $f$ belonging to the chain $\chain(v_p,v_{p+1})$.  By Lemma~\ref{lem:delta-visibility}, it follows that $u$ must either be $v_p$ or $v_{p+1}$ leading to a contradiction. A symmetric argument can be used to show that $b_k \neq v$. Therefore, every entry point $a_j$ and every exit point $b_j$ is an intersection point of $uv$ with an edge of the boundary of $f'$.\   

Next, we will show that every entry point $a_j$ and exit point $b_j$ are points on adjacent segments of the boundary of $f'$.  Consider one such segment $\overline{a_jb_j}$. Then $\overline{a_jb_j}$ intersects two boundary edges of $f'$, say $v_pv_{p+1}$ and $v_qv_{q+1}$. For the sake of contradiction, assume that these segments are not adjacent and so the end points of these segments, $v_p$, $v_{p+1}$, $v_q$ and $v_{q+1}$ are distinct points. The regions $E(v_p,v_{p+1})$ and $E(v_q,v_{q+1})$ do not contain any points of $P $ (as shown in the proof of Lemma~\ref{lem:planarity}). Therefore, the edge $uv$ has to enter and exit the regions $E(v_p,v_{p+1})$ and $E(v_q,v_{q+1})$. From Lemma~\ref{lem:chain-intersection}, the segment $uv$ also intersects  the chains $\chain(v_p,v_{p+1})$ and $\chain(v_q,v_{q+1})$. Either $uv$ intersects with edges or vertices that appear in the two chains. We provide a proof for the case where $uv$ intersects with edges of both the chains; the same argument can also be applied if the intersection point is a vertex. Let  the edges $xy$ (resp. $x'y'$ ) of the chains $\chain(v_p,v_{p+1}) $ (resp. $\chain(v_q,v_{q+1}))$  be the segments that intersect with $uv$. By definition, any edge of $\opt_i$ has a length less than or equal to $\delta$. In particular, $\|uv\| \leq \delta$.
  Without loss of generality, assume $uv$ is oriented vertically, then one can imagine sliding $uv$ horizontally to the left or to the right. The vertical distance between $xy$ and $x'y'$ can increase, at most, in one direction (since we are considering straight-line edges). Assume the vertical distance increases or stays the same to the right, then slide $uv$ to the left (so that the vertical distance stays the same or reduces) until $uv$ intersects an input point $z$. Such a point exists, because if no other point is encountered, sliding $uv$ in this manner would encounter one of the endpoints of $xy$ or $x'y'$. The point $z$ has one of three possibilities: a)  $z$ is a point  in $\vertseq(f')$, or b) $z$ is a vertex in one of the two chains $\chain(v_p,v_{p+1})$  and $\chain(v_qv_{q+1})$ or c)  $z$ is in the interior of some chain $\chain(v_s,v_{s+1})$ that is not $\chain(v_p,v_{p+1})$ or $\chain(v_q,v_{q+1}) $.  In case (c), this sliding vertical line segment through $z$ has to enter and exit the region $E(v_s,v_{s+1})$. Since, $\chain(v_s,v_{s+1})$ is a $1$- or a $2$-chain, from Lemma~\ref{lem:chain-intersection},  this sliding line segment must also intersect then also intersect the chain $\chain(v_s,v_{s+1})$ which is a contradiction. Therefore, $z$ cannot be in the configuration given by case (c).  

For (a) , there is an edge in the chain $\chain(v_p,v_{p+1})$ and chain $\chain(v_qv_{q+1})$ of $f$ that is $\delta$-visible from $z$. As $z \in \vertseq(f')$, by Lemma~\ref{lem:delta-visibility}, it follows that $z$ would be part of both the  chains $\chain(v_p,v_{p+1})$ and $\chain(v_q,v_{q+1})$ in $\candidate_i$ implying that they are the same or adjacent chains. For case (b),  without loss of generality, let $z$ be a point in the chain $\chain(v_pv_{p+1})$. Since $x'y'$ is $\delta$-visible to $z$ and since $z \in \chain(v_pv_{p+1})$ and $x'y'$ is an edge in the chain $\chain(v_qv_{q+1})$, by Lemma~\ref{lem:delta-visibility},  the two chains share a common endpoint.  So, in $\hat{\candidate_i}$, $v_pv_{p+1}$ and $v_qv_{q+1}$ must either be the same edge, or adjacent on the boundary of $f'$ implying that they share at least one end point, say $v_{p+1}=v_q$ as claimed.  

 Next, we show that $v_{p+1}$ is a convex vertex. The segment   $\overline{a_jb_j}$ is inside $f'$ and therefore, it appears on the right as we walk from $v_p$ to $v_{p+1}$.  Since $b_j$ intersects $v_{p+1}v_{p+2}$ , we need to make a right turn at $v_{p+1}$ implying that the vertex $v_{p+1}$ must be a convex vertex.

\end{proof}

\begin{corollary}
\label{cor:components}
  Let $f'$ be a non-triangulated face in $\hat{\candidate}_i$, and $uv$ any edge in $\opt_i$. If $uv$ intersects any two boundary edges $e_1$ and $e_2$ of $f'$, then $e_1$ and $e_2$ belong to the same connected component of $\hat{\candidate}_i$.
\end{corollary}
\begin{proof}
  Suppose $f'$ is a non-triangulated face of $\hat{\candidate}_i$, $uv \in \opt_i$, but $uv$ intersects the boundary of $f'$. Let $\{\overline{a_1b_1}, \ldots, \overline{a_kb_k}\}$ be a set of line segments resulting from $f' \cap \overline{uv}$, where each $\overline{a_jb_j} \subset \overline{uv}$. Consider one such segment $\overline{a_jb_j}$. By Lemma~\ref{lem:intersection}, $a_j \in \overline{v_{p}v_{p+1}}$ and $b_j \in \overline{v_{p+1}v_{p+2}}$ for three consecutive vertices $v_{p}$, $v_{p+1}$, and $v_{p+2}$ in $\vertseq(f)$, and therefore $v_{p}v_{p+1}$ and $v_{p+1}v_{p+2}$ belong to the same component in $\hat{\candidate}_i$. Since this is the case for each $a_jb_j$, the result follows.
\end{proof}

Consider a subset $C \subseteq P$ such that vertices in $C$ belong to a single maximal connected component of $\hat{\candidate}_i$.
It follows from Corollary~\ref{cor:components}, that the input points in $C$ correspond to one or more maximal connected components in $\opt_i$. Thus if a vertex $p \in C$, then $p$ cannot belong to the same connected component of $\opt_i$ as any vertex $p' \notin C$. Note that vertices in $C$ may, however, form more than one maximal connected component in $\opt_i$. We can prove Invariant~\ref{inv:cardinality} by simply proving it for each connected component containing the vertex set $C$ in $\hat{\candidate}_i$ and the corresponding connected components  of $\opt_i$ containing the vertices of $C$. In  Section~\ref{sec:inv2}, we present our argument for one such connected component, and for simplicity, we will use $\hat{\candidate}_i$ to denote this component.


\section{Invariant 2}
\label{sec:inv2}

Invariant~\ref{inv:cardinality} compares the cardinality of the $\PSLG$ $\hat{\candidate}_i$ to the cardinality of the $\PSLG$ $\opt_i$. If the region triangulated by $\hat{\candidate}_{i}$ contains the region triangulated by $\opt_i$, then it is easy to show that $\hat{\candidate}_{i}$ has a greater number of edges than $\opt_i$.  However, as shown in Lemma~\ref{lem:intersection}, edges of $\opt_i$ can intersect non-triangulated faces of $\hat{\candidate}_i$ and therefore, these two $\PSLG$s triangulate different regions. Nevertheless, Lemma~\ref{lem:intersection} implies that edges of $\opt_i$ which intersect a non-triangulated face of $\hat{\candidate}_i$ $f$, will intersect two edges of $\hat{\candidate}_i$ that are adjacent on the boundary of $f$. In order to establish Invariant 2, we will show that the regions triangulated by $\opt_i$ and $\hat{\candidate}_i$ are ``close" (cf. Figure~\ref{fig:trace}) to each other.
In Section~\ref{subsec:cardinality} we provide conditions under which we can compare the cardinalities of any two $\PSLG$s.
After that, in Section~\ref{subsec:inv2-proof}, we show these conditions are satisfied for $\hat{\candidate}_i$ and $\opt_i$, which allows us to prove Invariant~\ref{inv:cardinality}.

\subsection{Comparing Cardinality}
\label{subsec:cardinality}

Let $\mathcal{G}$ be any planar graph. Let $F$ denote the set of  faces of this planar embedding of $\mathcal{G}$. When the graph being considered is not clear, we denote the set of faces by $F(\mathcal{G})$. For any face $f \in F$ and its vertex sequence $\vertseq(f)$, we define its \emph{signature}, $s(f)$, to be the length of the vertex sequence $\vertseq(f)$, i.e.,  $s(f) = |\vertseq(f)|$. Let $X$ be a connected planar graph and $Y$ be any planar graph. For any two faces $f_1 \in F(Y)$ and $f_2\in F(X)$, we say that $f_1$ \emph{dominates}  $f_2$ if and only if $s(f_1)\ge s(f_2)$. Suppose, for every non-triangulated face $f$ in $F(X)$, there is a unique dominating face in $Y$, then we will show that $|X|\ge |Y|$ (Corollary~\ref{cor:size}). In Section~\ref{subsec:inv2-proof}, we will use this to prove Invariant~\ref{inv:cardinality}.
\begin{lemma}
\label{lem:count}
Consider a connected planar graph $\mathcal{G}$ and let $F$ be all the faces in the planar embedding of $\mathcal{G}$. The total number of edges in the graph $|\mathcal{G}|$ can be written as
$$|\mathcal{G}|=3n-6 -\sum_{f \in F} (s(f)-3)$$
\end{lemma}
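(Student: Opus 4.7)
The plan is to derive the identity by combining Euler's formula for connected planar graphs with the face-edge incidence sum, then rearranging. Both ingredients are standard, so the main task is to state them carefully enough that the manipulation is a one-line calculation.

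First, I would write down Euler's formula for $\mathcal{G}$: since $\mathcal{G}$ is connected with $n$ vertices, $m := |\mathcal{G}|$ edges, and $|F|$ faces in its planar embedding, we have $n - m + |F| = 2$, i.e.\ $|F| = 2 - n + m$. Next, I would establish the face-handshake identity
\[
\sum_{f \in F} s(f) \;=\; 2m.
\]
The reason is that walking the boundary of every face and recording the vertex sequence $\vertseq(f)$ traverses each edge exactly twice: once from each of its two sides. This is the step where I would be most careful, because the only subtlety in the lemma is bridges and other pendant edges whose two sides lie in the \emph{same} face; the definition of $s(f)$ as the length of the vertex sequence $\vertseq(f)$ (rather than the number of distinct vertices on $\partial f$) is precisely what makes such an edge contribute $2$ to $s(f)$, so the identity holds without exception.

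With both identities in hand, I would simply compute
\[
\sum_{f \in F} \bigl(s(f) - 3\bigr) \;=\; 2m - 3|F| \;=\; 2m - 3(2 - n + m) \;=\; 3n - 6 - m,
\]
and then solve for $m = |\mathcal{G}|$ to obtain the claimed formula.

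The only real obstacle is the face-handshake identity in the presence of bridges or cut-vertices, where a single face can touch the same edge twice; I would handle this by appealing directly to the vertex-sequence formulation of $s(f)$ given in the paper, rather than counting boundary edges of $f$ as a set. Everything else is bookkeeping around Euler's formula.
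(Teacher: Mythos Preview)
Your proposal is correct and matches the paper's proof essentially step for step: the paper also establishes $\sum_{f\in F} s(f) = 2|\mathcal{G}|$ (explicitly noting that an edge with a single co-face contributes $2$ to that face's signature), invokes Euler's formula, and rearranges to obtain the stated identity. The only difference is cosmetic ordering of the algebra.
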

\begin{proof}
For any face $f \in F$, its  signature is the length of the vertex sequence. Recollect that we construct the vertex sequence by exploring the edges of the boundary of the face $f$  in the clockwise direction. Every edge $e$ has at most two faces, one on each side. We refer to these faces as the co-faces of $e$.    

By the construction of the vertex sequence, every edge contributes $1$ to the signature of each of its co-faces. If the edge $e$ has only $1$ co-face $f$, it contributes two to $s(f)$. Therefore, $\sum_{f\in F}s(f) = 2|\mathcal{G}|$. From Euler's formula, we know $|\mathcal{G}| = n+|F| -2$ and therefore, $3|\mathcal{G}|=3n-6+3|F|$. It follows that,
  \begin{align*}
    |\mathcal{G}|+\sum_{f\in F} s(f) &= 3n-6+3|F|, \text{ so }
    |\mathcal{G}|=3n- 6 - \sum_{f\in F} (s(f)-3).
  \end{align*}
\end{proof}

In Lemma~\ref{lem:count}, if the graph is disconnected, then we can extend the proof to show that the number of edges is strictly smaller than $3n-6 -\sum_{f \in F} (s(f)-3)$. In addition, note that for any triangular face $f$, $s(f)-3 =0$. Hence, the total number of edges in any connected planar graph can be calculated using only the size of the vertex sequences of the non-triangulated faces. Using this observation, we obtain the following:

\begin{corollary}
\label{cor:size}
Let $X$ be a connected planar graph and $Y$ be any planar graph. For every non-triangulated face $f \in F(X)$, suppose there is a unique non-triangulated face $f' \in F(Y)$ such that $s(f) \le s(f')$. Then, $|X| \ge |Y|$.  
\end{corollary}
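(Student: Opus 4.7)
The plan is to apply Lemma~\ref{lem:count} (and the strict-inequality extension noted in the remark after it) to both $X$ and $Y$, and then use the hypothesized face correspondence as an injection to control the signature sums.

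First, I would apply Lemma~\ref{lem:count} to the connected graph $X$ on $n$ vertices, giving $|X| = 3n - 6 - \sum_{f \in F(X)}(s(f)-3)$; since $s(f)-3 = 0$ for every triangulated face, only non-triangulated faces contribute to the sum. Next, applying the disconnected-graph version of Lemma~\ref{lem:count} noted immediately after its proof, I would obtain $|Y| \le 3n - 6 - \sum_{f' \in F(Y)}(s(f')-3)$, with equality when $Y$ is connected and strict inequality otherwise. Both expressions use the same $n$, as $X$ and $Y$ are implicitly $\PSLG$s on the same vertex set in the intended application.

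Second, I would interpret the hypothesis as an injection $\phi$ from the non-triangulated faces of $X$ into the non-triangulated faces of $Y$ satisfying $s(f) \le s(\phi(f))$; the word \emph{unique} in the statement is precisely what enforces injectivity. Summing the pointwise inequality $s(f)-3 \le s(\phi(f))-3$ over all non-triangulated $f \in F(X)$, and using that the remaining terms $s(f')-3$ for $f' \in F(Y) \setminus \phi(F(X))$ are nonnegative, yields $\sum_{f \in F(X)}(s(f)-3) \le \sum_{f' \in F(Y)}(s(f')-3)$.

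Combining this with the two displays for $|X|$ and $|Y|$ immediately gives $|X| \ge |Y|$. The main obstacle, such as it is, is the injectivity of $\phi$: without it, two non-triangulated faces of $X$ with large signature could both be matched against the same face of $Y$, and the signature-sum comparison would collapse. Everything else is routine bookkeeping built on top of Lemma~\ref{lem:count}.
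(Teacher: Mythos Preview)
Your proposal is correct and matches the paper's proof essentially step for step: apply Lemma~\ref{lem:count} to $X$, its disconnected variant to $Y$, interpret ``unique'' as an injection so that the signature sums compare, and combine. Your explicit mention that $X$ and $Y$ share the same vertex set (same $n$) and your remark on why injectivity is essential are both implicit in the paper's argument.
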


\begin{proof}
Let $F^*(X)$ be the set of non-triangulated faces of $X$, and $F^*(Y)$ be the set of non-triangulated faces of $Y$. Applying Lemma~\ref{lem:count} and the observation that for every triangular face $f$ of $X$, $s(f)-3 =0$,  we can express the number of edges in $X$ as
$$|X| = 3n - 6 - \sum_{f\in F^*(X)}(s(f)-3).$$
Since, for every face $f\in F^*(X)$ there is a unique face $f'\in F^*(Y)$ such that $s(f')\ge s(f)$ and since $s(f')-3 \ge 0$, we have $\sum_{f\in F^*(X)}(s(f)-3) \le \sum_{f'\in F^*(Y)}(s(f')-3)$. Since $Y$ is not necessarily connected, using an almost identical argument to Lemma~\ref{lem:count}, we can express the number of edges in $Y$  by the following inequality,
  \begin{align*}
    |Y| &\le 3n-6- \sum_{f\in F^*(Y)}(s(f)-3) \\
        &\le 3n-6 - \sum_{f\in F^*(X)}(s(f)-3) \\
        &= |X|.
  \end{align*}
\end{proof}

It follows from Corollary~\ref{cor:size}, that in order to determine whether one can apply Lemma~\ref{lem:count}, one need only be concerned with the non-triangulated faces of a particular planar graph. Thus, from this point on, we let $F(X)$ denote only the non-triangulated faces of a planar graph $X$.

\subsection{Proving Invariant~\ref{inv:cardinality}}
\label{subsec:inv2-proof}

\paragraph{Our strategy.} To prove Invariant~\ref{inv:cardinality}, we will add edges to $\opt_i$ in such a way that $\opt_i$ dominates $\hat{\candidate}_i$ and then apply Corollary~\ref{cor:size}. Fix any non-triangulated face $f \in F(\hat{\candidate}_i)$. When we overlay $f$ on the straight line embedding of $\opt_i$, we will use  Lemma~\ref{lem:intersection} to show that $f$ has only one connected ``non-trivial" intersection with some non-triangulated face $f'$ of $\opt_i$. We define the region of interest $f\cap f'$ as the trace of $f$. Next, we augment the graph $\opt_i$ by embedding new edges (planar but not necessarily straight-line) and carefully create a new face $f''$ around the trace of  $f$. We show that $f''$ dominates $f$ and refer to $f''$ as the \emph{dominating face} of $f$. After this procedure is repeated for every non-triangulated face in $F(\hat{\candidate}_i)$, the augmented graph $\opt_i$ now dominates $\hat{\candidate}_i$. However, adding edges to $\opt_i$ may create multiple (duplicate) edges between the same pair of points and therefore, Corollary~\ref{cor:size} does not apply. However, we show that duplicate edges cannot participate in two distinct dominating faces. Removing one of the duplicate edges will merge two faces $h$ and $h'$ and create a new face $h''$ that has a signature greater than or equal to  the signature of $h$ or $h'$. Since either $h$ or $h'$ (and not both) can be a dominating face of some face $f \in F( \hat{\candidate}_i$),  $h''$ will be the unique dominating face of $f$.
After deleting the duplicate edges, we now have an augmented $\opt_i$ that is a planar graph that dominates $\hat{\candidate}_i$. One may apply Corollary~\ref{cor:size}, and Invariant~\ref{inv:cardinality} follows.
We begin by introducing the definitions that is required to formalize this argument.

Suppose $f$ is a non-triangulated face of $\hat{\candidate}_i$ with the boundary vertex sequence $\vertseq(f)$. Let $uv$ be an edge in $\opt_i$. We say that an edge $uv$ is a \emph{crossing edge} for any convex vertex $v_j \in \vertseq(f)$ if the edge $uv$ intersects the edges $v_{j-1}v_j$ and $v_jv_{j+1}$. We direct the crossing edge $uv$ from $u$ to $v$ if we first encounter the edge $v_{j-1}v_j$ as we move from $u$ to $v$. In this case, we refer to $v$ as the head and $u$ as the tail of this directed edge $\dir{uv}$.
Note that any such edge $uv \in \opt_i$ can be a crossing edge for many convex vertices in $\vertseq(f)$, and any convex vertex in $\vertseq(f)$ can have zero, one, or many crossing edges. However, from Lemma~\ref{lem:intersection}, we know that any edge $uv \in \opt_i$ that intersects a non-triangulated face $f$ has to be a crossing edge for some convex vertex $v_{} \in \vertseq(f)$. For any convex vertex $v_{j} \in \vertseq(f)$, let $\cs(v_{j})  $ be the set of crossing edges of $v_{j}$. We set $\cs(v_j)$ to be empty if $v_j$ is a reflex vertex. If $\cs(v_{j})$ is not empty, we define the furthest crossing edge of $v_{j}$  as the first edge of $\cs(v_j)$ that we encounter as we walk from $v_{j-1}$ to $v_j$. We can equivalently define the furthest crossing edge to be the last edge of $\cs(v_j)$ that we encounter as we walk from $v_j$ to $v_{j+1}$. This follows from the fact that $\opt_i$ is a planar graph, hence crossing edges cannot intersect. We make the following straight-forward observations for crossing edges which follows form Lemma~\ref{lem:intersection} and the fact that $\opt_i$ is a planar graph.
\begin{enumerate}
  \item[(i)] For any face $f \in F(\hat{\candidate}_i)$, let $v_j$ and $v_{j+1}$ be two consecutive vertices in $\vertseq(f)$. Let $S$ be the set of all edges of $\opt_i$\ that intersects $v_jv_{j+1}$, then $\cs(v_j)\cup\cs(v_{j+1})=S$.
  \item[(ii)] As we move from $v_j$ to $v_{j+1}$, first, we encounter all edges in $\cs(v_j)$ and only then  will we encounter  the edges of $\cs(v_{j+1})$. Therefore, as we move from $v_j$ to $v_{j+1}$, the furthest crossing edge of $v_j$  will appear immediately before the furthest crossing edge of $v_{j+1}$. 
\end{enumerate}

Next, consider any vertex $v_j \in \vertseq(f)$. We define two points $p_{2j}$ and $p_{2j+1}$  as follows. If $v_j$ does not have any crossing edges, we set $p_{2j}=p_{2j+1}=v_j$. Otherwise, if $v_j$ has a crossing edge, then let $e$ be the furthest crossing edge. We set $p_{2j}$ to be the point of intersection of $e$ with $v_{j-1}v_j$ and set $p_{2j+1}$ to be the point of intersection of  $v_jv_{j+1}$ with $e$. For any $j$, the following property is true for the segment $\overline{p_{2j}p_{2j+1}}$

\begin{itemize}
\item Suppose $v_j$ has a crossing edge. By construction, $\overline{p_{2j}p_{2j+1}}$ is contained in the furthest crossing edge of $v_j$ and the segment $\overline{p_{2j}p_{2j+1}}$ is contained inside the face $f$.
\end{itemize}

\noindent In addition, the following property is true for $\overline{p_{2j+1}p_{2j+2}}$. 
\begin{itemize}
\item By construction, $p_{2j+1}$ and $p_{2j+2}$ are points that lie on the edge $v_{j}v_{j+1}$. Furthermore, from property (ii), $\overline{p_{2j+1}p_{2j+2}}$ does not intersect with any other edge of $\opt_i$.
\end{itemize}

\noindent In the following lemma, we formally define the \emph{trace} of a non-triangulated face $f \in F(\hat{\candidate}_i)$ and non-triangulated face $f' \in F(\opt_i)$.
\begin{lemma}
\label{lem:trace}
For any face $f\in \hat{\candidate}_i$, as we walk along the cycle $C=\dir{p_0p_1},\dir{p_1p_2},\ldots,\dir{p_{2m-3}p_0}$, there is a unique non-triangulated face $f' \in F(\opt_i)$ that appears on the right with respect to the embedding of $\opt_i$. The face $f$ also appears on the right as we walk along $C$ with respect to the embedding of $\hat{\candidate}_i$. We define the region enclosed by the cycle $C$ as the \emph{trace} of $f$, and denote it by $\theta(f)=f\cap f'$ (Figure~\ref{fig:trace}).
\end{lemma}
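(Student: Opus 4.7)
The plan is to traverse $C$ segment by segment, tracking the face immediately to the right in each of the two embeddings. Recall that $C$ alternates between two kinds of segments: the \emph{corner-cutting} segments $\overline{p_{2j}p_{2j+1}}$, which lie strictly inside $f$ on the furthest crossing edge of $v_j$, and the \emph{boundary} segments $\overline{p_{2j+1}p_{2j+2}}$, which lie on the $\hat{\candidate}_i$-edge $v_jv_{j+1}$ and, by property~(ii), meet no other edge of $\opt_i$. I will use this alternation to certify that the face on the right is locally well-defined and then argue it is globally the same.

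For the $\hat{\candidate}_i$ side, the boundary segments are traversed in the direction $v_j\to v_{j+1}$, which matches the clockwise orientation of $\vertseq(f)$, so $f$ lies on the right; the corner-cutting segments sit in the open interior of $f$, so $f$ is trivially on the right of those as well. For the $\opt_i$ side, I plan to show that the open region $R$ enclosed by $C$ is contained in a single face of $\opt_i$. Since $R\subseteq f$, its interior contains no vertex of $\hat{\candidate}_i$ and hence, as both graphs share the same vertex set, no vertex of $\opt_i$ either. Moreover, no edge of $\opt_i$ can lie inside $R$: such an edge would have to cross $\partial R = C$, which is ruled out along the corner-cutting segments by planarity of $\opt_i$ and along the boundary segments by property~(ii). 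Thus $R$ sits inside a unique face $f'$ of $\opt_i$, and a short orientation check using the head/tail convention for directed crossing edges places $f'$ on the right of $C$.

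The delicate remaining step, which I expect to be the main obstacle, is verifying that $f'$ is non-triangulated, so that it belongs to $F(\opt_i)$ in the paper's restricted sense. I plan to argue this by locating at least four distinct $\opt_i$-vertices on $\partial f'$: the endpoints of the furthest crossing edges of distinct convex vertices $v_j$ lie on $\partial f'$, and at each vertex $v_j$ with $\cs(v_j)=\emptyset$, the vertex $v_j$ itself lies on $\partial f'$ along the corresponding boundary segment. Because $f$ is non-triangulated ($m\geq 4$), a short case analysis—splitting on whether adjacent furthest crossing edges coincide—shows that these contributions always yield $s(f')\geq 4$. Once this is in hand, the identity $\theta(f)=f\cap f'$ is immediate since $R$ lies in both $f$ and $f'$ and $C$ bounds their intersection.
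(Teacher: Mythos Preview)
Your global route—showing that the open region $R$ enclosed by $C$ lies inside a single face of $\opt_i$, rather than tracking the right-hand face inductively as the paper does—is legitimate, but the argument as written has a gap. When you assert that any $\opt_i$-edge meeting the interior of $R$ ``would have to cross $\partial R=C$,'' you overlook the possibility of a \emph{chord}: an edge $v_jv_k\in\opt_i$ with both endpoints among those $v_\ell\in\vertseq(f)$ having $\cs(v_\ell)=\emptyset$ (these are the only vertices of $\opt_i$ lying on $\partial R$), whose relative interior sits in the interior of $R$. Such a chord touches $C$ only at two corner points; it crosses no corner-cutting segment and no open boundary segment, so neither planarity of $\opt_i$ nor property~(ii) excludes it, yet its presence would split $R$ between two faces of $\opt_i$. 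The paper rules out exactly this configuration in case~(i) of its induction step by appealing to Lemma~\ref{lem:intersection}: any edge of $\opt_i$ that meets the interior of $f$ must cross two adjacent boundary edges of $f$, hence is a crossing edge for some convex vertex and cannot be a diagonal of $f$ emanating from a vertex with empty crossing set. Once you add that appeal, your argument closes and is arguably tidier than the paper's induction.

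Regarding what you flag as the ``main obstacle'' (showing $f'$ is non-triangulated): the paper's own proof is silent on this point, and the downstream construction does not actually use it—the face that is fed into Corollary~\ref{cor:size} is the dominating face $f''$ in the augmented graph $\newopt$, which satisfies $s(f'')=s(f)\ge 4$ by construction. So your planned case analysis, while it would make the lemma as stated self-contained, is not where the real work lies.
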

\begin{proof}
We prove this claim by induction.

\textit{Base Case:}  There are two possibilities. Either (i) $p_0,p_1 = v_0$ or, (ii) $p_0$ and $p_1$ are intersection points of the furthest crossing edge $e_{0}$ of $v_0$ with edges $\dir{v_{m-1}v_0}$ and $\dir{v_0,v_1}$. In case (i), by construction $v_0$ is on the boundary of $f$. We set the face $f'$ to be the face of $\opt_i$ that lies to the right as we start to walk from $v_0$ towards $v_1$ along the edge $\dir{v_0v_1}$.
In case (ii), since $e$ is a crossing edge of $v_0$, $v_0$ is convex and\  $\overline{p_0p_1}$  lies completely inside the face $f$.  Therefore, $f$ will also appear on the right  as we walk along this edge. Also, since $\overline{p_0p_1}$ is contained inside $e$, as we walk from $p_0$ to $p_1,$ there is a unique face $f'$ that is to the right of $e$. 

\textit{Induction Step:} Suppose that the faces $f$ and $f'$ appear to the right of all edges starting from vertex $p_0$ to $p_{2j}$. We will now show that the claim is also true if we extend the path to $p_{2j+1}$ and then to $p_{2j+2}$. There are two possibilities: (i) $p_{2j}, p_{2j+1} = v_j$, or (ii) $p_{2j}$ and $p_{2j+1}$ are points of intersection of the furthest crossing edge $e_j$ of $v_j$.

In case (i), $p_{2j}, p_{2j+1} = v_j$, so the claim is trivially true for the path until $p_{2j+1}$. $\overline{p_{2j+1}p_{2j+2}}$ will be along the edge $v_{j}v_{j+1}$. Note that, as we walk along the path from $v_{j-1}$ through $v_j$ to $v_{j+1}$, there is no other edge of $f'$ incident on $v_j$ that appears to the right of this path.  Otherwise, such an edge will contradict Lemma~\ref{lem:intersection}. Therefore, the next line segment $\overline{p_{2j+1}p_{2j+2}}$ will continue to have $f'$ and $f$ on its right.

In case (ii), $p_{2j}$ and $p_{2j+1}$ are points of intersection of the furthest crossing edge $e_j$ of $v_j$. Since $v_j$ is a convex vertex, the entire segment $\overline{v_jv_{j+1}}$ is embedded to the right as we walk along from $v_{j-1}$ to $v_j$. Since $p_{2j}$ is an intersection point of $e_j$ with $v_jv_{j-1}$, and $p_{2j+1}$ is an intersection point of $e_j$ with $v_jv_{j+1}$, we will make a right turn at $p_{2j}$ and $p_{2j+1}$. When we make a right turn at intersection points, the faces on the right will continue to be on the right side. Therefore, the claim is true for the path until $p_{2j+2}$.   
\end{proof}

The following result also follows from the construction of the trace.

\begin{corollary}
\label{cor:unique-trace}
For any two distinct faces $f_{1}, f_2\in F(\hat{\candidate}_i)$, $\theta(f_1)$ and $\theta(f_2)$ are disjoint.
\end{corollary}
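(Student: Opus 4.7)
The plan is to leverage the definition of the trace given at the end of Lemma~\ref{lem:trace}, namely $\theta(f) = f \cap f'$, which immediately gives the set-theoretic containment $\theta(f) \subseteq f$. Once this containment is in hand, the corollary reduces to the elementary fact that two distinct faces of the planar embedding $\hat{\candidate}_i$ have disjoint interiors, so if $f_1 \neq f_2$ are both in $F(\hat{\candidate}_i)$, then $\theta(f_1) \cap \theta(f_2) \subseteq f_1 \cap f_2$ has empty interior; interpreting $\theta(f_1)$ and $\theta(f_2)$ as the open planar regions enclosed by their respective cycles $C_1$ and $C_2$, they are disjoint.

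The only substantive part of the argument is therefore justifying $\theta(f)\subseteq f$ directly from the constructive description, since the proof of Lemma~\ref{lem:trace} was phrased in terms of the face appearing on the right as one walks along $C$. Concretely, I would note that every segment in the cycle $C = \dir{p_0p_1},\ldots,\dir{p_{2m-3}p_0}$ falls into one of two types: segments $\overline{p_{2j}p_{2j+1}}$, which by the first bulleted property preceding Lemma~\ref{lem:trace} are contained in the interior of $f$ (they lie on the furthest crossing edge of $v_j$, which meets $f$), and segments $\overline{p_{2j+1}p_{2j+2}}$, which lie on the boundary edge $v_jv_{j+1}$ of $f$. Thus $C \subseteq \overline{f}$. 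Combined with the fact, proved in Lemma~\ref{lem:trace}, that $f$ lies on the right as one traverses $C$ in the embedding of $\hat{\candidate}_i$, the bounded region enclosed by $C$ is contained in $f$.

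No serious obstacle is anticipated; the only subtlety is making sure we interpret $\theta(f)$ consistently as the region enclosed by $C$ (equivalently as $f\cap f'$) so that its containment in $f$ is a tautology from the definition, after which disjointness of distinct faces of a planar embedding finishes the proof in one line.
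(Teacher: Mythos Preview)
Your proposal is correct and follows essentially the same approach as the paper: both arguments use the containment $\theta(f)\subseteq f$ (immediate from $\theta(f)=f\cap f'$) together with the disjointness of distinct faces of the planar embedding $\hat{\candidate}_i$. Your additional justification of $\theta(f)\subseteq f$ from the segment-by-segment description of $C$ is more detailed than the paper's one-line proof, but the underlying idea is identical.
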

\begin{proof}
$f_1$ and $f_2$ are disjoint since $\hat{\candidate}_i$ is a $\PSLG$, therefore, their intersection with any face of $\opt_i$ will continue to be mutually disjoint.
\end{proof}

Lemma~\ref{lem:trace} together with Corollary~\ref{cor:unique-trace}) implies that the trace is a well-defined function mapping a non-triangulated face $f$ in $\hat{\candidate}_i$ to a single, unique, connected region. Therefore, we can augment $\opt_i$ to form a new graph, $\newopt$, such that the trace $\theta(f)$ corresponds to a face in $\newopt$, and that $\theta(f)$ is the dominating face for $f$. 

\begin{figure}
  \centering
  \includegraphics[width=0.8\textwidth]{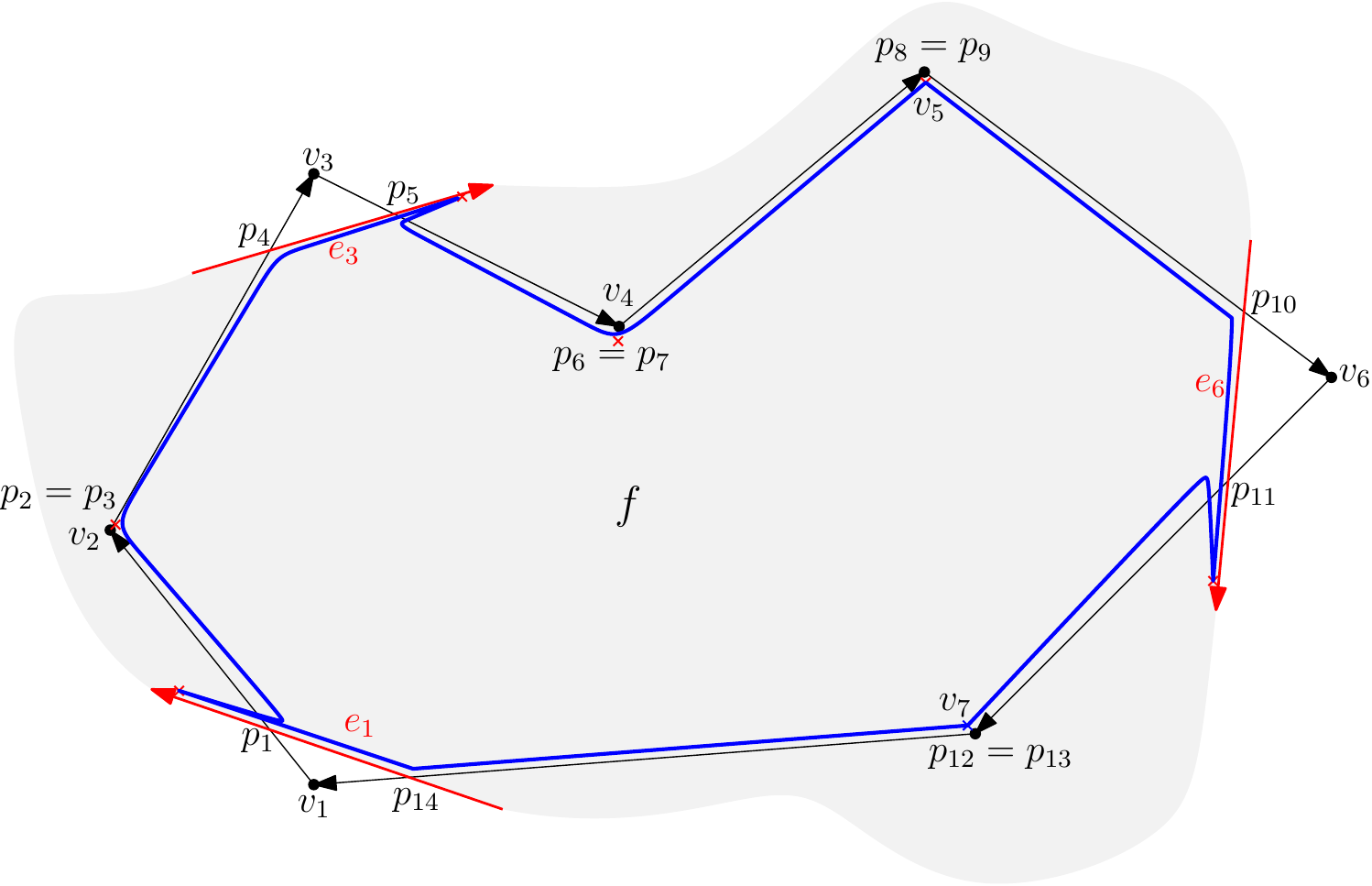}
  \caption[The trace, $\theta(f)$, of a non-triangulated face $f \in \hat{\candidate}_i$.]{The trace, $\theta(f)$, of a non-triangulated face $f \in \hat{\candidate}_i$. $e_1$, $e_3$, and $e_6$ are the furthest crossing edges for $v_1$, $v_3$ and $v_6$, respectively. Edges added in $\newedges{i}$ are depicted in blue.}\label{fig:trace}
\end{figure}

\paragraph{Adding edges to $\opt_i$} Next, we describe a procedure to embed a set of new edges $\newedges{i}$ to the straight-line embedding of $\opt_i$ to obtain a new graph $\newopt$ such that $\newopt$ remains a  planar graph. Additionally, for any face $f$ of $\hat\candidate_i$, the trace $\theta(f)$ is contained uniquely inside a face $f''$ of $\newopt$. Furthermore, $f''$ dominates $f$, so the signature $s(f'')\ge s(f)$. Note that the embedding of edges in $\newedges{i}$ that we construct is planar but not necessarily straight-line edges.

$\newedges{i}$ together with a planar embedding of $\newedges{i}$ is constructed as follows. Let $f \in F(\hat{\candidate}_i)$ have the vertex sequence $\vertseq(f) = \langle v_0, \ldots v_{m-1}\rangle$. Let the trace of $f$ be given by $\{p_0,\ldots, p_{2m-1}\}$ and let $e_k$ be the furthest crossing edge of the convex vertex $v_k$ for each $k$. Also, let $f'$ be the face of $\opt_i$ such that $\theta(f)=f\cap f' $ . Note that the edge $e_k$ exists only if $p_{2k-2} \neq p_{2k-1}$. Recollect  that each edge $e_k$ is an edge of a face $f'$ in $\opt_i$. We direct the edge $e_k$ so that as we walk along $e_k$ the face $f'$ appears on its right. We define a function $\psi$, which maps each index $j \in \{0, \ldots, m-1\}$ to a vertex $v$ on the boundary of $f$ or $f'$ as follows: if $e_j$ does not exist, then $\psi(j) = v_j$; otherwise, $\psi(j)$ is the vertex that is the head of the directed edge $e_j$.

Given this map $\psi$, we now describe a method for adding edges in $\newedges{i}$ that is added to $\newopt$. For every $\dir{v_jv_{j+1}}$, we add an edge $\dir{\psi(j)\psi(j+1)} \in \newedges{i}$ and embed this edge to construct a planar embedding of $\newopt$ as follows: Starting at $\psi(j)$, we draw the edge parallel and very close to $e_j$ until it reaches $p_{2j-1}$. At this point, we will draw the edge parallel and very close to the line segment $\overline{p_{2j-1}p_{2j}}$ until we reach the edge $e_{j+1}$. Finally, we will continue to draw the edge parallel and very close to the edge $e_{j+1}$ to $\psi({j+1})$.
Note that $\overline{p_{2j-1}p_{2j}}$ does not intersect any edge of $\opt_i$.
Note that edges $e_j$ and $e_{j+1}$ are edges of $\opt_i$, therefore, the newly added edge will not have any intersections with edges of $\opt_i$. If there are many vertices that have a single edge $e^*$ as a furthest crossing edge, there will be multiple edges that are drawn parallel to the edge $e^*$. We will draw all of them parallel and close to the edge $e^*$ (carefully stacked, one on top of the other, to avoid intersections).
  See Figure~\ref{fig:trace} for an example construction of the trace with the newly added edges, and Figure~\ref{fig:duplicate-edges} for an example where a single edge of $\opt_i$ is the furthest crossing edges for more than one vertex. This construction ensures that no two edges of $\newedges{i}$ intersect with each other and $\newopt$ is planar. By construction, there is a unique edge added for every $\dir{v_iv_{i+1}}$. Therefore, there is a unique face  $f''\in F(\newopt)$ that contains the trace of any face $f \in F(\hat\candidate_i)$ with $s(f)=s(f'')$. From the discussion above, the following lemma follows.

\begin{lemma}
The augmented graph $\newopt = \opt_i \cup \newedges{i}$ can be embedded without intersections. Furthermore, the face $f \in F(\hat\candidate_i)$ has the same signature as the face $f'' \in F(\newopt)$ that contains the trace of $f$.
\end{lemma}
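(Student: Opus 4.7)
The plan is to verify the two assertions --- planarity of $\newopt$ and the signature identity $s(f'') = s(f)$ --- directly from the construction, leaning on Lemma~\ref{lem:trace}, Corollary~\ref{cor:unique-trace}, and the two observations (i)--(ii) about furthest crossing edges. Nothing quantitative is needed; the whole proof is a bookkeeping exercise ensuring that the ``thin tube'' drawing prescribed above really does what the informal description claims.

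For planarity, I would inspect a single new edge $\dir{\psi(j)\psi(j+1)} \in \newedges{i}$, which by construction is the concatenation of three pieces: a piece parallel and arbitrarily close to $e_j$, a piece parallel and close to $\overline{p_{2j-1}p_{2j}}$, and a piece parallel and close to $e_{j+1}$. Since $e_j$ and $e_{j+1}$ are themselves edges of $\opt_i$, any sufficiently thin tube alongside them is disjoint from every other edge of $\opt_i$. For the middle piece, the bullet preceding Lemma~\ref{lem:trace} records that $\overline{p_{2j-1}p_{2j}}$ lies on the edge $v_{j-1}v_j$ of $\hat\candidate_i$ and meets no other edge of $\opt_i$, so the same holds in a small neighbourhood. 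Thus no new edge crosses an edge of $\opt_i$. For collisions among new edges, the only possible conflict is between edges drawn parallel to a common furthest crossing edge $e^\ast$; these are handled by the explicit ``stacked one on top of the other'' prescription, consistently ordered because each $e_k$ was oriented so that $f'$ lies on its right. Corollary~\ref{cor:unique-trace} guarantees that tubes coming from different faces of $\hat\candidate_i$ live in disjoint regions and so cannot collide either.

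For the signature identity, I would count: the construction adds exactly one edge $\dir{\psi(j)\psi(j+1)}$ for each of the $m = s(f)$ directed edges $\dir{v_j v_{j+1}}$ of $f$, and these edges form a closed walk of length $m$ whose drawn curve lies in an arbitrarily thin neighbourhood of the cycle $C = \dir{p_0 p_1},\ldots, \dir{p_{2m-3}p_0}$ of Lemma~\ref{lem:trace}. Since $C$ bounds the connected region $\theta(f)$ with $f'$ on its right, and the new curve hugs $C$ from just outside $\theta(f)$ (on the $\opt_i$-side), in $\newopt$ the walk bounds a single face $f''$ whose interior contains $\theta(f)$. The vertex sequence of $f''$ therefore has length exactly $m$, giving $s(f'') = s(f)$.

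The main obstacle will be making the informal ``parallel and very close'' phrasing and the stacking recipe rigorous enough to conclude both that $f''$ is a \emph{single} well-defined face and that its vertex sequence has length exactly $m$ (rather than being shortened by accidental coincidences of endpoints, or lengthened by interference with another trace). This is precisely where observation (ii), the orientation convention on each $e_k$, and Corollary~\ref{cor:unique-trace} must be combined: (ii) separates the middle pieces from $\opt_i$, the orientation convention forces a consistent stacking order when several $v_k$ share a furthest crossing edge, and Corollary~\ref{cor:unique-trace} prevents different faces' tubes from meeting. Once these points are verified, both conclusions of the lemma follow directly.
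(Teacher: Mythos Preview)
Your proposal is correct and follows the same approach as the paper: the paper does not give a separate proof of this lemma but states that it ``follows from the discussion above,'' i.e., from the construction paragraph itself. Your write-up is a faithful (and somewhat more explicit) unpacking of exactly those points---the three-piece routing of each new edge avoids $\opt_i$ by hugging $e_j$, the middle segment, and $e_{j+1}$; the stacking convention handles collisions among new edges sharing a furthest crossing edge; and the one-to-one correspondence between directed boundary edges $\dir{v_jv_{j+1}}$ of $f$ and new edges $\dir{\psi(j)\psi(j+1)}$ yields $s(f'')=s(f)$. The one addition you make beyond the paper's discussion is invoking Corollary~\ref{cor:unique-trace} to separate tubes coming from distinct faces of $\hat\candidate_i$, which the paper leaves implicit.
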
   

\begin{figure}
  \centering
  \begin{subfigure}{0.32\textwidth}
    \centering
    \includegraphics[width=0.98\textwidth]{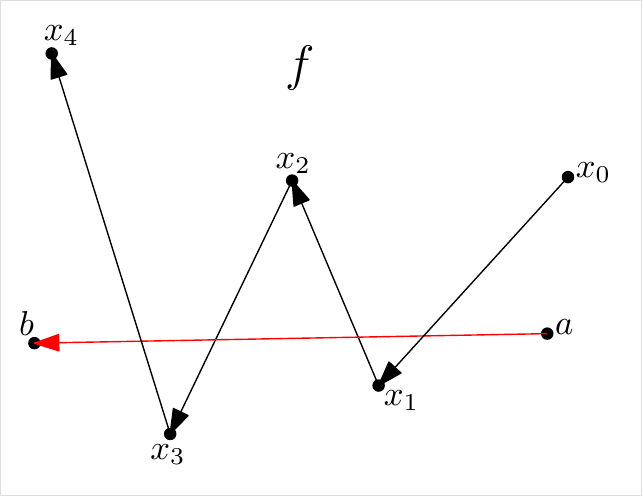}
    \caption{}
  \end{subfigure}
  \begin{subfigure}{0.32\textwidth}
    \centering
    \includegraphics[width=0.98\textwidth]{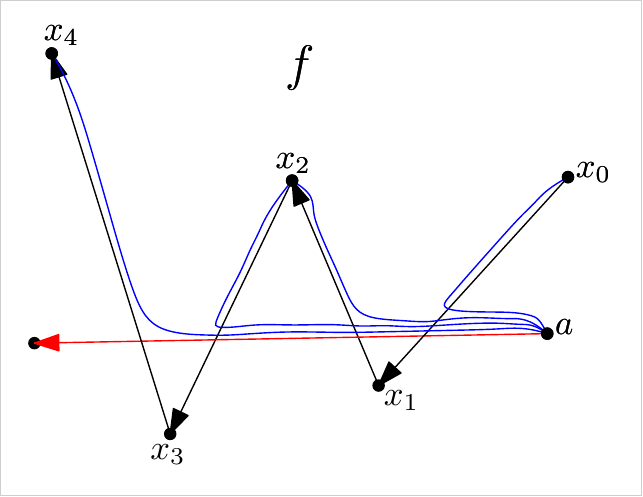}
    \caption{}
  \end{subfigure}
  \begin{subfigure}{0.32\textwidth}
    \centering
    \includegraphics[width=0.98\textwidth]{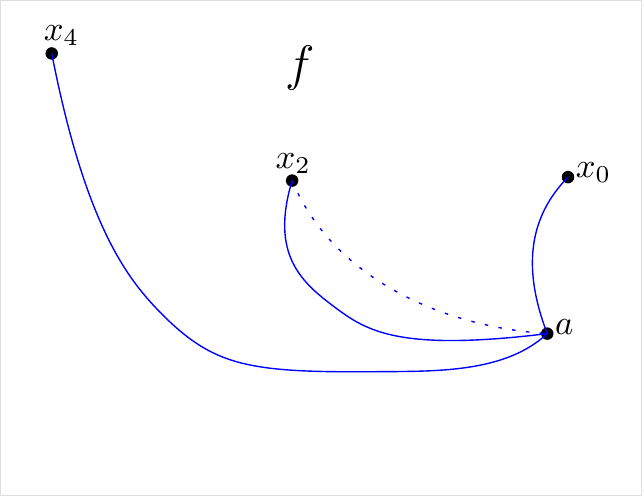}
    \caption{}
  \end{subfigure}
  \caption[Duplicate edges created in the construction of $\newedges{i}$.]{Duplicate edges created in the construction of $\newedges{i}$. (a) the edge $ab \in \opt_i$ intersects the face $f$ in $\hat{\candidate}_i$. The boundary vertex sequence of $f$ at this point is $\langle x_0, x_1, x_2, x_3, x_4 \rangle$ which contributes $4$ to $s(f)$. (b) Edges in $\newedges{i}$ are added and embedded as depicted in blue. (c) The edges of $\newedges{i}$. Note that there are multiple edges between $a$ and $x_2$, we show that we can delete one copy. This results on a potion on the boundary which also contributes $4$ to the signature.}
  \label{fig:duplicate-edges}
\end{figure}

However, we may create multiple copies of the same edges in $\newedges{i}\cup \opt_i$ (see Figure~\ref{fig:duplicate-edges}). To overcome this difficulty, we remove duplicate copies of the same edge. Removal of an edge from a planar graph will merge two faces $h$ and $h'$ into a single face $h''$ where $h''$ has a larger signature than either $h$ or $h'$. In Lemma~\ref{lem:embed}, we show that no edge with multiple copies participates in more than one dominating face. Therefore, removal of an edge only increases the signature of the dominating face it participates in, therefore, the face remains a dominating face.

\begin{figure}
  \centering
  \includegraphics[width=0.25\textwidth]{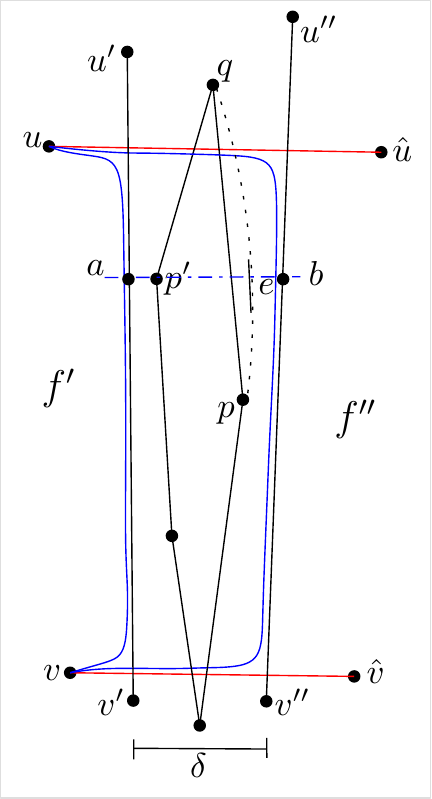}
  \caption[Trace between duplicate edges (Lemma~\ref{lem:embed}).]{There can be no trace between two duplicate edges in $\newedges{i}$ as shown in Lemma~\ref{lem:embed}.}\label{fig:unique-dominating-face}
\end{figure}
\begin{lemma}
\label{lem:embed}
For any edge $uv$ that has two or more copies in $\newopt$ between $u$ and $v$, no copy of $uv$ can participate in two distinct dominating faces. 
\end{lemma}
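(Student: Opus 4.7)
The plan is to argue by contradiction: suppose some copy $c$ of $uv$ in $\newopt$ participates in two distinct dominating faces $f_1''$ and $f_2''$ corresponding to faces $f_1, f_2 \in F(\hat{\candidate}_i)$. Since $c$ is incident to exactly two faces of the planar graph $\newopt$, these must be $f_1''$ and $f_2''$ themselves, on opposite sides of $c$. I would then show that at least one of these sides is incompatible with containing a trace, contradicting the assumption.

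First I would pin down the geometry of duplicates. By the construction immediately preceding the statement, a duplicate of $uv$ is created precisely when a single edge $e^* \in \opt_i$ (with $u$ and $v$ arising as the heads of the relevant directed furthest crossing edges) serves as the furthest crossing edge for more than one vertex in the boundary sequence of some face of $\hat{\candidate}_i$. All such duplicates are embedded parallel and very close to $e^*$, stacked to avoid intersections. Fixing $c$ together with its nearest duplicate $c'$ in the stacking, the region $R$ bounded by $c$, $c'$, and the short sub-paths at $u$ and $v$ is a thin sliver on one specific side of $e^*$.

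Next I would show that $R$ cannot contain $\theta(g)$ for any $g \in F(\hat{\candidate}_i)$. The boundary of $\theta(g)$ alternates between segments of the form $\overline{p_{2k}p_{2k+1}}$ contained in furthest crossing edges of $\opt_i$ and segments of the form $\overline{p_{2k+1}p_{2k+2}}$ lying along edges of $\hat{\candidate}_i$. If this boundary lay inside $R$, every crossing edge encountered would have to coincide with $e^*$ (since $R$'s only $\opt_i$-edge on its boundary is $e^*$), and the $\hat{\candidate}_i$-segments would be forced into the thin sliver. Using observation~(ii) on the alternation of heads of furthest crossing edges, together with Lemma~\ref{lem:intersection}, this would collapse $\theta(g)$ to a degenerate configuration inconsistent with $g$ being a non-triangulated face whose boundary includes multiple convex vertices with distinct crossing structure. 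Hence $R$ is not a dominating face, and consequently the copy $c$ (which is adjacent to $R$ on one of its two sides) can participate in at most one dominating face.

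The main obstacle is formalizing the final geometric step without merely pointing at Figure~\ref{fig:unique-dominating-face}. Concretely, one needs to rule out the possibility that $\theta(g)$ fits in the sliver by a careful case analysis of how the $p_k$-cycle is routed: any segment on a $\hat{\candidate}_i$-edge either has to cross $c$ or $c'$ (violating planarity of the $\newopt$ embedding) or must lie flush against $e^*$, which forces the associated convex vertices of $g$ to coincide with $u$ or $v$, ultimately contradicting Corollary~\ref{cor:unique-trace} or the planarity of $\hat{\candidate}_i$. Once this is settled, the lemma follows immediately.
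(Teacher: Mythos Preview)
Your overall shape---contradiction, identify the thin region between duplicates, argue no trace of another face can lie there---matches the paper. But the way you propose to carry out the last step is the wrong kind of argument, and it is not clear it can be made to work.

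You try to exclude a trace $\theta(g)$ from the sliver $R$ using only the combinatorics of the $p_k$-cycle, observation~(ii), Lemma~\ref{lem:intersection}, and planarity. None of these tools give you a size constraint on $R$ or on $g$. A priori there is no topological obstruction to a small non-triangulated face $\tilde f\in F(\hat{\candidate}_i)$ sitting entirely inside the region between $u'v'$ and $u''v''$: its boundary need not cross $c$ or $c'$, and its crossing edges need not coincide with any particular edge of $\opt_i$. Your case split ``cross $c$ or $c'$, or lie flush against $e^*$'' is not exhaustive, because edges of $\tilde f$ can simply be short and live strictly inside the region.

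The paper's proof closes this gap with metric and algorithm-specific information that you never invoke. The region $R$ is taken to be bounded by the two $\hat{\candidate}_i$-edges $u'v'$, $u''v''$ and the two $\opt_i$-edges $u\hat u$, $v\hat v$; since edges of $\opt_i$ have length at most $\delta$, any horizontal slice of $R$ has width at most $\delta$. If a vertex $p'$ of a hypothetical face $\tilde f$ lies in $R$, one then finds an edge of the chain $\chain(p,q)$ (coming from Phase~1 of the algorithm) that is $\delta$-visible from $p'$, and Lemma~\ref{lem:delta-visibility} forces $p'$ to be an endpoint of that chain, contradicting the choice of $p'$. In other words, the exclusion of traces from $R$ is not a planarity fact; it is a consequence of the $\delta$-visibility structure built into $\hat{\candidate}_i$ by the algorithm. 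Your sketch needs to bring in the length bound on $\opt_i$-edges and Lemma~\ref{lem:delta-visibility} to go through.
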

\begin{proof}
Let there be two copies of the edge $uv$ embedded in $\newopt$. We assume both these copies were in the set $\newedges{i}$. The case where only one edge
was added in $\newedges{i}$ has an identical argument. We assume that the two copies were added to $\newedges{i}$ due to two edges $u'v'$ and $u''v''$ of $\hat{\candidate}_i$ and that these edges support the trace of faces $f_1, f_2\in F(\hat{\candidate}_i)$. Let $uv$ be the edge that was added to $\newedges{i}$ for both $u'v'$ and $u''v''$. Note that $u'v'$ and $u''v''$ appear on the boundary of $f_1$ and $f_2$. Since the edge $uv$ was added for
both $u'v'$ and $u''v''$, from our construction, the edges $u'v'$ and $u''v''$ will intersect two edges $e_u = u\hat{u}$ and $e_v
= v\hat{v}$ that are on the boundary of some face $f'\in F(\opt_i)$ . See Figure~\ref{fig:unique-dominating-face} for the construction. Without loss of generality, assume $u\hat{u}$ is an edge parallel to the $x$-axis. To prove this lemma, it suffices if we show that there is no trace of any other face in $F(\hat{\candidate}_i)$ inside the region $R$ enclosed by $u'v'$ on the left, $u''v''$ on the right, $u\hat{u}$ from the top and $v\hat{v}$ from the bottom. Suppose, we can prove this claim, then deleting one copy of $uv$ (say the one corresponding to the trace of $f_2$) will only merge the dominating face $f_2'$ of $f_2$ in $\tilde{\opt_i}$ with a face that does not contain any trace. Hence, despite this deletion, we will continue to have a unique mapping of any face $f \in F(\hat{\candidate_i})$ to a dominating face of $\tilde{\opt_i}$. 

For the sake of contradiction, let there be another non-triangulated face $\tilde{f} \in \hat{\candidate}_i$ whose trace is in the region
$R$.  In this case, there must be at least two vertices of $\tilde{f}$ that lie inside $R$\ (otherwise either  $\tilde{f}$ will  be a triangle and $\theta(\tilde{f})$ is not defined or $\theta(\tilde{f})$ will not intersect $R$). Consider any such point $p'$ that is not the topmost or the bottommost vertex of $\tilde{f}$. Draw a horizontal line passing through $p'$. Let this line intersect $u'v'$ and $u''v''$ at $a$ and $b$ respectively. Since both $u\hat{u}$ and $v\hat{v}$ has a length at most $\delta = \frac{\rand 3^{i-1}}{\sqrt{2}}$ and since $u\hat{u}$ is horizontal, it follows that $\|ab\|\le \delta$. Without loss of generality, suppose we are inside the face $\tilde{f}$ as we begin to walk from $p'$ towards $b$. Let $pq$ be the first edge of $\tilde{f}$ that we encounter as we walk from $p'$ towards $a$ and let $q'$ be the intersection point of $p'a$ with $pq$. Clearly,  $p \neq p'$ and $q\neq p'$. By construction, the edge $pq$ is $\delta$-visible to $p'$ with respect to the face $\tilde{f}$.  Let $\tilde{f}'$ be the face of $\candidate_i$ which, after the execution of Phase 1 of the algorithm, created the face $\tilde{f}$. If we continue to walk from $q'$ towards $a$, we will encounter an edge $e$ of the chain $\chain(p,q)$ in $\tilde{f}'$. This follows from the fact that the chain $\chain(p,q)$ does not intersect with $u''v''$ and the region $E(p,q)$ contains no points of $P$. \ Therefore $e$  is  $\delta$-visible to $p'$.  From Lemma~\ref{lem:delta-visibility}, it follows that the end points of $e$ should belong to a triangulated chain of $\tilde{f'}$ that has  $p'$ as one of its end points, implying that either $p'=p$ or $p'=q$, leading to a contradiction.\  A similar argument extends to all other cases.
\end{proof}
   
It follows that $\newopt$ is a planar graph that dominates $\hat{\candidate}_i$. Thus, by Corollary~\ref{cor:size}, Invariant~\ref{inv:cardinality} follows. Since both Invariant~\ref{inv:criticallength} (after proving properties (P1)--(P4) in Section~\ref{subsec:proof-grid-props}) and Invariant~\ref{inv:cardinality} holds, this completes the proof of Theorem~\ref{thm:approx}.


\subsection{Proving Properties of a Maximal $\PSLG$ in $\adjgraph{i}$}
\label{subsec:proof-grid-props}

In the following section we prove (P1)--(P4) presented in Section~\ref{subsec:grid-properties}, which were used in the proof of the invariants. First recall that by Lemma~\ref{lem:2-or-3-cells-apart}, we know that vertices in $1$-chains are at most $3$ cells apart, and vertices participating in a $2$-chain are at most $4$ cells apart. This restricts the number of possible configurations of cells in which vertices participating in either type of chain can appear. We can deduce certain properties that arise due to the possible configurations, expressed in Lemma~\ref{lem:three-cell-ch} and Lemma~\ref{lem:four-cell-ch}, which will be used in the proof of the properties.

\begin{lemma}
\label{lem:three-cell-ch}
  Let $C_1$, $C_2$, and $C_3$ be three cells in $\grid{i}$ such that $C_1 \in N(C_2)$ and $C_2 \in N(C_3)$ (note $C_1$ and $C_3$ are $3$ cells apart). Let $H$ be the region enclosed by the convex hull of $C_1$, $C_2$ and $C_3$. Then,
  \begin{itemize}
    \item[(a)] if $C_3 \in N(C_1)$, then $H \subset \N(C_1) \cap \N(C_2) \cap \N(C_3)$,
    \item[(b)] otherwise $H \subset (\N(C_1) \cap \N(C_2)) \cup (\N(C_2) \cap \N(C_3))$.
  \end{itemize}
  In particular, in either case $H \subset \N(C_2)$.
\end{lemma}
\begin{proof}
  For both (a) and (b) it is not difficult to establish the result by considering all possible configurations of three cells which satisfy the necessary conditions. Let $C_1$, $C_2$, and $C_3$ be as above.
  
  Suppose (a) $C_3 \in N(C_1)$, so these three cells are mutually adjacent. Not counting symmetry, there are only three possible cell configurations that allow three cells to be mutually adjacent: either the three cells are equal (Figure~\ref{subfig:mutually-adj:all-same}), or exactly two of the cells are equal, and the third is adjacent (Figure~\ref{subfig:mutually-adj:two-same}), or all three cells are distinct (Figure~\ref{subfig:mutually-adj:three-distinct}). In all three cases, the convex hull $H$ of the three cells is contained in the neighborhood of each cell, i.e. $H \subset \N(C_1) \cap \N(C_2) \cap \N(C_3)$.
  
  Now suppose (b) $C_3 \notin N(C_1)$. Then there are exactly four possible configurations (not counting symmetry) to arrange the cells, each of which are depicted in Figure~\ref{fig:three-cell-configs}. In each of these cases it is not difficult to see that any point in $H$ must be contained in the neighborhood of at least two cells: $H \subset (\N(C_1) \cap \N(C_2)) \cup (\N(C_2) \cap \N(C_3))$. Both (a) and (b) immediately implies $H \subset \N(C_2)$.
  \begin{figure}
    \centering
    \begin{subfigure}{0.18\textwidth}
      \centering
      \includegraphics[width=0.8\textwidth]{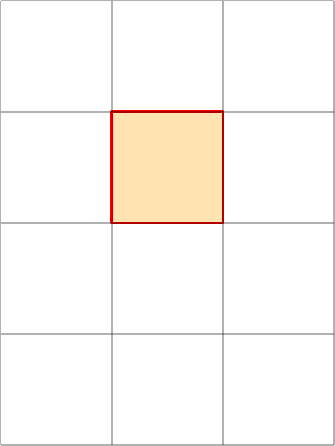}
      \caption{}
      \label{subfig:mutually-adj:all-same}
    \end{subfigure}
    \begin{subfigure}{0.18\textwidth}
      \centering
      \includegraphics[width=0.8\textwidth]{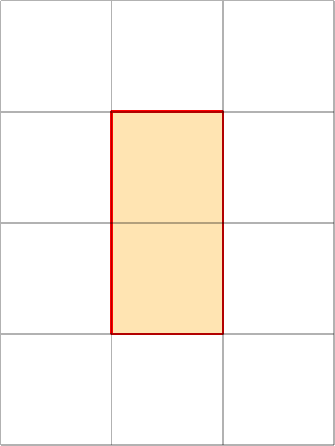}
      \caption{}
      \label{subfig:mutually-adj:two-same}
    \end{subfigure}
    \begin{subfigure}{0.23\textwidth}
      \centering
      \includegraphics[width=0.9\textwidth]{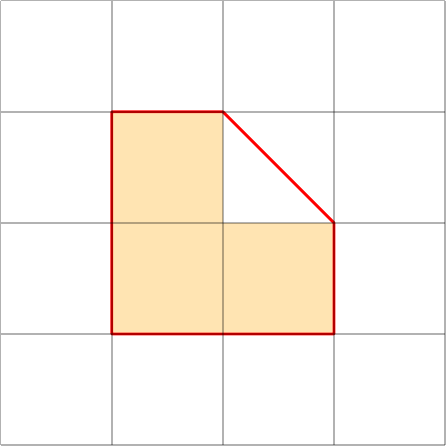}
      \caption{}
      \label{subfig:mutually-adj:three-distinct}
    \end{subfigure}
    \caption{Three mutually adjacent cells.}\label{fig:mutually-adjacent-cells}
  \end{figure}
  \begin{figure}
    \centering
    \begin{subfigure}{0.20\textwidth}
      \centering
      \includegraphics[width=0.75\textwidth]{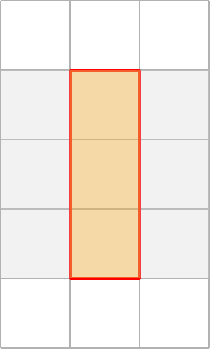}
      \caption{}
    \end{subfigure}
    \begin{subfigure}{0.22\textwidth}
      \centering
      \includegraphics[width=0.85\textwidth]{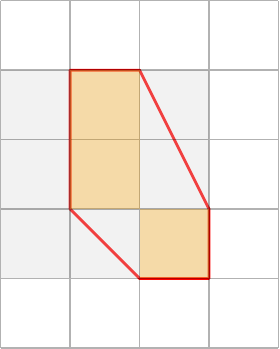}
      \caption{}
    \end{subfigure}
    \begin{subfigure}{0.22\textwidth}
      \centering
      \includegraphics[width=0.85\textwidth]{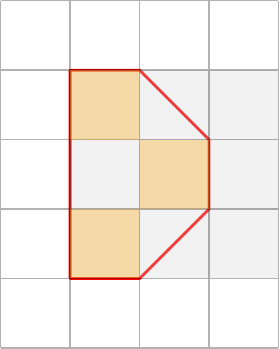}
      \caption{}
    \end{subfigure}
    \begin{subfigure}{0.27\textwidth}
      \centering
      \includegraphics[width=0.85\textwidth]{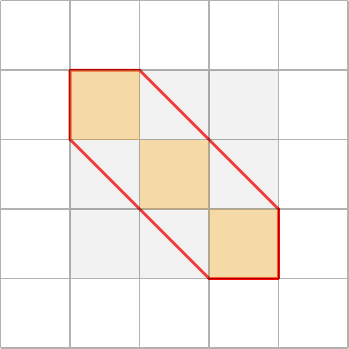}
      \caption{}
    \end{subfigure}
    \caption[Possible three cell apart cell configurations.]{Possible configurations for adjacent cells that are three cells apart.}\label{fig:three-cell-configs}
  \end{figure}
\end{proof}

\begin{lemma}
\label{lem:four-cell-ch}
  Let $C_1$, $C_2$, $C_3$, and $C_4$ be four cells in $\grid{i}$ such that $C_1 \in N(C_2)$, $C_2 \in N(C_3)$, $C_3 \in N(C_4)$ (so $C_1$ and $C_4$ are $4$ cells apart). Let $H$ be the region enclosed by the convex hull of $C_1$, $C_2$, $C_3$, and $C_4$. Then $H \subset (\N(C_2) \cup \N(C_3))$ in all but two cases.
\end{lemma}
\begin{proof}
  We establish this claim by considering all possible cell configurations. If the four cells are not distinct, this claim is reduced to Lemma~\ref{lem:three-cell-ch}. Therefore, consider only the twelve possible configurations (not counting symmetry) of four distinct cells satisfying these conditions depicted in Figure~\ref{fig:four-cell-configs}. Except for the two depicted in Figure~\ref{subfig:four-cell-prob1} and Figure~\ref{subfig:four-cell-prob2}, the claim holds for all other cases.
  \begin{figure}
    \centering
    \begin{subfigure}{0.16\textwidth}
      \centering
      \includegraphics[width=0.85\textwidth]{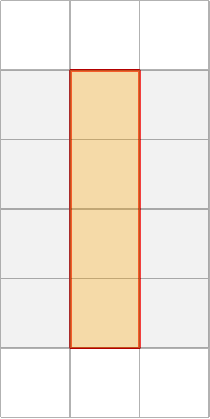}
      \caption{}
    \end{subfigure}\hspace{3mm}
    \begin{subfigure}{0.20\textwidth}
      \centering
      \includegraphics[width=0.85\textwidth]{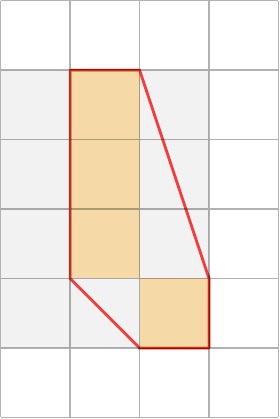}
      \caption{}
    \end{subfigure}\hspace{3mm}
    \begin{subfigure}{0.20\textwidth}
      \centering
      \includegraphics[width=0.85\textwidth]{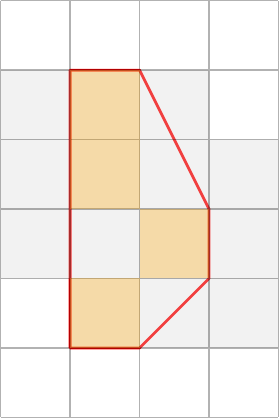}
      \caption{}
    \end{subfigure}\hspace{3mm}
    \begin{subfigure}{0.20\textwidth}
      \centering
      \includegraphics[width=0.85\textwidth]{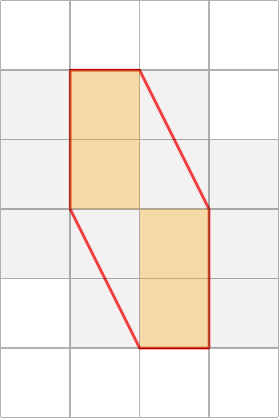}
      \caption{}
    \end{subfigure}\hspace{3mm}
    
    \begin{subfigure}{0.22\textwidth}
      \centering
      \includegraphics[width=0.85\textwidth]{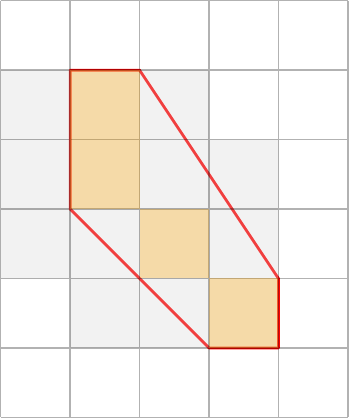}
      \caption{}
    \end{subfigure}
    \begin{subfigure}{0.27\textwidth}
      \centering
      \includegraphics[width=0.85\textwidth]{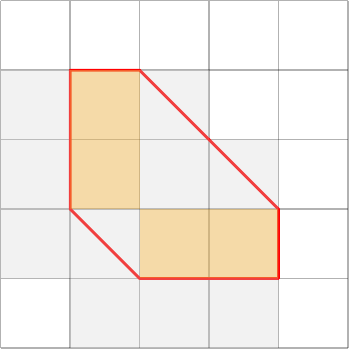}
      \caption{}
    \end{subfigure}
    \begin{subfigure}{0.20\textwidth}
      \centering
      \includegraphics[width=0.85\textwidth]{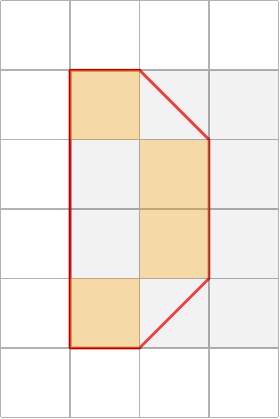}
      \caption{}
    \end{subfigure}
    \begin{subfigure}{0.22\textwidth}
      \centering
      \includegraphics[width=0.85\textwidth]{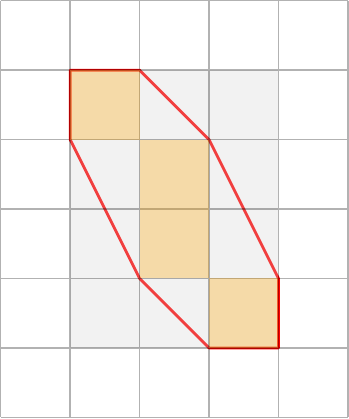}
      \caption{}
    \end{subfigure}
    
    \begin{subfigure}{0.20\textwidth}
      \centering
      \includegraphics[width=0.85\textwidth]{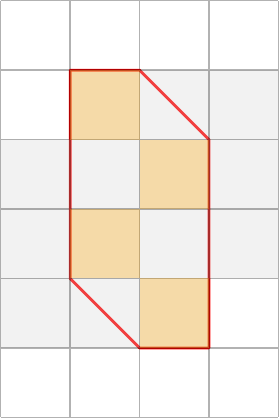}
      \caption{}
    \end{subfigure}
    \begin{subfigure}{0.23\textwidth}
      \centering
      \includegraphics[width=0.85\textwidth]{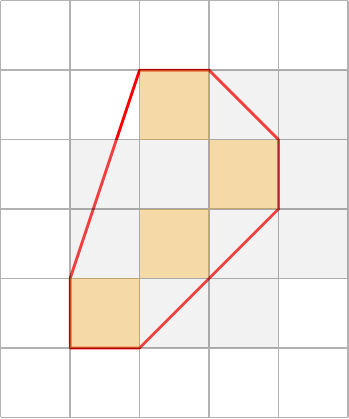}
      \caption{}
      \label{subfig:four-cell-prob1}
    \end{subfigure}
    \begin{subfigure}{0.24\textwidth}
      \centering
      \includegraphics[width=0.85\textwidth]{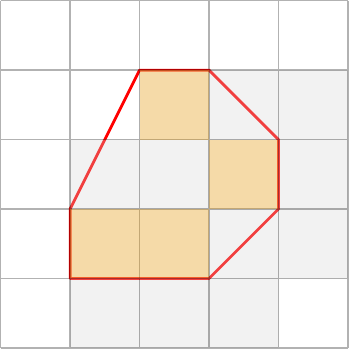}
      \caption{}
      \label{subfig:four-cell-prob2}
    \end{subfigure}
    \begin{subfigure}{0.24\textwidth}
      \centering
      \includegraphics[width=0.85\textwidth]{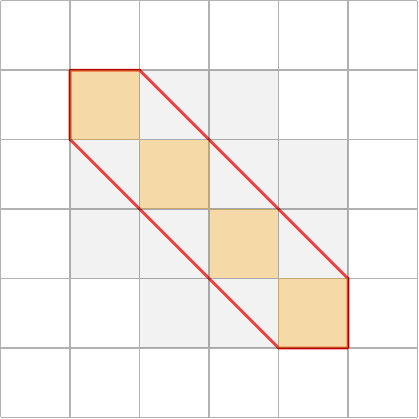}
      \caption{}
    \end{subfigure}
    \caption[Possible four cell apart cell configurations.]{Possible configurations for adjacent cells that are four cells apart.}\label{fig:four-cell-configs}
  \end{figure}
\end{proof}

We now describe a ``sweeping" procedure which will be re-used in subsequent proofs of the properties in order to show that a particular region is void of input points of $P$ other than those on the boundary of the region. Fix a vertex $v$, and an edge $xy$. We \emph{sweep} $\overline{vx}$, starting at $x$, along $xy$ towards $y$ if we define $\alpha(t) = x(1-t) + yt$ for $t \in [0, 1]$ and we consider the line segment $\overline{v\alpha(t)}$ for $t$ ranging from $0$ to $1$. We call $\overline{v\alpha(t)}$ the \emph{sweep line}. Therefore, one can imagine the line starting at $\overline{vx}$ for $t=0$ sweeping across a face by sliding $x$ along $\overline{xy}$, and reaching $\overline{vy}$ at time $t=1$. The starting and ending points need not be endpoints of an edge. If $p, q \in \overline{xy} \cup \{x, y\}$ is the starting and ending point, respectively, then one simply defines $\alpha(t)$ to be the appropriate parameterization of $\overline{pq} \subset \overline{xy}$. In this case, we say that, starting at $q$, we sweep $\overline{vp}$ along $xy$ towards $q$. 

We now proceed to prove (P1) through (P4). Let $\candidate$ be a maximal $\PSLG$ with respect to $\adjgraph{i}$. Let $f$ be a non-triangulated face in $\candidate$ with the boundary vertex sequence $\vertseq(f)$.

\begin{description}
  \item[(P1)]  Suppose $v_j \in \vertseq(f)$ is convex. Then $v_{j-1}$, $v_j$, and $v_{j+1}$ are in three distinct cells $C_{v_{j-1}}$, $C_{v_j}$, and $C_{v_{j+1}}$ of $\grid{i}$, respectively, and $C_{v_{j-1}} \notin N(C_{v_{j+1}})$.
\end{description}
\begin{proof}
  The fact that $v_{j-1}$, $v_j$, and $v_{j+1}$ appear consecutively in $\vertseq(f)$ implies that the edges $v_{j-1}v_j$ and $v_jv_{j+1}$ are in $\candidate$, hence $v_j \in \N(v_{j-1})$ and $v_j \in \N(v_{j+1})$. Suppose, for the sake of contradiction, that $v_{j-1} \in \N(v_{j+1})$, then $v_j$, $v_{j-1}$, and $v_{j+1}$ are in mutually neighboring cells which satisfy the conditions of Lemma~\ref{lem:three-cell-ch}(a). So $H$, the region enclosed by the convex hull of $C_{v_j-1}$, $C_{v_j}$, and $C_{j+k}$, is contained in $\N(v_{j-1}) \cap \N(v_j) \cap \N(v_{j+1})$. Since any point in the triangle formed by $v_{j-1}$, $v_j$, and $v_{j+1}$ must be contained in $H$, it follows that such a point is also contained in $\N(v_{j-1}) \cap \N(v_j) \cap \N(v_{j+1})$.
  Sweep the line segment $\overline{v_{j-1}v_j}$ along $v_jv_{j+1}$ towards $v_{j+1}$. Let $t^*$ be the smallest $t$ such that the sweep line $\overline{v_{j-1}\alpha(t^*)}$ intersects a vertex $v^*$. Then $v^*$ is visible to both $v_{j-1}$ and $v_j$, otherwise $v^*$ could not have been the first vertex $\overline{v_{j-1}\alpha(t)}$ intersects.
  Additionally, since $v^* \in H$, $v^*$ must be in $\N(v_j)$ and $\N(v_{j-1})$. Since $\candidate$ is maximal, it follows that both the edges $v_{j-1}v^*$ and $v_jv^*$ will be present in $\candidate$, hence $v_{j-1}$ and $v_j$ do not appear directly after one another in $\vertseq(f)$, contrary to assumption, so this case is impossible. If no such $v^*$ is encountered, then it implies that $v_{j+1}$ is visible to $v_{j-1}$, hence the edge $v_{j-1}v_{j+1}$ is present in $\candidate$, and $v_{j-1}$, $v_j$, and $v_{j+1}$ forms a triangulated face, contradicting the assumption that $v_{j-1}$, $v_j$, and $v_{j+1}$ appears in this order in $\vertseq(f)$. Therefore, one may conclude that $v_{j-1} \notin \N(v_{j+1})$, and hence $v_{j-1}$, $v_j$, and $v_{j+1}$ appears in three distinct cells, as claimed.
\end{proof}

\begin{description}
  \item[(P2)] Suppose $v_{j} \in \vertseq(f)$ and $v_{k}v_{k+1}$ is any edge on the boundary of $f$ such that $v_k \in \N(v_{j})$ (resp. $v_{k+1} \in \N(v_j)$) and $v_{k}v_{k+1}$  is visible to $v_{j}$ in $\vertseq(f)$. Then,
  \begin{itemize}
    \item[(i)] the chain $\chain$ from $v_j$ to $v_{k+1}$ (resp. chain $\tilde{\chain}$ from $v_k$ to $v_j$) in $\vertseq(f)$ is a $1$-chain, and,
    \item[(ii)] $v_{k+1}$ is a forward (resp. $v_k$ is backward) support vertex for  every vertex from $v_j$ to $v_{k-1}$ (resp. from $v_{k+2} $ to $v_j$).
  \end{itemize}
\end{description}
\begin{proof}
  We begin by showing (i), and (ii) follows. Suppose $v_k \in \N(v_j)$. Let $T$ be the triangle enclosed by the convex hull of the three points $v_j$, $v_k$, and $v_{k+1}$, which is contained in the convex hull of the three cells containing $v_j$, $v_k$, and $v_{k+1}$. Note that these three vertices satisfy the conditions of Lemma~\ref{lem:three-cell-ch}, therefore, $T \subset ((\N(v_j) \cap \N(v_k)) \cup (\N(v_k) \cap \N(v_{k+1}))$. Since $v_kv_{k+1}$ is visible to $v_j$, there exists a point $q \in \overline{v_kv_{k+1}}$ such that $q$ is visible to $v_j$, and the line segment $\overline{v_jq}$ does not intersect any edges of $\candidate$ and divides the triangle $T$ into two triangles, $T_1$ and $T_2$. Let $T_1$ be the triangle formed by $v_j$, $v_k$, and $q$, and let $T_2$ be the triangle formed by $v_j$, $q$, and $v_{k+1}$. We show that all input points in $T$ must be boundary vertices of $f$ which appear in $\vertseq(f)$.
  
  First we show that the triangle $T_2$ does not contain any input points in its interior. Starting at $q$, sweep the line $\overline{v_jq}$ along $v_kv_{k+1}$ towards $v_{k+1}$. Let $t^*$ the smallest $t$ such that the sweep line $\overline{v_j\alpha(t^*)}$ intersects an input point $u^*$; $u^*$ is contained in $T_2 \subset \N(v_k)$. Then either $v_k$ is visible to $u^*$, or there is an edge blocking $v_k$ from $u^*$ but an endpoint $u$ of such an edge must be contained in $T_1$ (otherwise the sweep line would have encountered it), and $u \in \N(v_k)$. Then either edge $v_ku^*$ or $uu^* \in \candidate$ since $\candidate$ is maximal, however, such an edge would intersect $\overline{v_jq}$, a contradiction. Therefore, the sweep line reaches $t=1$ without intersecting points, which implies that $T_2$ does not contain any input points in its interior, and therefore $v_j$ is visible to $v_{k+1}$ in $\vertseq(f)$. It follows that if all vertices from $v_{j+1}$ to $v_{k-1}$ is reflex, all these vertices would be visible to $v_{k+1}$, which would allow one to conclude $\chain$ is a $1$-chain.

  Now we show that the vertices between $v_j$ and $v_k$ must be reflex and also, $v_k$ is the only convex vertex between $v_j$ and $v_{k+1}$ in $\vertseq(f)$. Starting at $q$ sweep $\overline{v_jq}$ along $v_kv_{k+1}$ towards $v_k$. Let $t'$ be the smallest $t$ such that the sweep line $\overline{v_j\alpha(t')}$ intersects a vertex $w_0$. If  $w_0=v_{k}$, then $v_k$ is visible to $v_j$ and the edge $v_kv_j$ must exist in $\candidate$. This will mean that the vertex $v_k$ is convex and therefore the chain $\chain=\chain(v_j,v_{k+1})$ is a  $1$-chain. If  $w_0\neq v_k$, it follows that $w_0$ is visible to $v_j$. By Lemma~\ref{lem:three-cell-ch}, $w_0 \in (\N(v_j) \cap \N(v_k)) \cup (\N(v_k) \cap \N(v_{k+1}))$. We will show that  $w_0 \not\in \N(v_k) \cap \N(v_{k+1})$, since otherwise $w_0$ is visible to $v_{k+1}$ (by an identical sweeping argument as before) and the edge $w_0v_{k+1} \in \candidate$, a contradiction since $w_0v_{k+1}$ intersects $\overline{v_jq}$. Thus, $w_0 \in \N(v_j) \cap \N(v_k)$. Since $\candidate$ is maximal, $v_jw_0 \in \candidate$ and so $w_0$ is the first vertex after $v_j$ in $\vertseq(f)$, i.e., $w_0=v_{j+1}$. By construction, the edge $v_kv_{k+1}$ is also visible to $v_{j+1}$. Therefore, the   vertex that appears after  $v_{j+1}$ in $\vertseq(f)$ must lie to the left of the edge $v_jv_{j+1}$ implying $v_{j+1}$ is a reflex vertex.  Since all the conditions satisfied by $v_j$ is also satisfied by $v_{j+1}$, we can apply the same   sweeping arguments to show that $v_{k+1}$ is visible to $v_{j+1}$, either the next vertex of $v_{j+1}$ in $\vertseq(f)$ is  $v_k$ which is a convex vertex or the next vertex $v_{j+2}$ is a reflex vertex with $v_{j+2} \in \N(v_k)$ and $v_{j+2}$ has the edge $v_kv_{k+1}$ visible to it.   In this way, all the vertices between $v_j$ and $v_k$  can be shown to be reflex, contained in $\N(v_k)$ and $v_k$ is shown to be a convex vertex. Furthermore, every vertex between $v_j$ and $v_k$ is also visible to $v_{k+1}$. So, the chain $\chain(v_j,v_{k+1})$ is a $1$-chain, which shows (i). It now follows immediately that $v_k$ is the forward convex vertex for every vertex from $v_j$ to $v_{k-1}$, and therefore by definition, $v_{k+1}$ is the forward support vertex of these vertices. A symmetric argument holds for the other case, i.e., $v_{k+1} \in \N(v_j)$ (in which case $v_k$ will be the backwards support vertex of all vertices between $v_{k+1}$ and $v_j$ in $\vertseq(f)$).
\end{proof}

\begin{description}
  \item[(P3)] For any chain $\chain(v, y)$ from $v$ to $y$ in $\vertseq(f)$,   
  \begin{itemize}
    \item[(i)] if $\chain(v, y)$ is a $1$-chain with $v'$ as its only convex vertex, then the region $E = E(v, y)$ is contained in $\N(v')$, i.e., $E \subset \N(v')$, and $E$ contains no input points of $P$.  
    \item[(ii)] if the chain $\chain(v, y)$ is a $2$-chain with $v'$ and $v''$ as the two convex vertices, then the region $E = E(v, y)$ is such that $(E\cap(\N(v')\cup \N(v''))\cap P $ contains no points of $P$. In only two cases, $E\not\subset(\N(v')\cup \N(v''))$ (see Figure~\ref{fig:p3-four-cell-exception} and Figures~\ref{subfig:four-cell-prob1} and~\ref{subfig:four-cell-prob2}) and may contain  points of $P$. In all other cases, $E$ contains  no  points of $P$.
  \end{itemize}
\end{description}

\begin{proof} 
  We first show (i). Suppose $\chain(v, y) = \chain$ is a $1$-chain with $v'$ as its only convex vertex. Let $v_{prev}$ and $v_{next}$ be the vertex that appears before and after $v'$ in $\vertseq(f) $ . By (P1), we know that $v_{prev} \not\in \N(v_{next})$. Let $E = E(v, y)$.  Since $\chain$ is a $1$-chain, it follows that $v$ is visible to $y$. Let $H$ be the region enclosed by the convex hull of the three cells containing $v$, $v'$, and $y$. By definition of a $1$-chain, these three cells satisfy the conditions for Lemma~\ref{lem:three-cell-ch}, and hence $H \subset \N(v')$. Since $E \subset H$, it follows that $E \subset \N(v')$. We partition $E$ into three regions $E_1,E_2$ and $E_3$ as follows (See Figure~\ref{fig:p3proof}(i)).  Consider a ray $r$ that starts from $v'$ in the direction going towards $v_{prev}$ and another ray $r'$ from $v'$ going towards $v_{next}$. Let the $r$ and $r'$ intersect $vy$ at $y'$ and $y''$ respectively. Then, we set the region $E_1$ to be the triangle formed by $y', v'$ and $y''$. We set $E_2$ to be a region bounded by segments $\overline{v_{prev}y'}$ and $\overline{y'v}$ on the two sides and the reflex chain $\chain(v,v_{prev})$ on the third side. Similarly, we set $E_3$ to be the region bounded by segments $\overline{v_{next}y''}$ and $\overline{y''y}$ and the chain $\chain(v_{next},y)$  on the third side. If $v_{prev} =v$ (resp. $v_{next}=y$), then the region $E_2$(resp. $E_3$) will be an empty region.  We claim that none of the  three regions will  contain any points of $P$. 

Note that $E_1$ is a triangle. If $E_1$ contains a point $p \in P$ then $p$ will be visible to $v'$ and also in the neighborhood of $v'$ (since $E_1 \subset E \subset \N(v')$). Since $p$ does not participating in the chain $\chain$, there is no edge from between $p$ and $v'$. Therefore, we can add the edge from $p$  to $v'$ contradicting the fact that $\candidate$ is a maximal $\PSLG$.  \ 

The arguments for regions $E_2$ and $E_3$ are symmetric.  We will present the proof for  $E_2$. $E_2$ is a region bounded by the reflex chain $\chain(v, v_{prev})$ on one side and the segments $\overline{v_{prev}y'}$ and $\overline{y'v}$ on the other two sides.  By construction, $\overline{v_{prev}y'} \subset \overline{v'y'}$.  Since $v' \in \N(y)$,  $v_{prev} \not\in \N(y)$ and  $\N(y)$ is convex, it follows that the entire segment $\overline{v_{prev}y'}$ lies outside $\N(y)$. Since $y \in \N(y)$ and $y' \not\in \N(y)$, it follows that $\overline{y'v}$ also lies outside $\N(y)$. Note that for any point $p$ on an edge of the reflex chain $\chain(v,v_{prev})$, $\overline{yp}$ intersects $v_{prev}y'$.  Since $y \in \N(y), y' \not\in \N(y)$ and from convexity of $\N(y)$ it follows that the reflex chain $\chain(v, v_{prev})$ also does not intersect $\N(y)$.  Consequently the region $E_2$   does not intersect $\N(y)$.
 From Lemma~\ref{lem:three-cell-ch}(b), therefore, $E_2 \subset \N(v')\cap \N(v)$. It is easy to see that any four points $a,b,c,d$ such that  $a,b \in \N(c)\cap \N(d)$. Then, $a$ and $b$ will have an edge in the adjacency graph. Any point $x$ of $P$ inside $E_2$ will be visible to some vertex $q$ in the reflex chain $\chain(v,v_{prev})$.  Since every point of $\chain(v,v_{prev})$ including $q$ is in $\N(v)\cap\N(v')$ and $x$ is also inside $\N(v)\cap\N(v')$, there is an edge between $x$ and $q$ in the adjacency graph.\  Therefore, the edge $xq$ can be added to  $\candidate$ leading to a contradiction that $\candidate$  is a  maximal $\PSLG$. 

\begin{figure}
  \centering
  \includegraphics[height=6cm]{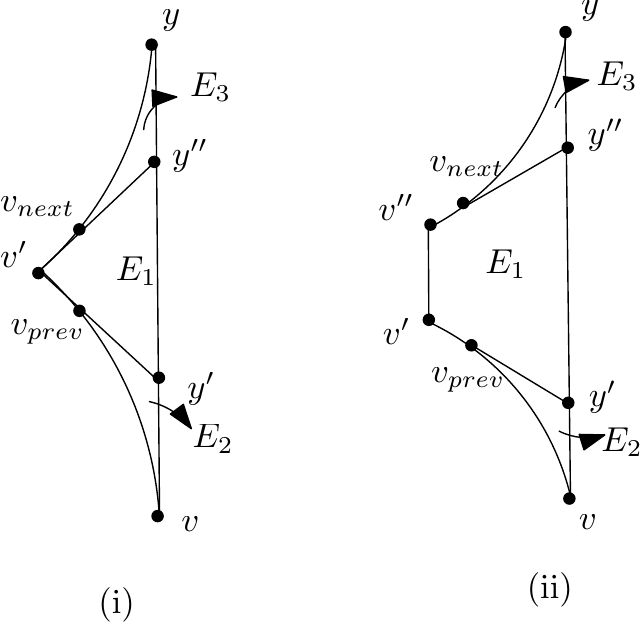}
  \caption[$E$ is divided into $E_1, E_2$ and $E_3$ for proof of (P3)(i) and (ii).]{Proof of (P3)(i) and (ii)}\label{fig:p3proof}
\end{figure}

   To show (ii), suppose $\chain$ is a $2$-chain such that $v'$ and $v''$ are its two convex vertices. Let $H$ be the region enclosed by the convex hull of the four cells containing $v$, $v'$, $v''$, and $y$. Then these four cells satisfy the conditions for Lemma~\ref{lem:four-cell-ch}, so for all but two cases, $H \subset (\N(v') \cup \N(v''))$. Then $E \subset H \subset (\N(v') \cup \N(v''))$, except for two cases. Let $v_{prev}$ be the vertex that appears before $v'$ and $v_{next}$ be the vertex that appears after $v''$ in $\vertseq(f)$. We can partition $E$ into three regions, namely $E_1$, $E_2$ and $E_3$ as follows (See Figure~\ref{fig:p3proof}(ii)). Draw a ray $r_1$ from $v'$ going towards $v_{prev}$. Let $y'$ be the intersection point of this ray with $\overline{yv}$. Draw another ray $r_2$\ from $v''$ going towards $v_{next}$ and let $y''$ be the intersection point of this ray with $\overline{yv}$. $E_1$ is the quadrilateral formed by the segments $\overline{v'v''}$, $\overline{v'y'}$, $\overline{v''y''}$ and $\overline{y'y''}$. $E_2$ is the region bounded by the chain $\chain(v, v_{prev})$ and segments $\overline{v_{prev}y'}$ and $\overline{y'v}$ whereas $E_3$ is formed by the chain $\chain(v_{next},y)$ and edges $\overline{v_{next}y''}$, $\overline{y''y}$. Since $\chain(v',y)$ is a $1$-chain, we can use an  identical argument to the case of $1$-chain to show that $E_2$ is empty. Similarly, since $\chain(v,v'')$ is a $1$-chain, we can use an identical argument to the case of $1$-chain to show that $E_3$ is empty. For $E_1$,  other  than those two exceptions, $E_1$ is contained inside $\N(v')\cup \N(v'')$ and therefore any point inside $E_1$ will also have an edge to $v'$ or $v''$ leading to a contradiction. Hence, except for the two cases, $E$ will not have any points of $P$ inside it. 
\end{proof}

\begin{description}
  \item[(P4)] For any vertex $v \in \vertseq(f)$, if an edge $xy$ on the boundary of $f$ is $\delta$-visible for $\delta=\frac{\rand3^{i-1}}{\sqrt{2}}$, then exactly one of $x$ and $y$ are in $\N(v)$ and the other is not.
\end{description}
\begin{proof}
  Let $C$ be the cell in grid $\grid{i}$ that contains $v$. Then let $J = \bigcup_{p \in C} \{z \in \reals^2: \|pz\| \leq \delta \}$ (see the red region in Figure~\ref{fig:cell-delta-ball}) be the Minkowski sum of cell $C$ with a ball of radius $\delta$. Since $xy$ is $\delta$-visible to $v$, there exists a point $q \in \overline{xy}$, such that $q$ is visible to $v$, and $q \in J$. Suppose, for the sake of contradiction, that (a) $x, y \in \N(v)$ or (b) $x, y \notin \N(v)$. We consider the cases separately.
  
  (a) $x, y \in \N(v)$. Without loss of generality, assume $v$, $x$, and $y$ appear in this order in $\vertseq(f)$ (a symmetric argument holds if the order is $y$, $x$, $v$). Since $q$ is visible to $v$ in $\vertseq(f)$, the line segment $\overline{vq}$ does not intersect any edges of $\candidate$.
  Starting at $q$, sweep $\overline{v\alpha(t)}$ along $xy$ towards $x$. Let $t^*$ be the smallest $t$ such that the sweep line $\overline{v\alpha(t^*)}$ intersects an input point $v^*$; if the sweep line intersects no input point, let $v^* = x$. Then $v^*$ is visible to $v$ so that $vv^* \in \candidate_i$. Similarly, sweep $\overline{vq}$ towards $y$, starting at $q$ and moving from $q$ along $xy$ towards $y$. Let $t'$ be the smallest $t$ such that this sweep line intersects a vertex $v'$, or if the sweep line intersects no vertex, set $v' = y$. Then $v$ and $v'$ are visible to each other, hence $vv' \in \candidate$. In addition, $v'$ and $v^*$ are in each other's neighborhoods so that $v'v^* \in \candidate$. However, this implies $v$, $v'$, and $v^*$ forms a triangulated face and $\overline{vq}$ is not contained in $f$, contradicting the fact that $q$ is visible to $v$. Therefore, this case is impossible.

\begin{figure}
  \centering
  \includegraphics[width=0.28\textwidth]{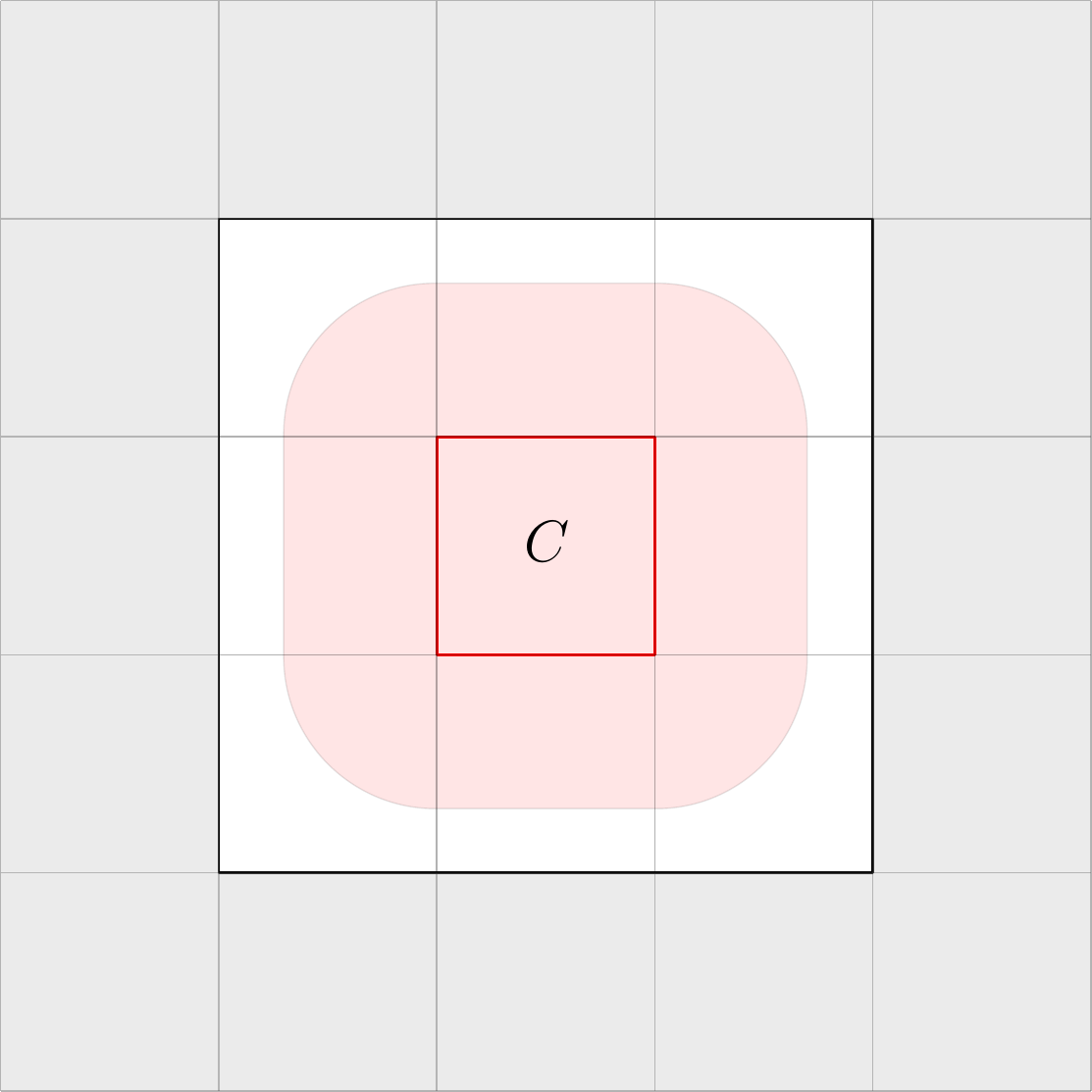}
  \caption[The set $J$ of $C$ (proof of (P4)).]{The set $J$ (depicted by the red shaded region), for the cell $C$.}\label{fig:cell-delta-ball}
\end{figure}
  
  (b) $x, y \notin \N(v)$. Note that since $xy \in \candidate \subset \adjgraph{i}$, $x$ and $y$ must be in neighboring cells. Furthermore, the cells containing $x$ and $y$ must be adjacent to $N(v)$ (the grey region in Figure~\ref{fig:cell-delta-ball}). It is not difficult to see that an edge between $x$ in $y$ in two neighboring grey cells in Figure~\ref{fig:cell-delta-ball} cannot intersect $J$. The result follows.
\end{proof}


\section{Extensions}
\label{sec:extension}
In this section, we extend the analysis of our algorithm to the $q$-MWT problem for $q \ge 2$. After that, we will also show that by making a small modification to our algorithm, we can improve the worst-case bound on the approximation ratio to $21$
and the bound on the expected approximation ratio to $14$. 
\paragraph{Extending to the $q$-MWT problem.} For the $q$-MWT problem, let $\opt_q$ be the optimal triangulation. We will show that the triangulation produced by our algorithm is also a worst-case $24$-approximation and a expected $16$-approximation of the optimal $q$-MWT.    As already shown, Invariant 1 holds for our algorithm. We can also show that a slightly modified Invariant 2 also holds.      

\noindent{\bf Modified Invariant 2:}
$|\hat{\candidate}_i| \geq |\opt_{q,i}|$, where $\opt_{q,i}$ is a restricted optimal triangulation containing only edges of $\opt_q$ whose length is less than $\frac{\rand 3^{i-1}}{\sqrt{2}}$.

 The proof of second invariant is based on the crucial properties established by Lemma~\ref{lem:delta-visibility}, Lemma~\ref{lem:intersection} and Corollary~\ref{cor:components}. All these Lemmas hold for any $\PSLG$ where the Euclidean distance between the end-points of its edges is no more than $\frac{\rand 3^{i-1}}{\sqrt{2}}$. This includes the $\PSLG$  $\opt_{q,i}$ and hence Lemma~\ref{lem:delta-visibility},~\ref{lem:intersection} and Corollary~\ref{cor:components}  is true for  $\opt_{q,i}$. Therefore, we can use the arguments identical to those in Section~\ref{sec:inv2} and prove the modified version of Invariant 2.

 The proof of approximation ratio is very similar to the proof of Theorem~\ref{thm:approx}. For the bottleneck triangulation problem, i.e., $q=\infty$, we can use arguments identical to  proof of Theorem~\ref{thm:approx}, and bound the ratio $\alpha_j$ of the $j^{th}$ edge of $\tau$ and $\tau^*$ by $\alpha_j \le 24$  and $\mathbb{E}[\alpha_j]\le 16$. We can invoke this bound for the last edge of $\tau$ and $\tau^*$ and bound the approximation ratio for the bottleneck triangulation.

For any finite integer  $q$, we can define the approximation ratio to be $$\alpha = w(\candidate)/ w(\opt_{q})=(w'(\candidate)/w'(\opt_q))^{1/q};$$ here, for any edge $e=uv$, $w'(e)=\|uv\|^q$ and for any triangulation $\triangulation$, $w'(\triangulation)=\sum_{uv \in \triangulation}\|uv\|^q$. Let $\alpha'_j = w'(a_j)/w'(t_j)$ and $\alpha' = w'(\candidate)/w'(\opt_q)$  . Similar to Equation~\ref{eq:approx}, we get

\begin{eqnarray}
\alpha' = \frac{\wt'(\candidate)}{\wt'(\opt)} &=&
\sum_{j=1}^m\frac{\wt'(t_j)}{\wt'(\opt)}\cdot \frac{\wt'(a_j)}{\wt'(t_j)}
= \sum_{j=1}^m\beta_j\cdot \frac{\wt'(a_j)}{\wt'(t_j)},
\label{eq:approx1}
\end{eqnarray}
where $\beta_j = \frac{\wt'(t_j)}{\wt'(\opt)}$. Since $\beta_j > 0$ and
$\sum_{j=1}^m \beta_j =1$,  $\alpha'$ is a weighted average of all
$\alpha'_j$ values. Using the invariants as in the proof of Theorem~\ref{thm:approx}, we can bound \[ \alpha'_j = \dfrac{\wt'(a_j)}{\wt'(t_j)} \leq
   \dfrac{(\rand 4\sqrt{2} \cdot 3^{k-1})^q}{\left(\dfrac{\rand3^{k-2}}{\sqrt{2}}\right)^q} = 24^q. \]  
Applying this bound to  (\ref{eq:approx1}), we can bound $\alpha'$ by $24^q$, i.e., $\alpha' \le 24^q$. Since the approximation ratio $\alpha = \alpha'^{1/q}$, we can bound the approximation ratio $\alpha$ by $24$ in the worst case. Similar to the proof of Theorem~\ref{thm:approx}, we can bound the expected value of $\alpha'_j$ by $16^q$ and using linearity of expectation, we can bound the expected value of $\alpha'$, $\mathbb{E}[\alpha']\le 16^q$. Since $f(x)=x^{1/q}$ is a concave function, from Jensen's inequality, we get $\mathbb{E}[\alpha]=\mathbb{E}[(\alpha')^{1/q}] \le (\mathbb{E}[\alpha'])^{1/q}$. Therefore, $\mathbb{E}[\alpha] \le 16$. 
\paragraph{Improving approximation ratio.} The approximation ratio can be improved to $21$ (and the expected
approximation ratio to $14$) by making the following simple
modification to our algorithm (cf. Section~\ref{sec:algorithm}) based
on the following observation. For any triangulated chain $\chain(v_{j+k},
v_p)$ generated by the algorithm, $\chain(v_{j+k}, v_p)$ is a
$2$-chain only if the algorithm executes step 1(a) or 1(c). Using the same notation as in the presentation of the algorithm, consider
the case where $\chain(v_{j+k}, v_p)$ is generated in 1(a) (so $v_p =
v_s$). We have shown in Lemma~\ref{lem:triangulated-chains-1-2-chains}
that edges in $\edges(v_{j+k}, v_s)$ that triangulate $E(v_{j+k},
v_s)$ are bounded by $4\sqrt{2}\cdot \rand 3^{i-1}$. However, by
construction $\chain(v_{j+k}, v_l)$ and $\chain(v_{l+1}, v_s)$ are
both $1$-chains, so if both chains could connect to their
support vertices, edges in $\edges(v_{j+k}, v_s)$ would be bounded by
$3\sqrt{2}\cdot \rand 3^{i-1}$. Since edges added for one chain hides
the support of the other, both sets of edges cannot be added without violating planarity, so there is a conflict. The algorithm resolves
this conflict by always connecting edges for $\chain(v_{j+k},
v_{l-2})$ to its forward support and adding longer edges (with endpoints $4$ cells apart) by adding edges from $\chain(v_{l+1}, v_s)$ to $v_{j+k}$. Equivalently, one can add the longer edges from
$\chain(v_{j+k}, v_{l-2})$ to $v_s$ and add edges from
$\chain(v_{l+1}, v_s)$ to its backward support $v_{j+k}$. By adding a
check in the algorithm to make the choice which results in the smaller weight for edges in $\edges(v_{j+k}, v_s)$, we can make the argument that the upper bound on
the average length of an edge added to $\hat{\candidate}_i$ is
$\frac{7}{2}\sqrt{2}\cdot \rand 3^{i-1}$. This helps improve the ratio to $21$ (or expected $14$).

\section{Conclusion}
\label{sec:conclusion}

We introduced a polynomial time approximation algorithm that computes a triangulation for approximating  $q$-MWT
for every $q\ge 1 $ including the case of minimum weight triangulation and the minimax length triangulation with a worst-case approximation ratio of $21$, and an expected
approximation ratio of $14$. This is achieved by partitioning edges
into levels using grids, and applying a combination of the ring heuristic and the greedy heuristic
at each level $i$ to obtain a partial candidate solution.

It is an open question whether the techniques developed here can be adapted in
order to design a Polynomial Time Approximation Scheme (PTAS) for the
MWT. Any such construction will maintain finer grids, multiple candidate
solutions at each grid level, and use dynamic programming to mimic the restricted optimal solution for each grid.

\bibliography{mwt_approx_focs2017}


\end{document}